\documentclass[aps,prl,reprint,superscriptaddress,amsmath,amssymb,longbibliography,floatfix]{revtex4-2}
\usepackage{amsmath,amssymb,amsthm}
\usepackage{mathtools}
\usepackage{graphicx,xcolor}
\usepackage{hyperref}
\newcommand{\Tr}{\operatorname{Tr}}
\newcommand{\norm}[1]{\left\lVert#1\right\rVert}
\newcommand{\ket}[1]{\lvert#1\rangle}
\newcommand{\bra}[1]{\langle#1\rvert}
\newcommand{\diam}{\operatorname{diam}}
\newtheorem{theorem}{Theorem}
\newtheorem{proposition}[theorem]{Proposition}
\newtheorem{lemma}[theorem]{Lemma}

\newsavebox{\statementcontent}
\newenvironment{statementbox}{%
 \par\smallskip\begin{lrbox}{\statementcontent}%
 \begin{minipage}{\columnwidth}\setlength{\topsep}{0pt}%
}{%
 \end{minipage}\end{lrbox}%
 \noindent\makebox[\columnwidth][c]{\fbox{\usebox{\statementcontent}}}%
 \par\smallskip
}
\hypersetup{pdftitle={Thermodynamic Stability of Classical Z2 Lattice Gauge Models},pdfauthor={Shunsuke Sengoku, Weiguang Cao, Haruki Watanabe},pdfsubject={General cell dimension at fixed spatial dimension; weak classical and quantum perturbations}}
\begin{document}
\title{Thermodynamic Stability of Classical \texorpdfstring{$\mathbb Z_2$}{Z2} Lattice Gauge Models}
\author{Shunsuke Sengoku}
\affiliation{Department of Physics, Hong Kong University of Science and Technology, Clear Water Bay, Hong Kong, China}

\author{Weiguang Cao}
\affiliation{Department of Physics, Hong Kong University of Science and Technology, Clear Water Bay, Hong Kong, China}
\affiliation{Center for Theoretical Condensed Matter Physics, Hong Kong University of Science and Technology, Clear Water Bay, Hong Kong, China}

\author{Haruki Watanabe}
\email{hwatanabe@ust.hk}
\affiliation{Department of Physics, Hong Kong University of Science and Technology, Clear Water Bay, Hong Kong, China}
\affiliation{Center for Theoretical Condensed Matter Physics, Hong Kong University of Science and Technology, Clear Water Bay, Hong Kong, China}
\affiliation{Institute for Advanced Study, Hong Kong University of Science and Technology, Clear Water Bay, Hong Kong, China}
\date{\today}
\begin{abstract}
Finite-temperature phases are traditionally studied through spontaneous symmetry breaking, but symmetry-breaking fields can remove their distinction, as in the Ising model. This motivates the search for stable nontrivial thermal phases without symmetry protection. For classical $\mathbb Z_2$ lattice gauge models with spins on $m$-dimensional cells in a fixed spatial dimension $d$, we prove joint analyticity of the thermodynamic free-energy density in temperature and perturbation strength at sufficiently low positive temperatures under sufficiently weak local perturbations, whenever $1\le m\le d-2$. Perturbations need not preserve internal or gauge symmetries and may include noncommuting quantum interactions. A separate high-temperature expansion yields an analytic region reaching infinite temperature at any fixed finite perturbation strength. These results establish thermodynamic stability and support the conjecture that low- and high-temperature regimes remain separated by a thermal transition without symmetry protection.
\end{abstract}
\maketitle

\emph{Introduction.---}
Spontaneous symmetry breaking is a central framework for finite-temperature phases and transitions~\cite{Stanley1971}. An external field that explicitly breaks the relevant symmetry can remove the phase distinction. For the ferromagnetic Ising model in $d\ge2$ dimensions, with exchange $J^{\rm I}>0$ and temperature $0<T<T_c^{\rm I}$ below its zero-field ordering transition, the thermodynamic free energy per site obeys
\begin{equation}
 f_{\rm I}(T,h)=f_{\rm I}(T,0)-m_{\rm sp}(T)|h|+o(|h|),
 \label{eq:ising-cusp}
\end{equation}
where $h$ is a longitudinal field and $m_{\rm sp}(T)>0$ is the spontaneous magnetization per site. The cusp gives a first-order transition at $h=0$: opposite infinitesimal fields select opposite ordered sectors. Although $f_{\rm I}(T,0)$ is analytic in $T$ throughout the ordered interval, it is not jointly analytic in $(T,h)$ near $h=0$~\cite{FriedliVelenik2017,Ott2020}.

At zero longitudinal field, the ordered and disordered regimes are separated by a thermal transition, whereas a nonzero field allows them to connect smoothly~\cite{LeeYang1952,FriedliVelenik2017,Ott2019}. This raises the question of whether a stable nontrivial phase can exist at finite temperature without symmetry protection, and remain separated from the high-temperature regime by a phase transition under arbitrary sufficiently weak local perturbations.

To address this question, we start from the plaquette-only 3D classical $\mathbb Z_2$ lattice gauge model~\cite{Wegner1971}, the toric code's magnetic sector~\cite{Kitaev2003}. Its field-free thermal states have been studied through topological entropy and classical memory~\cite{CastelnovoChamon2008,YangShiLee2026}. Weak-field robustness has been studied numerically and analytically~\cite{PoulinMelkoHastings2019,StahlEtAl2026}, gauge averaging addresses general classical perturbations~\cite{Thorngren2026}, and unperturbed low-temperature analyticity follows from cluster methods~\cite{ForsstromViklund2025}.

We place this example and the Ising model in one two-state cell family: spins occupy $m$-dimensional cells of a $d$-dimensional lattice, with one interaction per $(m+1)$-cell. The cases $m=0$ and $m=1$ are the Ising and plaquette models [Fig.~\ref{fig:cell-models}]. For $0\le m\le d-2$, the unperturbed low-temperature phase can be understood through spontaneous breaking of a $\mathbb Z_2$ $m$-form symmetry~\cite{Wegner1971,GaiottoEtAl2015}. We allow perturbations explicitly breaking this symmetry to ask whether stability survives without a microscopic symmetry distinction. The range $1\le m\le d-2$ combines two physical ingredients for stability without symmetry protection: locally indistinguishable flat sectors ($m\ge1$) and extended magnetic defects ($m\le d-2$).

Complementing classical memory-stability results, we prove joint free-energy analyticity in temperature and perturbation strength near zero perturbation at sufficiently low positive temperatures. This extends unperturbed analyticity to general local classical and noncommuting quantum perturbations throughout the geometric range above, without internal or gauge symmetry requirements. In contrast to the ordered Ising model, the free energy at sufficiently low positive temperature has no singularity at zero perturbation. We also establish a sufficient high-temperature analyticity domain. Together with the low-/high-temperature distinction under local quantum channels discussed at the end of this Letter, these results support a stable nontrivial thermal phase and motivate our conjecture of an intervening transition without symmetry protection. The detailed proofs are given in the Supplemental Material (SM)~\cite{SupplementalMaterial}.

Free-energy analyticity and local-channel phase distinction require different estimates.
Within the local-operation approach to mixed-state phases~\cite{CoserPerezGarcia2019,SangZouHsieh2024,MaKhemaniSang2026}, the companion paper~\cite{WatanabeFullPaper2026} constructs two-way local quantum channels and distinguishes low and high temperatures through a higher-cell extension of the winding invariant of Ref.~\cite{Watanabe2026}. Both analyses use gauge averaging, local fillings, weighted classical expansions, and an imaginary-time diagonal reduction. Their respective Supplemental Materials give the additional estimates needed for each result.

\emph{Spin models on $m$-cells.---}
A $j$-cell is a $j$-dimensional elementary cell of the cubic lattice: a vertex for $j=0$, a link for $j=1$, a plaquette for $j=2$, and a cube for $j=3$. On a periodic hypercubic lattice of integer spatial dimension $d\ge1$ and side $L$, choose $0\le m\le d-1$ and put one qubit on every $m$-cell $\ell$, with Pauli operators $Z_\ell$ and $X_\ell$, where $X_\ell$ flips the $Z_\ell$ eigenvalue. For each $(m+1)$-cell $p$, define
\begin{equation}
 B_p\coloneqq\prod_{\ell\in\partial p}Z_\ell,\qquad
 H_{d,m}\coloneqq-J\sum_pB_p,\qquad J>0.
 \label{eq:cell-model}
\end{equation}
For fixed $d,m$, we write $H_0\coloneqq H_{d,m}$. The boundary $\partial p$ consists of $2(m+1)$ physical $m$-cells. There are $N\coloneqq L^d$ unit cells, $\binom dmN$ qubits, and $\binom d{m+1}N$ interaction terms. Throughout, $d$ and $m$ are fixed as $L\to\infty$, and the lattice spacing and Boltzmann constant are one. We write $\beta\coloneqq1/T$ for inverse temperature.

Throughout this Letter, ``lattice gauge model'' is conventional terminology for a spin model on the full tensor-product Hilbert space. We impose no Gauss-law constraint and do not identify spin configurations related by gauge transformations. For $m\ge1$, $H_0$ has a local gauge symmetry, but perturbations need not preserve it.

\begin{figure}[t]
\centering
\includegraphics[width=216.33496pt,height=80.91190pt]{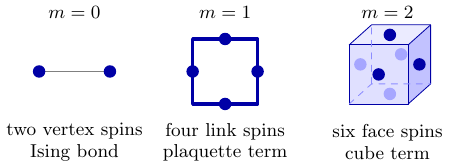}
\caption{Family of spin models defined on $m$-cells.
Each interaction multiplies the spins on the boundary of one cell:
two vertex spins for an Ising bond, four link spins for a plaquette,
and six face spins for a cube. Dots mark the centers of the spin-carrying
$m$-cells.
The same rule defines the model on $m$-cells in any spatial dimension $d\ge m+1$.
This model schematic is shared with Ref.~\cite{WatanabeFullPaper2026}.}
\label{fig:cell-models}
\end{figure}

For $m=0$, the boundary of a link is its two endpoints, so Eq.~\eqref{eq:cell-model} becomes the ferromagnetic Ising model:
\begin{equation}
 H_{d,0}=-J\sum_{\langle ij\rangle}Z_iZ_j.
 \label{eq:ising-model}
\end{equation}
Here $i,j$ label lattice vertices, and $\langle ij\rangle$ denotes nearest-neighbor pairs. No rescaling or additive constant is needed. For $m=1$, the same rule gives the four-link plaquette interaction
\begin{equation}
 H_{d,1}=-J\sum_p\prod_{\ell\in\partial p}Z_\ell.
 \label{eq:plaquette-model}
\end{equation}
This is the magnetic, or $Z$-plaquette, sector of the $d$-dimensional toric code, without electric star terms. The general family remains classical because all its $B_p$ commute and are diagonal in the product $Z$ basis. Figure~\ref{fig:cell-models} illustrates the interaction for vertex, link, and face spins.

Magnetic excitations are violations of the local interactions, whose signs obey an exact constraint. A classical configuration is specified by the spin signs $\sigma\coloneqq(\sigma_\ell)_\ell$, with $\sigma_\ell=\pm1$. The interaction value is $B_p(\sigma)\coloneqq\prod_{\ell\in\partial p}\sigma_\ell$. A configuration is flat when every $B_p(\sigma)=+1$; each violated interaction has $B_p(\sigma)=-1$ and costs $2J$.
For $m\le d-2$, every $(m+2)$-cell $q$, with boundary $(m+1)$-cells $\partial q$, satisfies the lattice Bianchi identity
\begin{equation}
 \prod_{p\in\partial q}B_p=I.
 \label{eq:bianchi}
\end{equation}
Here $I$ is the identity operator: each spin appears twice in the product and $Z_\ell^2=I$. Hence an even number of boundary interactions of each $q$ can be violated.

On the dual lattice, this constraint requires these excitations to form closed sets for $m\le d-2$. Represent each violated interaction on $p$ by the dual cell $p^\ast$ crossing it. This cell has dimension
\begin{equation}
 k\coloneqq d-m-1.
 \label{eq:defect-dimension}
\end{equation}
For $k\ge1$, each dual $(k-1)$-cell belongs to an even number of defect cells. A cell set's boundary consists of faces occurring an odd number of times, so these excitations have empty boundary and form closed $k$-dimensional sets: domain walls at $m=0$ and loops at $k=1$. At $m=d-1$, $k=0$ and defects are points; no $(m+2)$-cells impose this local constraint. Boundaries are taken in the appropriate physical or dual lattice. Cell sets add by symmetric difference: repeated cells cancel in pairs, and orientations are unnecessary for two-state spins.

This algebraic constraint remains exact under classical or quantum perturbations. Classical pyrochlore spin ice provides a useful comparison: its ice rule is energetic, and thermal monopoles can terminate defect strings, acting as dual ordering fields that round continuous transitions of the constrained model~\cite{WatanabeMotomeWatanabeSpinOne2026,WatanabeMotomeWatanabeSpinS2026}. The exact constraint excludes this endpoint mechanism in the present model. To prove stability, we also need local indistinguishability and bounds on the defect weights.

\begin{figure*}[t]
\centering
\includegraphics[width=537.42497pt,height=149.60681pt]{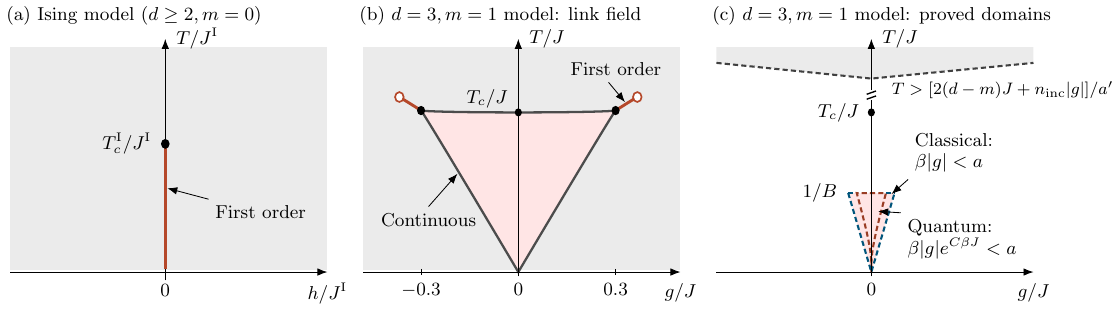}
\par\vspace{-5pt}
\caption{Schematic diagrams. (a) Ising model. (b) 3D plaquette model $(d=3,m=1)$ with a uniform link field.
In (a),(b), gray shading marks trivial regimes. Red shading in (b) marks the deconfined regime.
Solid red (dark gray) lines denote first-order (continuous) transitions; off-axis filled dots (open circles) denote multicritical points (critical endpoints).
(c) Low- (red) and high-temperature (gray) analytic domains of Theorems~\ref{thm:analyticity} and~\ref{thm:high-temperature}, restricted to $T>0$.
Dashed curves mark sufficient bounds, not phase boundaries. White regions are unclassified.}
\label{fig:domains}
\end{figure*}

\emph{Why both bounds on $m$ are needed.---}
The condition $k=d-m-1\ge1$ makes magnetic defects extended. A defect of $r$ connected dual $k$-cells costs $2Jr$, whereas the number of its possible shapes grows at most exponentially in $r$ for fixed $d,m$. At sufficiently low temperature, the energy penalty dominates this shape entropy. To control the coupling between defects and perturbations, we first work in a free box. This is a cubical region retaining every contained cell and only interaction terms supported wholly in the box, with all spins, including boundary spins, summed freely. In such a box, this positive defect dimension allows each component to be produced by a spin-flip pattern within a box whose diameter is proportional to its size.

The condition $m\ge1$ allows local gauge averaging while keeping flat sectors locally indistinguishable. The local $\mathbb Z_2$ gauge transformations are parametrized by signs $\eta_u$ on the $(m-1)$-cells and act as
\begin{equation}
 (\eta\sigma)_\ell\coloneqq\sigma_\ell\prod_{u\in\partial\ell}\eta_u.
 \label{eq:gauge-action}
\end{equation}
Here $\partial\ell$ consists of the $2m$ boundary $(m-1)$-cells of $\ell$. Every $B_p$, and hence $H_0$, is invariant. Averaging a perturbing Boltzmann weight over these local transformations removes terms with an uncanceled boundary. The surviving products form closed $m$-dimensional sets. In a free box these can be filled by $(m+1)$-cells, while on the torus, wrapping components need not bound.

Flat sectors are distinguished by Wilson products, namely spin products over closed $m$-dimensional surfaces wrapping around the torus. For $m\ge1$, a bounded contractible region cannot distinguish the uniform flat ensembles with different sector labels. At $m=0$, one spin distinguishes the all-up and all-down flat configurations, and no $(m-1)$-cell local transformation exists.

Consequently the stability theorem holds for
\begin{equation}
 m\ge1\;\text{and}\;d-m-1\ge1
 \;\Leftrightarrow\;1\le m\le d-2.
 \label{eq:stability-range}
\end{equation}
At $m=0$, a longitudinal field removes the Ising phase distinction [Eq.~\eqref{eq:ising-cusp} and Fig.~\ref{fig:domains}(a)]. At $m=d-1$, magnetic excitations are pointlike. The unperturbed thermodynamic free energy per unit cell $f$ is exactly
\begin{equation}
 f=-T[d\log2+\log\cosh(J/T)],
 \label{eq:point-defect-free-energy}
\end{equation}
which is analytic at every positive temperature. The Ising chain $(d,m)=(1,0)$ and the 2D plaquette model lie at this endpoint, whereas the 3D plaquette model lies in the stable range.

\emph{$m$-form symmetry.---}
Let $\Sigma$ be a closed set of dual $(d-m)$-cells and $\ell^\ast$ the dual cell crossing the physical $m$-cell $\ell$. The spin-flip operator
\begin{equation}
 U_\Sigma\coloneqq\prod_{\ell:\,\ell^\ast\in\Sigma}X_\ell
 \label{eq:form-symmetry}
\end{equation}
commutes with $H_0$: each $B_p$ meets an even number of flipped spins. Consider Wilson products on closed physical $m$-dimensional cell sets $C$,
\begin{equation}
 W_C\coloneqq\prod_{\ell\in C}Z_\ell.
 \label{eq:wilson-observable}
\end{equation}
Conjugation by $U_\Sigma$ changes the sign of $W_C$ exactly when $C$ and $\Sigma$ intersect an odd number of times. For $m\ge1$, if $\Sigma$ bounds a dual region, $U_\Sigma$ is a product of the local gauge transformations in Eq.~\eqref{eq:gauge-action}. For wrapping $\Sigma$, the operators $U_\Sigma$ generate the $m$-form symmetry, understood modulo those local transformations.

For $0\le m\le d-2$, the sufficiently low-temperature phase admits the interpretation of spontaneous breaking of the $m$-form symmetry generated by $U_\Sigma$ on wrapping $\Sigma$~\cite{Wegner1971,GaiottoEtAl2015}. For $m\ge1$, large contractible Wilson observables have a generalized perimeter law: the logarithm of their expectation scales with their $m$-dimensional support rather than the $(m+1)$-dimensional volume they enclose. Removing this local support-dependent decay is the usual higher-form order criterion. At $m=0$, this reduces to ordinary Ising order. For $H_0$, this symmetry breaking is defined in the thermodynamic limit, while its local gauge transformations remain unbroken~\cite{Elitzur1975}. The finite-volume canonical state remains symmetric, and charged winding observables have zero mean.

\emph{Perturbations and precise results.---}
We test stability under uniformly bounded local perturbations:
\begin{equation}
 H_g\coloneqq H_0+gV,\qquad V\coloneqq\sum_Xv_X.
 \label{eq:perturbed-model}
\end{equation}
Each dimensionless Hermitian $v_X$ has operator norm at most one and acts on a nonempty assigned set $X$ of at most $b$ physical $m$-cells and lattice diameter at most $R$. Each physical cell belongs to at most $n_{\mathrm{inc}}$ supports. These bounds are independent of $L$, and the operators are independent of temperature and $g$. Their fixed finite spatial period permits spatial averaging along compatible periodic sizes in the thermodynamic limit; it is not a protecting symmetry. Both $J$ and $g$ have units of energy.

At $\beta>0$, define the partition function and free energy per unit cell using the trace over all $2^{\binom dmN}$ states:
\begin{align}
 Z_L(\beta,g)&\coloneqq\Tr e^{-\beta H_g},\\
 f(\beta,g)&\coloneqq-\lim_{L\to\infty}\frac{\log Z_L(\beta,g)}{\beta N}.
 \label{eq:free-energy}
\end{align}

A classical perturbation is diagonal in the product $Z$ basis; quantum terms may be off-diagonal and need not commute with one another or with $H_0$. We require no invariance under either the local gauge transformations in Eq.~\eqref{eq:gauge-action} or the wrapping operators $U_\Sigma$. For example, $V\coloneqq-\sum_\ell Z_\ell$ explicitly breaks the wrapping transformations. The theorem therefore tests stability after explicitly breaking the microscopic symmetry that characterizes unperturbed order.

For $(d,m)=(3,1)$, this uniform link field gives the known phase diagram in Fig.~\ref{fig:domains}(b)~\cite{SomozaSernaNahum2021,BonatiPelissettoVicari2022}. Here $T_c$ is this 3D gauge model's zero-field transition temperature. The deconfined region survives a weak field, whereas the surrounding trivial regimes connect around the critical endpoints~\cite{FradkinShenker1979}. The following results extend weak-perturbation analyticity beyond this example.

\begin{statementbox}\begin{theorem}[Low-temperature stability]
\label{thm:analyticity}
For every fixed $d,m$ with $1\le m\le d-2$, there exist $B,a,C>0$, depending only on $d,m,b,R,n_{\mathrm{inc}}$, such that, for every allowed $V$, $f$ exists and is jointly real analytic in $(\beta,g)$ in the open domains
\begin{align}
 \beta J>B,\qquad\beta|g|<a
 &\quad\text{(classical)},\label{eq:classical-domain}\\
 \beta J>B,\qquad\beta|g|e^{C\beta J}<a
 &\quad\text{(quantum)}.\label{eq:quantum-domain}
\end{align}
\end{theorem}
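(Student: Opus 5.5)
The plan is to prove uniform convergence of a cluster (polymer) expansion for $\log Z_L(\beta,g)/N$ in a complex neighbourhood of each real point of the stated domains. The expansion will be organized so that every polymer activity is analytic in $(\beta,g)$ and bounded uniformly in $L$. Joint real analyticity of $f$ then follows from Vitali's theorem, since the limit of locally uniformly bounded analytic functions is analytic.

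\emph{Classical case.} Here $V$ is diagonal. I first write $e^{-\beta gV}=\prod_X\bigl(1+(e^{-\beta g v_X}-1)\bigr)$ and expand the product into subsets of supports. Each factor $e^{-\beta g v_X}-1$ is a diagonal function of the spins on $X$, of size $O(\beta|g|)$ uniformly for complex $g$ with $\beta|g|<a$. I expand it in $Z$-monomials $\prod_{\ell\in S}\sigma_\ell$ with $S\subseteq X$.

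Summing over the local gauge orbits of Eq.~\eqref{eq:gauge-action} kills every product of monomials whose total support $S_{\rm tot}$ has a nonempty boundary. The surviving $S_{\rm tot}$ are closed $m$-dimensional sets. Next I expand $e^{\beta J\sum_p B_p}$ in the standard low-temperature way: flat configurations plus defect excitations. By Eq.~\eqref{eq:bianchi}, each defect is a closed $k$-dimensional dual set with $k\ge1$. Each closed Wilson product $W_{S_{\rm tot}}$ is then expressed in the $B_p$ variables using a local filling by $(m+1)$-cells.

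Contractible components of $S_{\rm tot}$ can be filled inside a free box of diameter proportional to their size. Their weight becomes a product of $B_p$'s, which couples them only locally to defects. Wrapping components are treated differently. Each costs at least $L$ perturbation factors, so it contributes $O\bigl((C\beta|g|)^{L}\bigr)$ per sector and vanishes in the limit. The sector sum over flat classes contributes only a factor that is subexponential in $N$.

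This yields a polymer gas whose polymers are connected unions of defect cells (weight $\lesssim e^{-2\beta J r}$) and perturbation supports (weight $\lesssim(\beta|g|)^{s}$). For fixed $d,m,b,R,n_{\rm inc}$, the number of connected polymers of size $r+s$ containing a given cell grows at most like $c^{r+s}$. The Kotecký--Preiss criterion therefore holds once $e^{-2\beta J}c\ll1$ and $\beta|g|c\ll1$, which gives $B$ and $a$.

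\emph{Quantum case.} I would use a Dyson or Lie--Trotter expansion in imaginary time, $e^{-\beta H_g}=\sum_n\int_{0<t_1<\dots<t_n<\beta}(-g)^n e^{-(\beta-t_n)H_0}v_{X_n}\cdots v_{X_1}e^{-t_1H_0}$, and expand each $v_X$ in Pauli strings $X^{A}Z^{S}$. The trace forces the product of the $X$-parts to be the identity, so the flip sets must cancel; this is the imaginary-time diagonal reduction. Between insertions, each intermediate configuration differs from the diagonal one by the accumulated flips, which create defects lasting a time $\Delta t$ at energy cost up to $2J\cdot O(b)$. The difficulty is that these temporary defects are not suppressed when $\Delta t$ is short, while the relative Boltzmann factor $e^{-\beta H_0}$ can also enhance the amplitude by at most $e^{O(b)\beta J}$ per insertion cluster.

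I would bound each space--time cluster by $(\beta|g|e^{C\beta J})^{n}$. This is the source of the worse quantum domain $\beta|g|e^{C\beta J}<a$. After the reduction, the remaining diagonal part is again gauge averaged as in the classical case, and the space--time polymers (spatial support together with time intervals, with the integrals over $t$ giving factors $\beta$) satisfy the same convergence criterion.

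The main obstacle is the classical step of turning gauge-averaged closed perturbation supports into local objects. The filling must be local, with diameter linear in support size; it must be compatible with free-box defect counting; and wrapping components together with the flat-sector sum must be shown negligible uniformly in complex $g$. I would handle this first by a free-box estimate with boundary spins summed freely, and only then pass to the torus using the exponential smallness of wrapping contributions.
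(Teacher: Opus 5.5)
\textbf{Gap: the defect--perturbation coupling and its placement count.} The gap is in the step where you merge defects and filled perturbation components into ``connected unions'' and assert a $c^{r+s}$ count. After filling, a perturbation component with closed chain $C$ and filling $S_C$ couples to a flux component $P_{\rm f}$ only through a crossing sign: $\chi=-1$ exactly when $P_{\rm f}$ crosses an odd number of cells of $S_C$. This is a linking parity, not a local interaction. For $(d,m)=(3,1)$, a flux loop can thread a large Wilson loop while staying far from every link of $C$. So spatially connected unions of defect cells and perturbation supports miss interacting pairs.

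Suppose you instead attach a defect whenever it meets the filling, which is what ``couples them only locally'' suggests. The placement count then scales with the filling, which has order $j^2$ cells or more for a chain of cost $j$. Unit defects crossing the filling an even number of times become attached. Clusters of total cost $n$ are then superexponentially many: place $\sim j$ unit defects among $\sim j^2$ filling positions. The Kotecký--Preiss criterion then fails, because its right-hand side $a(P)$ can grow at most linearly in cost; otherwise $e^{a(Q)}$ overwhelms the activity bound $u^{c(Q)}$.

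The missing idea is the reciprocal Stokes identity $\chi(P_{\rm e},P_{\rm f})=\prod_{\ell\in C_{P_{\rm e}}}(F_{P_{\rm f}})_\ell$. Here $F_{P_{\rm f}}$ is the flux's own spin-flip realization, which the flux-realization part of the local-filling lemma places within $O(r)$ of $P_{\rm f}$. A negative sign therefore forces a root of $P_{\rm f}$ to lie within $O(r)$ of one of the at most $bj$ cells of $C$. Symmetrically, for a fixed flux of size $r$, a cost-$j$ perturbation component with negative sign has its root within $O(j)$ of one of the $r$ flux cells. This gives $Cj(r+1)^d$ and $Cr(j+1)^d$ placements. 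Both are linear in the fixed object's cost and polynomial in the varying one's, and the exponential weights absorb the polynomial factor.

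With this count your scheme can be repaired by merging only along negative signs, which gives a genuine hard-core gas. The paper instead keeps two species with pair factor $h=\chi-1\in\{0,-1,-2\}$ and controls the connected series by the Penrose tree identity, which applies because $|1+h|\le1$.

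\textbf{Two lesser points.} First, ``the same convergence criterion'' in your quantum step hides the same issue, since the diagonal reduction does not return a finite-range weight. The paper takes $\log\mathcal W_{\beta,g}$ through a hard-core gas of Duhamel histories on enlarged supports. This yields effective couplings with $\sup_\ell\sum_{\alpha:\ell\in Y_\alpha}|k_\alpha|e^{S\nu_\alpha}\le C_S\beta|g|e^{C\beta J}$ and carrier diameter proportional to the generating cost $\nu_\alpha$. It then reruns the gauge-averaged gas with cost weights, so that the placement factors $(j+1)^d$ are absorbed by $e^{-(A-1-s)j}$.

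Second, you need no wrapping or sector analysis. Exponential smallness of wrapping terms would not by itself compare the two boundary conditions anyway: they also differ by seam terms, and on the torus the fluxes must be coboundaries, not merely closed. At real parameters, apply $|\log\Tr e^{-\beta H}-\log\Tr e^{-\beta H'}|\le\beta\|H-H'\|$ to the torus Hamiltonian and the free-box one padded with spectator qubits. This bounds the difference of $\log$ partition functions by $O(\beta(J+|g|)L^{d-1})+O(L^{d-1})$. That identifies the periodic free energy with the free-box limit, whose analyticity follows from locally uniform holomorphic convergence, and your Vitali argument suffices for that.
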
\end{statementbox}
At each sufficiently low positive temperature, Theorem~\ref{thm:analyticity} excludes an Ising-type cusp at zero perturbation. Joint analyticity gives a convergent power series in $(\beta,g)$ near every interior point, equivalently in $(T,g)$ for $T>0$. These sufficient domains do not cover the entire unperturbed low-temperature phase. The perturbation strength they permit vanishes as $T\to0$, so they imply no zero-temperature stability.

\begin{statementbox}\begin{theorem}[High-temperature analyticity]
\label{thm:high-temperature}
For every fixed $d,m$ with $0\le m\le d-1$, there exists $a'>0$, depending only on $m,b$, such that, for every allowed $V$, $f$ exists and is jointly real analytic in $(\beta,g)$ in the open domain
\begin{equation}
 \beta[2(d-m)J+n_{\mathrm{inc}}|g|]<a'.
 \label{eq:high-temperature-domain}
\end{equation}
This holds for both classical and quantum perturbations and needs no smallness of $|g|/J$.
\end{theorem}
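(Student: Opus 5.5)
We prove Theorem~\ref{thm:high-temperature} by a polymer (cluster) expansion of the full Hamiltonian treated as a single perturbation of the trivial infinite-temperature state. The plan is to rewrite $Z_L/2^{|\Lambda|}$, with $|\Lambda|=\binom dm N$ the number of qubits, as the partition function of a hard-core gas of connected polymers. The Kotecký--Preiss criterion then gives uniform convergence of $\log Z_L/N$ to an analytic limit. We treat the classical and quantum cases uniformly, since no commutativity is used.

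Write $H_g=\sum_{A}h_A$. The index $A$ runs over the interaction cells $p$, with $h_p=-JB_p$, and over the supports $X$, with $h_X=gv_X$. Each $h_A$ acts on a set $S_A$ of at most $s\coloneqq\max(2(m+1),b)$ qubits. Expand $e^{-\beta H_g}$ in a Dyson/Duhamel series, or equivalently in its Taylor series $\sum_n(-\beta)^n H_g^n/n!$. This gives words $h_{A_1}\cdots h_{A_n}$, and we group them by the multiset of terms used. Taking the normalized trace $2^{-|\Lambda|}\Tr$ factorizes over the connected components of the support graph, where two terms are adjacent if their supports share a qubit. This works because the normalized trace is a product state and operators on disjoint qubits commute and trace independently. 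Ordering issues between different components disappear through the standard multinomial regrouping.

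The result is $Z_L/2^{|\Lambda|}=\sum\prod_\gamma w(\gamma)$, summed over families of pairwise qubit-disjoint connected polymers. Each activity is bounded by
\begin{equation}
|w(\gamma)|\le\sum_{n\ge|\gamma|}\frac{\beta^n}{n!}\sum_{\text{words covering }\gamma}\prod_i\norm{h_{A_i}}.
\end{equation}
Here a polymer $\gamma$ is a connected multiset of terms, and $|\gamma|$ counts its distinct terms. One may equivalently exponentiate $\beta\norm{h}$ per distinct term, giving $w\lesssim\prod_{A\in\gamma}(e^{\beta\norm{h_A}}-1)$.

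Next we count. Each qubit lies in exactly $2(d-m)$ cells $p$, namely the $(m+1)$-cells containing an $m$-cell, and in at most $n_{\mathrm{inc}}$ supports $X$. The total norm of terms touching a given qubit is therefore at most
\begin{equation}
\epsilon\coloneqq2(d-m)J+n_{\mathrm{inc}}|g|.
\end{equation}
We then check the Kotecký--Preiss condition
\begin{equation}
\sum_{\gamma\ni\text{qubit }q}|w(\gamma)|e^{c|S_\gamma|}\le c.
\end{equation}
A tree-graph counting argument bounds the sum over connected polymers rooted at $q$: each added term touches one of at most $s\,|\gamma|$ qubits already present, and the total weight per qubit is $\le\beta\epsilon$. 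The sum is then bounded by $\sum_n(e\,s\,\beta\epsilon\,e^{cs})^n$. This sum is small once $\beta\epsilon<a'$ with $a'$ depending only on $s$, and hence only on $m,b$. That matches the dependence claimed, since $d$ and $n_{\mathrm{inc}}$ enter only through $\epsilon$.

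Convergence yields
\begin{equation}
\log Z_L=|\Lambda|\log2+\sum_{\text{clusters}}\phi^T\prod w,
\end{equation}
with absolutely convergent cluster sums per qubit. Each $w(\gamma)$ is entire in $(\beta,g)$. The bounds hold uniformly on a complex neighborhood $\{|\beta|\epsilon(|g|)<a'\}$, where $\beta$ and $g$ are complexified and $h_A$ is bounded in norm. Hence $\log Z_L/N$ is a uniformly bounded analytic family. Clusters not wrapping the torus are translation-covariant up to the fixed period of $V$. Wrapping clusters have size $\gtrsim L$ and are exponentially suppressed, so the limit exists and equals a convergent sum of local cluster contributions. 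Vitali's theorem then gives analyticity of the limit, and dividing by $\beta$ keeps analyticity for $\beta>0$. The limit is taken along compatible periodic sizes.

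The main obstacle is the quantum case. There the trace of a product of noncommuting terms does not factorize naively, and the Taylor words contain repeated terms. The obstacle is making the polymer representation exact with activity bounds that depend only on norms. The first approach I would try is the Duhamel expansion with the normalized trace. A cleaner alternative, which I would fall back on, is Ueltschi's or Park's quantum cluster expansion: expand $e^{-\beta H}$ via $\prod_A$ of "activated" terms, with the identity ordering handled by the time-ordered integral over $[0,\beta]^n$. This gives exactly the factorization over qubit-disjoint connected sets with $|w(\gamma)|\le\prod_{A\in\gamma}(e^{\beta\norm{h_A}}-1)$, after which the counting above applies verbatim.
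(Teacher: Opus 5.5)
Your proposal is correct and follows essentially the paper's route: a Taylor expansion of the normalized trace, with words grouped by connected sets of distinct labels so that the interleaving count cancels $1/r!$ (which is why noncommuting terms cause no trouble); then the activity bound $\prod_{A\in\gamma}(e^{|\beta|\norm{h_A}}-1)$, the per-cell energy bound $2(d-m)J+n_{\mathrm{inc}}|g|$, a rooted bound on connected label sets, and a bulk limit from stabilized non-wrapping clusters plus a uniform tail, all on the complex domain $|\beta|\,[2(d-m)J+n_{\mathrm{inc}}|g|]<a'$. The differences are cosmetic (the paper merges coincident supports into one $h_Y$ and uses its own tree lemma with cost $|P|$ and reserve $e^{|P|}$, giving $a'=1/(2e^4b')$), but two points need tightening: the exponential suppression of wrapping and large clusters needs Koteck\'y--Preiss with an extra cost reserve (available by shrinking $a'$), and the rooted-set bound should come from a parent--child tree recursion, since sequential attachment to the $s|\gamma|$ qubits already present gives only a factorial bound.
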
\end{statementbox}
Any fixed finite $g$ satisfies Eq.~\eqref{eq:high-temperature-domain} at sufficiently high temperature. Its explicit $d$ dependence counts the $2(d-m)$ interaction cells meeting each physical $m$-cell; $a'$ itself is independent of $d$. Figure~\ref{fig:domains}(c) shows these sufficient domains for $(d,m)=(3,1)$, with $2(d-m)J=4J$. For the uniform link field, the low-temperature domains lie well inside the deconfined region in (b); the intervening white region remains unclassified.

\emph{Classical proof strategy.---}
First, we average the zero-energy gauge variables while keeping the costly magnetic defects explicit. For classical $V$, taking the diagonal parts of the local terms leaves $V$ unchanged and preserves the locality bounds. We work in a free box so that closed perturbation chains admit local fillings. The superscript $\mathrm o$ denotes restriction to this box. Because $H_0^{\mathrm o}$ is gauge invariant, changing summation variables gives the exact identity
\begin{equation}
 Z_L^{\mathrm o}=\sum_\sigma e^{-\beta H_0^{\mathrm o}(\sigma)}
       \mathbb E_\eta e^{-\beta gV^{\mathrm o}(\eta\sigma)},
 \label{eq:gauge-average}
\end{equation}
where $\mathbb E_\eta$ uniformly averages the independent signs in Eq.~\eqref{eq:gauge-action}. The average acts on the Boltzmann weight, not on $V$~\cite{FoersterNielsenNinomiya1980,Testa1997,Thorngren2026}.

To express the gauge-averaged weight as a component sum, expand the diagonal interactions in sign products and combine equal nonempty supports. Write the resulting terms as $c_X\sigma_X$, where $c_X$ is the combined real coefficient and $\sigma_X\coloneqq\prod_{\ell\in X}\sigma_\ell$, and keep the scalar term separately. After extracting the $\cosh$ factors, expanding the remaining product assigns $\tanh(-\beta gc_X)$ to each selected term. Gauge averaging keeps precisely those selections in which every gauge sign occurs an even number of times.

To organize the surviving selections, join selected terms whose sets of uncanceled gauge signs overlap, after cancellations within each term. Separately, join flux cells through shared non-boundary dual $(k-1)$-faces. A perturbation component carries the product of its $\tanh$ factors, and a flux component with $r$ dual $k$-cells carries $e^{-2\beta Jr}$. Their coupling is a sign determined by crossings between the flux and a local filling of the closed perturbation chain. These connected components give an exact sum, with the scalar and $\cosh$ prefactors retained in the free energy.

To establish analyticity, expand the logarithm in connected clusters and bound their total absolute weight per unit cell uniformly in $L$~\cite{KoteckyPreiss1986,JacksonProcacciSokal2014,Ueltschi2004}. There are two counts to control: the number of component shapes and the number of relative positions that give a negative crossing sign. Component shapes grow exponentially in number, but the crossing sign increases the placement count only polynomially in the varying component's size and linearly in the fixed one's. Local fillings for both types control the placement count in both directions. Small $\beta|g|$ and $e^{-2\beta J}$ make the component weights decay fast enough to overcome both counts. Convergence is uniform on complex neighborhoods, proving analyticity of the exact free energy rather than only a formal series. A surface-order comparison of the original Hamiltonians identifies the open and periodic free energies at real parameters. The full construction, normalization, and convergence proof are in Sec.~I of the SM~\cite{SupplementalMaterial}.

\emph{Quantum reduction.---}
We use imaginary-time evolution to express the quantum partition function as a classical sum. For a product $Z$ eigenvector $|\sigma\rangle$, define the diagonal weight ratio
\begin{equation}
 \mathcal W_{\beta,g}(\sigma)\coloneqq
 e^{\beta H_0^{\mathrm o}(\sigma)}
 \langle\sigma|e^{-\beta H_g^{\mathrm o}}|\sigma\rangle.
 \label{eq:diagonal-weight-ratio}
\end{equation}
For each retained local term, its imaginary-time evolution is
\begin{equation}
 v_X(u)\coloneqq e^{u\beta H_0^{\mathrm o}}v_Xe^{-u\beta H_0^{\mathrm o}},
 \label{eq:imaginary-time-term}
\end{equation}
where $u\in[0,1]$ is dimensionless imaginary time. The exact interaction-picture identity is
\begin{equation}
 \mathcal W_{\beta,g}(\sigma)=
 \langle\sigma|\mathcal T\exp\left[-\beta g\int_0^1du\sum_Xv_X(u)\right]|\sigma\rangle,
 \label{eq:duhamel}
\end{equation}
Here $\mathcal T$ places larger $u$ to the left. Taking the trace gives an exact classical sum with effective perturbing energy $-\beta^{-1}\log\mathcal W_{\beta,g}(\sigma)$, which plays the role of the classical $gV^{\mathrm o}(\sigma)$.

The narrower sufficient quantum domain reflects imaginary-time norm growth. Since the $B_p$ commute, each evolved term stays within its enlarged support: $X$ together with the supports of all $B_p$ meeting $X$. Its norm grows by at most $e^{C\beta J}$ at real $\beta$. Join insertion labels with overlapping enlarged supports. Histories in disjoint groups factor in $|\sigma\rangle$. Expanding the ordered exponential and taking its connected logarithm gives a convergent local expansion of $\log\mathcal W$ for small $\beta|g|e^{C\beta J}$.

The resulting classical interaction has exponentially summable tails. Their decay absorbs the same polynomial placement factors as above, and the classical convergence argument applies with weighted bounds. The expansion retains all time orderings and scalar contributions, without truncating in $g$ or assuming a spectral gap. Section~II of the SM proves this reduction~\cite{SupplementalMaterial}; related quantum expansions appear in Refs.~\cite{BorgsKoteckyUeltschi1996,KoteckyUeltschi1999,FrohlichReyBelletUeltschi2001}. If $[H_0,V]=0$ on every sufficiently large compatible torus, Sec.~II D of the SM obtains the larger classical-type domain without the imaginary-time growth factor.

\emph{High-temperature expansion.---}
At high temperature, the connected expansion is controlled by inverse temperature times the interaction energy touching each spin, without using gauge structure. Write $H_g=\sum_Yh_Y$, combining equal supports. Each $h_Y$ acts on at most $\max\{2(m+1),b\}$ spins, and
\begin{equation}
 \sup_\ell\sum_{Y:\ell\in Y}\|h_Y\|
 \le 2(d-m)J+n_{\mathrm{inc}}|g|.
 \label{eq:local-energy}
\end{equation}
Expand $e^{-\beta H_g}$ about $\beta=0$. The normalized trace $2^{-\binom dmN}\Tr$ factors on disjoint supports, so the logarithm of the normalized partition function is a sum of connected clusters, even for noncommuting interactions. Small inverse temperature times this local energy bound ensures convergence and gives Eq.~\eqref{eq:high-temperature-domain} (Sec.~III of the SM)~\cite{SupplementalMaterial,HarrowMehrabanSoleimanifar2020}. The dimensionless free energy $\beta f$ is analytic through $\beta=0$, while $f$ contains the usual entropy term $-T\binom dm\log2$.

Both analyticity results extend to independent finite-dimensional onsite ancillary systems with bounded spectral widths and weak local classical or quantum couplings, assuming a common fixed spatial period for the joint-model data (Sec.~IV of the SM)~\cite{SupplementalMaterial}.

\emph{Scope and outlook.---}
Extended defects suppress large thermal fluctuations, while local gauge averaging turns arbitrary local symmetry-breaking weights into closed components with controlled defect couplings. Together these mechanisms give an analytic neighborhood of zero perturbation at every sufficiently low positive temperature for each fixed $d,m$ with $1\le m\le d-2$. This result covers the whole family; Fig.~\ref{fig:domains}(b) shows only the 3D uniform-link-field example.

Thermodynamic analyticity alone does not classify thermal states. The companion analysis~\cite{WatanabeFullPaper2026} distinguishes controlled low- and high-temperature regimes in this family through local quantum channels and nonlocal information. These state distinctions motivate, but do not prove, the following conjecture.

\begin{statementbox}
\noindent\textit{Conjecture (thermal phase separation).---}
For every fixed $d,m$ with $1\le m\le d-2$ and every allowed $V$, any continuous $(T,g)$ path with $T>0$ from the stable low-temperature domain to the high-temperature domain crosses a thermodynamic nonanalyticity of $f$ at finite temperature, even when both $T$ and $g$ vary.
\end{statementbox}
The conjecture is stronger than either analyticity theorem: our proof does not establish a singularity along every intervening path. Neither the theorems nor the conjecture determines the number, location, or order of transitions. These require further analysis for each cell model.

\label{last-core-page}
\begin{acknowledgments}
We thank Hal Tasaki for helpful discussions.
OpenAI Codex was used under the authors' direction to assist with drafting, organizing and checking derivations, and preparing schematic figures.
\end{acknowledgments}
\emph{Data availability.---}No datasets were generated or analyzed in this theoretical study.
%


\clearpage
\onecolumngrid
\setcounter{secnumdepth}{4}
\setcounter{section}{0}
\setcounter{subsection}{0}
\setcounter{subsubsection}{0}
\setcounter{equation}{0}
\setcounter{figure}{0}
\setcounter{table}{0}
\setcounter{theorem}{0}
\renewcommand{\theequation}{S\arabic{equation}}
\renewcommand{\thefigure}{S\arabic{figure}}
\renewcommand{\thetable}{S\arabic{table}}
\renewcommand{\thetheorem}{S\arabic{theorem}}
\providecommand{\theHsection}{}
\providecommand{\theHsubsection}{}
\providecommand{\theHsubsubsection}{}
\providecommand{\theHequation}{}
\providecommand{\theHfigure}{}
\providecommand{\theHtable}{}
\providecommand{\theHtheorem}{}
\renewcommand{\theHsection}{SM.\arabic{section}}
\renewcommand{\theHsubsection}{SM.\arabic{section}.\arabic{subsection}}
\renewcommand{\theHsubsubsection}{SM.\arabic{section}.\arabic{subsection}.\arabic{subsubsection}}
\renewcommand{\theHequation}{SM.\arabic{equation}}
\renewcommand{\theHfigure}{SM.\arabic{figure}}
\renewcommand{\theHtable}{SM.\arabic{table}}
\renewcommand{\theHtheorem}{SM.\arabic{theorem}}
\phantomsection
\pdfbookmark[0]{Supplemental Material}{supplement-start}
\begin{center}
{\large\bfseries Supplemental Material:\\[3pt]
Thermodynamic Stability of Classical $\mathbb Z_2$ Lattice Gauge Models\par}
\medskip
Shunsuke Sengoku,$^{1}$ Weiguang Cao,$^{1,2}$ and Haruki Watanabe$^{1,2,3}$\par
\smallskip
{\small\itshape
$^1$Department of Physics, Hong Kong University of Science and Technology, Clear Water Bay, Hong Kong, China\par
$^2$Center for Theoretical Condensed Matter Physics, Hong Kong University of Science and Technology, Clear Water Bay, Hong Kong, China\par
$^3$Institute for Advanced Study, Hong Kong University of Science and Technology, Clear Water Bay, Hong Kong, China\par}
\smallskip
\today
\end{center}
\medskip

\noindent This Supplemental Material contains complete proofs of the
low- and high-temperature analyticity results and can be read independently
of the Letter. Section~\ref{sm:cl:section} defines the model and locality
assumptions, derives the exact gauge-averaged partition function, proves
cluster convergence and its analytic bulk limit, and returns from an
auxiliary open box to periodic boundaries. Section~\ref{sm:sec:quantum}
derives the local imaginary-time representation for noncommuting
perturbations, applies the weighted classical expansion, and gives the
stronger analyticity domain when the total perturbation commutes with $H_0$.
Section~\ref{sm:ht:section} proves the high-temperature result by an exact
normalized-trace expansion. Section~\ref{sm:anc:section} extends both
analyticity results to independent onsite ancillary systems with weak local
couplings, retaining their full free-energy contribution.

Throughout, $d,m$ are fixed. The model family
has $0\le m\le d-1$, the low-temperature proof requires
$1\le m\le d-2$, and the high-temperature proof applies to the full family.
Physical $m$-cells are denoted by $\ell$, interaction $(m+1)$-cells by $p$,
and $H_0\coloneqq H_{d,m}$. Every partition function uses the full spin
Hilbert space, with no Gauss-law projection. Perturbations need preserve
neither local gauge transformations nor the wrapping $m$-form symmetry.
The notation agrees with the Letter, and all definitions and proof steps
needed below are given here.
\section{Classical proof}
\label{sm:cl:section}\label{sm:sec:pressure}

\subsection{Model, conventions, and low-temperature statement}
\label{sm:cl:model}
Fix an integer spatial dimension $d\ge1$ and cell dimension $m$ with
$0\le m\le d-1$. The low-temperature theorem below uses the subrange
$1\le m\le d-2$; the high-temperature theorem applies to the full family.
A $j$-cell is a $j$-dimensional elementary cell of
the cubic lattice: a vertex for $j=0$, a link for $j=1$, a plaquette for
$j=2$, and a cube for $j=3$. On the periodic lattice of side
$L$ there are $N\coloneqq L^d$ unit cells and $\binom djN$ cells of
dimension $j$. Place a qubit on every $m$-cell $\ell$, with Pauli
operators $Z_\ell$ and $X_\ell$. For each
$(m+1)$-cell $p$, its boundary $\partial p$ contains $2(m+1)$
$m$-cells. The generalized gauge Hamiltonian~\cite{SM-Wegner1971} is
\begin{equation}
 B_p\coloneqq\prod_{\ell\in\partial p}Z_\ell,\qquad
 H_{d,m}\coloneqq-J\sum_pB_p.\label{sm:cl:model-plaquettes}
\end{equation}
For fixed $d,m$, write $H_0\coloneqq H_{d,m}$, and introduce the perturbation by
\begin{equation}
 V\coloneqq\sum_Xv_X,\qquad
 H_g\coloneqq H_0+gV.\label{sm:cl:model-hamiltonian}
\end{equation}
The family has
$\binom dmN$ qubits and $\binom d{m+1}N$ interaction terms.
At $m=0$ this is the nearest-neighbor ferromagnetic Ising model, with no
rescaling or additive constant. At $m=1$ it is the four-link plaquette
model, the magnetic sector of the toric code; no electric star term is
included. All unperturbed interactions commute and are diagonal in the
product $Z$ basis.

Here $J>0$ and $g$ have units of energy. Every dimensionless Hermitian
$v_X$ has operator norm at most one and assigned nonempty support $X$ of at most
$b$ physical $m$-cells and lattice diameter at most $R$. Each physical
cell belongs to at most $n_{\mathrm{inc}}$ assigned supports. The bounds $b,R,n_{\mathrm{inc}}$,
the operators, and their finite spatial period are fixed independently of
$L,\beta,g$. Take $L$ through compatible periodic sizes. These are
locality and thermodynamic-limit assumptions. The lattice spacing and
Boltzmann constant are one.
Constants in estimates may change from line to line and depend on
$d,m,b,R,n_{\mathrm{inc}}$ and explicitly chosen exponential reserves, never on volume.

At $\beta\coloneqq1/T>0$, define
\begin{align}
 Z_L(\beta,g)&\coloneqq\Tr e^{-\beta H_g},\label{sm:cl:model-partition}\\
 f(\beta,g)&\coloneqq-\lim_{L\to\infty}\frac{\log Z_L(\beta,g)}{\beta N}.
 \label{sm:cl:model-free-energy}
\end{align}
We use ``gauge model'' or ``gauge theory'' as conventional terminology for
this unconstrained spin model. The trace runs over the full
$2^{\binom dmN}$-dimensional spin Hilbert space, with no Gauss-law
projection, no identification of spin configurations related by local
gauge transformations, and no restriction to a winding sector. The proof
uses the local gauge symmetry of $H_0$ for $m\ge1$, but imposes no
local gauge or internal symmetry on $V$, including no invariance under
the wrapping $m$-form transformations defined below. All free energies are per unit cell,
not per qubit.

The term ``classical'' means that the total perturbation $V$ is diagonal
in the product $Z$ basis. We may then choose diagonal local terms:
replacing each original $v_X$ by its diagonal part leaves their sum equal
to $V$ and does not increase their norms or supports. The quantum proof
retains the original operators.

\begin{theorem}[Low-temperature analyticity]
\label{sm:thm:low-temperature}
For every fixed $d,m$ with $1\le m\le d-2$, there exist $B,a,C>0$,
depending only on $d,m,b,R,n_{\mathrm{inc}}$, such that, for every allowed $V$,
$f$ exists and is jointly real analytic in $(\beta,g)$ in the open domains
\begin{align}
 \beta J>B,\qquad\beta|g|<a
 &\quad\text{(classical)},\label{sm:cl:classical-domain}\\
 \beta J>B,\qquad\beta|g|e^{C\beta J}<a
 &\quad\text{(quantum)}.\label{sm:cl:quantum-domain}
\end{align}
\end{theorem}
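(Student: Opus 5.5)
We will prove the classical case first in an auxiliary free box $\Lambda$ of side $L$. We then pass to the torus, and finally reduce the quantum case to a weighted classical one.

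\textbf{Exact representation.} In the box we use the gauge-averaged identity \eqref{eq:gauge-average}. Each diagonal local term is expanded in sign products $c_X\sigma_X$, with $|c_X|\le n$ bounded by locality. Writing $e^{-\beta g c_X\sigma_X}=\cosh(\beta g c_X)[1-\sigma_X\tanh(\beta g c_X)]$ and expanding the product, the gauge average $\mathbb E_\eta$ keeps only selections whose uncanceled gauge signs pair up. These are selections whose total support is a closed $m$-chain. Such a chain bounds an $(m+1)$-chain $F$ inside the free box, by acyclicity of the box. We will choose $F$ with $|F|\le c\,|\text{chain}|$ and $\operatorname{diam}F\le c\operatorname{diam}$, using a cone or filling construction valid in cubical boxes. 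Then $\sigma_X$-products equal $\prod_{p\in F}B_p(\sigma)$. Summing over $\sigma$ in terms of the flux configuration $\{p:B_p=-1\}$ is legitimate because the map from $\sigma$ modulo gauge onto closed dual $k$-chains is surjective with constant fibres in the box. This writes $Z^{\mathrm o}_L/(\text{scalar}\cdot\cosh\text{ factors}\cdot e^{\beta J|\{p\}|}\cdot\#\text{fibre})$ as a polymer gas. The polymers are perturbation components $\gamma$, with weight $\prod\tanh$ and $|w(\gamma)|\le(C\beta|g|)^{|\gamma|}$, and flux components $\phi$, which are closed dual $k$-chains with weight $e^{-2\beta J|\phi|}$. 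Both kinds have hard-core exclusion only among their own type. The two types interact through the pairwise sign $(-1)^{|F_\gamma\cap\phi|}$.

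\textbf{Cluster expansion.} The pairwise sign interaction is not hard-core, so we rewrite $(-1)^{\langle\gamma,\phi\rangle}=1+[(-1)^{\langle\gamma,\phi\rangle}-1]$. The bracket is nonzero only if $\phi$ actually crosses $F_\gamma$. Expanding this product and grouping connected pieces yields generalized polymers. Each is a connected graph of $\gamma$'s and $\phi$'s in which the edges are overlaps or negative crossings. This becomes an abstract polymer system with hard-core incompatibility, to which Kotecký--Preiss applies. We verify $\sum_{\Gamma\ni x}|w(\Gamma)|e^{a|\Gamma|}\le a$ by counting in two steps. Connected shapes of size $r$ containing a given cell number at most $K^r$. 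The number of placements of a flux $\phi$ that crosses a fixed filling $F_\gamma$ is at most $|F_\gamma|\cdot|\phi|\le c|\gamma||\phi|$, and symmetrically. Here the property $k\ge1$ is essential: a closed flux of size $r$ is connected with diameter $\le r$, and this is what makes the placements polynomial. These polynomial factors are absorbed by choosing $e^{-2\beta J}K e^{2a}$ and $C\beta|g|Ke^{2a}$ small, which gives the constants $B$ and $a$. The bounds are uniform for complex $(\beta,g)$ in a neighborhood of the real domain. Hence $\frac1N\log Z^{\mathrm o}_L$ converges uniformly; the spatial period is used to average. The limit is therefore analytic by Vitali. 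The prefactors, namely the $\cosh$ terms, $\log 2$ counts, and scalar term, are explicitly analytic.

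\textbf{Boundary conditions and the quantum case.} For real parameters we have $\|H_g^{\text{torus}}-H_g^{\mathrm o}\oplus\cdots\|\le C L^{d-1}(J+|g|)$ after dropping the boundary terms. Golden--Thompson-free trace inequalities of the form $|\log\Tr e^{A}-\log\Tr e^{B}|\le\|A-B\|$ then give the same free energy, and analyticity transfers. In the quantum case we use \eqref{eq:duhamel} and expand the time-ordered exponential. Each $v_X(u)$ is supported on an enlarged set with norm $\le e^{C\beta J}$ for complex $\beta$ near real values. A Dyson-series cluster expansion with factorization on disjoint enlarged supports then gives $\log\mathcal W=\sum_Y\Phi_Y(\sigma)$. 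The terms are diagonal local functions with $\sum_{Y\ni x}|\Phi_Y|e^{\kappa\operatorname{diam}Y}\le C\beta|g|e^{C\beta J}$. Rerunning the classical argument with these non-finite-range but exponentially summable terms, with the filling size controlled by $\operatorname{diam}Y$, yields \eqref{eq:quantum-domain}. \textbf{The main obstacle} we expect is the coupling step. It requires constructing fillings with linear size control, and verifying that the sign-interaction expansion produces a genuinely hard-core polymer system whose combinatorics stay exponential. In particular, a single large flux may cross many perturbation components, and the counting must exploit that each crossing costs a factor $\beta|g|$.
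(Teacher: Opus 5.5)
Your outline (free box, gauge-averaged weight, $\tanh$ expansion, constant flux fibres, two-species gas with crossing signs, real-parameter boundary comparison, Duhamel reduction) follows the paper's. However, the coupling step you flag as the main obstacle has two genuine gaps.

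\emph{First gap: the filling-size bound.} The bound $|F|\le c\,|\text{chain}|$ is false. Take a planar square link loop of length $j$ in $d=3$. Every vertical dual line through its interior must cross any filling an odd number of times, so every filling has at least $j^2/16$ plaquettes. Only the \emph{diameter} of a filling is linear, and that is all the paper's filling lemma claims.

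Consequently your placement count $|F_\gamma|\,|\phi|$ for fluxes crossing a fixed component is superlinear in that component's cost $j$. Any Koteck\'y--Preiss or tree criterion then fails for large $\gamma$: summing the children of a $\gamma$-node produces $\exp(\epsilon|F_\gamma|)$, which $u_{\rm e}^{|\gamma|}$ cannot absorb.

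The missing idea is the reciprocal Stokes identity $\chi(\gamma,\phi)=\prod_{\ell\in C_\gamma}(F_\phi)_\ell$. A negative sign then forces the \emph{varying} flux's own local realization, of diameter $O(r)$, to touch one of the at most $bj$ cells of $C_\gamma$. This gives $Cj(r+1)^d$ placements, linear in $j$. For a fixed $\phi$, you instead use the varying component's own filling and get $Cr(j+1)^d$. This in turn requires local realizations for flux components that end on the free boundary. These are relative rather than closed cycles, your description omits them, and the paper needs a boundary-completion argument for them.

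\emph{Second gap: the connected-graph sum.} Converting the signs into hard-core generalized polymers does not by itself keep the combinatorics exponential. A generalized polymer has activity $\prod w\cdot\sum_{G\ \mathrm{connected}}\prod_{e\in G}(\chi_e-1)$, with $|\chi_e-1|=2$ on every negative crossing. Contrary to your closing remark, crossings are not paid for individually.

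Here is an example with $d=3$, $m=1$, $V=-\sum_\ell Z_\ell$. Nest and stack $s^2$ vertex-disjoint square link loops of length $O(s)$ around a column. Let $s^2$ disjoint dual loops of length $O(s)$ run up the column and return outside all the squares. Every pair links, so the total cost is $O(s^3)$, but there are $s^4$ negative edges. In an absolute-value bound, the complete-graph term alone contributes $2^{s^4}u^{O(s^3)}$, which diverges.

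What you need is the Penrose tree-graph identity, valid because $|1+h|=|\chi|\le1$. It bounds the connected sum by spanning-tree products of $|h|$, so only tree edges require placement counts. This is exactly the paper's tree-estimate lemma, which handles the mixed signs directly without a hard-core reduction. With it and the reciprocal counts, your route becomes the paper's.

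The quantum step is sound at the level of your sketch, but its weighted rerun inherits both fixes. Its cost must also control $|Y|$ and not only $\operatorname{diam}Y$, so that charge-adjacency counts stay linear.
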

Analyticity is equivalent in $(T,g)$ for $T>0$. This section proves the
classical statement, and Sec.~\ref{sm:sec:quantum} proves the quantum
statement. The inequalities specify sufficient domains of analyticity
and do not locate phase boundaries.
The high-temperature theorem in Sec.~\ref{sm:ht:section} also applies
throughout $0\le m\le d-1$. These sufficient low-temperature domains
do not include fixed nonzero $g$ as $T\to0$ and imply no zero-temperature
stability statement.

To establish real analyticity after the thermodynamic limit, we construct
a locally uniform holomorphic limit on complex neighborhoods of the real
parameters. Three separate steps are necessary: identify the exact
finite-volume partition function, prove convergence of its connected
logarithm, and pass to the bulk limit on a common complex neighborhood.

\subsection{Symmetry and scope of the low-temperature proof}
\label{sm:cl:symmetry-scope}

For $m\ge1$, the proof uses the local gauge invariance of $H_0$
under Eq.~\eqref{sm:cl:gauge-action}. On the torus, a closed set $\Sigma$
of dual $(d-m)$-cells also defines
$U_\Sigma\coloneqq\prod_{\ell:\ell^\ast\in\Sigma}X_\ell$,
where $\ell^\ast$ is the dual cell crossing $\ell$.
Closed means that every boundary face occurs an even number of times.
If $\Sigma$ bounds
a dual region, this is a product of local gauge transformations.
Wrapping $\Sigma$ generate the $m$-form symmetry modulo those local
transformations~\cite{SM-GaiottoEtAl2015}. The symmetry-breaking interpretation in the Letter
concerns these wrapping operators. The local gauge transformations
remain unbroken~\cite{SM-Elitzur1975}. No invariance of $V$ under either kind is required.

The two restrictions have different roles: $m\ge1$ permits local gauge
averaging and locally indistinguishable flat sectors, whereas
$k\coloneqq d-m-1\ge1$ gives extended magnetic defects with local
fillings. At the endpoint $m=d-1$, expanding each factor as
$e^{\beta J B_p}=\cosh(\beta J)[1+B_p\tanh(\beta J)]$ leaves only
the empty and full sets of $d$-cells after the spin sum: these are the only
sets with empty boundary on the connected torus. Hence
\begin{equation}
 Z_L(\beta,0)=2^{dN}[\cosh(\beta J)]^N
                 [1+\tanh(\beta J)^N].
 \label{sm:cl:point-defect-partition}
\end{equation}
Taking its logarithm per unit cell gives the endpoint free energy quoted
in the Letter. From the next subsection through the quantum proof we assume
$1\le m\le d-2$. The high-temperature proof needs neither restriction.

\subsection{The free box, local fillings, and exact flux multiplicity}
\label{sm:cl:open-box}
We first work in a free box, where closed physical chains have no winding
obstruction. The local filling lemma below treats both interior fluxes
and fluxes that reach the boundary. We then count the spin configurations
realizing each allowed flux. Section~\ref{sm:cl:boundary-comparison}
identifies the resulting bulk free energy with that of the periodic system. The local filling and defect geometry used below are
shared with the companion channel construction~\cite{SM-WatanabeFullPaper2026}.
The companion uses local fillings to implement defect removal and insertion.

Use the cubical box $\Lambda_L$ with vertices
$\{0,\ldots,L-1\}^d$, retaining every contained cell and summing all
physical $m$-cell signs freely, including on the boundary. Its cell counts are
\begin{equation}
 N_j^{\mathrm o}\coloneqq\binom dj(L-1)^jL^{d-j},\qquad
 N\coloneqq L^d,\qquad N_p\coloneqq N_{m+1}^{\mathrm o}.
 \label{sm:cl:open-counts}
\end{equation}
Retain precisely those original perturbing terms whose supports lie in
this box, giving $V^{\mathrm o}$. Write
\begin{equation}
 H_0^{\mathrm o}\coloneqq-J\sum_{p\subset\Lambda_L}B_p,\quad
 H_g^{\mathrm o}\coloneqq H_0^{\mathrm o}+gV^{\mathrm o},\quad
 Z_L^{\mathrm o}\coloneqq\Tr e^{-\beta H_g^{\mathrm o}}.
 \label{sm:cl:open-hamiltonian}
\end{equation}

A configuration is $\sigma\coloneqq(\sigma_\ell)_\ell$, with $\sigma_\ell=\pm1$;
for an $m$-cell set $Y$, put $\sigma_Y\coloneqq\prod_{\ell\in Y}\sigma_\ell$.
An empty product is one. A set of cells is also called a chain. Addition
of chains always means symmetric difference: repeated cells cancel in
pairs. Its boundary consists of faces present an odd number of times.
Thus ``modulo two'' below requires only parity, not orientations.
The empty chain is denoted by $\varnothing$ or $0$.

Independent signs $\eta_u$ on the $(m-1)$-cells $u$ (vertices when $m=1$) act by
\begin{equation}
 (\eta\sigma)_\ell\coloneqq\sigma_\ell\prod_{u\in\partial\ell}\eta_u.
 \label{sm:cl:gauge-action}
\end{equation}
Here $\partial\ell$ is the set of $2m$ boundary $(m-1)$-cells of $\ell$.
Each $(m-1)$-cell occurs twice in the boundary of a boundary, so every
$B_p$ is unchanged. Each choice of $\eta$ permutes all configurations.
Different choices can implement the same transformation. This does not
affect the averaging identity: each choice separately gives a bijective
change of summation variables, so we may average uniformly over all
$2^{N_{m-1}^{\mathrm o}}$ choices.
This local transformation exists because $m\ge1$.

A negative interaction $B_p=-1$ is represented by the dual cell crossing
$p$. Its dimension is
\begin{equation}
 k\coloneqq d-m-1\ge1.
\end{equation}
The dual cellulation is clipped at the physical boundary. The relative
boundary $\partial_{\rm rel}$ is the ordinary boundary with all cells lying
wholly on the physical boundary omitted; these boundary cells are auxiliary.
The occupied dual $k$-cells form the flux $E(\sigma)$. For every retained
$(m+2)$-cell $t$,
\begin{equation}
 \prod_{p\in\partial t}B_p=1.
 \label{sm:cl:cube-rule}
\end{equation}
Equivalently, every non-boundary dual $(k-1)$-face meets an even number
of occupied $k$-cells. There is no parity constraint on a face lying on
the physical boundary.

Flux components are connected through shared non-boundary $(k-1)$-faces.
Distinct components cannot cancel a boundary at such a face, so each
component satisfies the parity rule separately. For $k=1$ this
is the familiar even-degree graph with paths allowed to end at the
boundary. Let $\mathfrak P_L^{\mathrm o}$ be all sets satisfying this rule.

The lemma gives two constructions used later in the partition sum.
A physical filling rewrites a closed spin product as a product of
interaction signs $B_p$. A dual filling specifies which spins to flip
to create a prescribed flux.
For link spins in three dimensions, sweep a closed physical loop across
plaquettes until it disappears [Fig.~\ref{sm:fig:physical-sweeps}]. The
recorded plaquettes form a surface bounded by the loop. To create a dual
flux loop, fill it with a dual surface and flip the physical links
crossing that surface [Fig.~\ref{sm:fig:boundary-flux-realization}].
Starting from all spins $+1$, only plaquettes crossed by the prescribed
flux become negative.

A flux path ending at the open boundary is first
closed by an auxiliary path on that boundary, which adds no physical
flux. If the padded box containing the flux meets opposite physical
faces, we use the whole boundary for this closure. Then $L=O(r)$, where
$r$ counts the occupied flux cells, so the larger filling still fits in
a box of diameter $O(r)$ [Fig.~\ref{sm:fig:boundary-geometric-cases}].
The lemma bounds filling diameters, not the number of filling cells.

\begin{lemma}[Local fillings, including boundary components]
\label{sm:lem:open-fillings}
\textup{(i) Physical fillings.}
For fixed $c_0>0$ and integer scale $j\ge1$, a physical $m$-chain $C$ with
$\partial C=\varnothing$, contained in a box of diameter at most $c_0j$,
bounds an $(m+1)$-chain $S_C$ in a box of diameter at most $C_1j$ inside
$\Lambda_L$:
\begin{equation}
 \partial S_C=C.
 \label{sm:cl:physical-filling}
\end{equation}
\textup{(ii) Flux realization.}
A connected allowed dual flux component $P_{\rm f}$ of $r\ge1$ occupied
$k$-cells is realized by a physical sign pattern $F_{P_{\rm f}}$ whose
negative $m$-cells lie in a box of diameter at most $C_2r$:
\begin{equation}
 B_p(F_{P_{\rm f}})=
 \begin{cases}-1,&p\text{ is crossed by }P_{\rm f},\\
 +1,&\text{otherwise}.
 \end{cases}
 \label{sm:cl:open-fillings}
\end{equation}
The constants are independent of $L,j,r$.
\end{lemma}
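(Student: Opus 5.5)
Both parts reduce to the contractibility of a cubical box, made quantitative by a cone (sweep) construction with explicit diameter control.

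\emph{Physical fillings (i).} Let $Q$ be the smallest lattice box in $\Lambda_L$ containing $C$, of diameter at most $c_0j$. Enlarge it by a fixed margin and intersect with $\Lambda_L$; a box intersected with a box is again a box $Q'$ of diameter at most $C_1j$. Such a box has trivial mod-two cellular homology in degree $m\ge1$, since it is a cubical complex homeomorphic to a ball. Hence $C$, being an $m$-cycle in $Q'$, is a boundary there.

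To make this explicit and avoid any homology black box, I will use the standard sweep (chain homotopy) along one coordinate direction at a time.
\begin{enumerate}
\item Let $\pi_1$ translate cells in direction $e_1$ toward the face $x_1=\min$ of $Q'$. The prism operator $P$ records the $(m+1)$-cells swept out, and satisfies $\partial P+P\partial=\mathrm{id}+\pi_1$ on chains in $Q'$ (mod two). Since $\partial C=0$, this gives $C=\partial(PC)+\pi_1 C$.
\item Here $\pi_1C$ lies in the face $x_1=\min$, which is a $(d-1)$-dimensional box. Iterate over the coordinates $e_2,\dots,e_d$.
\item After all $d$ projections the remaining chain lies at a single vertex. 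An $m$-chain there with $m\ge1$ is zero.
\end{enumerate}
Summing the prisms gives $S_C$ with $\partial S_C=C$. All cells stay in $Q'$, so the diameter bound follows and $C_1$ depends only on $c_0$ and $d$. The same argument applies verbatim on the dual side.

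\emph{Flux realization (ii).} A connected component $P_{\rm f}$ of $r$ dual $k$-cells has diameter at most $r\sqrt d$, so it lies in a dual box $D$ of diameter $O(r)$. Its relative boundary is empty, but its ordinary boundary $\partial P_{\rm f}$ may consist of $(k-1)$-faces on the physical boundary of $\Lambda_L$. I first close it by an auxiliary $k$-chain $A$ lying wholly on the boundary, with $\partial A=\partial P_{\rm f}$. Since $A$ lies on the boundary, it adds no physical flux.

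\begin{itemize}
\item \emph{Case (a): $D$, padded, meets at most one face per coordinate direction.} The relevant part of $\partial\Lambda_L$ near $D$ is a contractible union of faces (a corner region). I can cone $\partial P_{\rm f}$ within the padded face region of diameter $O(r)$, using the sweep above restricted to boundary faces.
\item \emph{Case (b): the padded box meets opposite faces.} Then $L=O(r)$, and I use the whole boundary. This is legitimate because $\partial P_{\rm f}$, being a boundary in $\Lambda_L$, is a boundary in $\partial\Lambda_L\cong S^{d-1}$ whenever $k-1<d-1$. The connecting map gives this, since $k-1=d-m-2<d-1$.
\end{itemize}

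In either case $P_{\rm f}+A$ is a closed dual $k$-cycle in a box of diameter $O(r)$. By the dual version of (i), it bounds a dual $(k+1)=(d-m)$-chain $\Sigma$ with $\partial\Sigma=P_{\rm f}+A$, inside a box of diameter $C_2r$.

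Now define $F_{P_{\rm f}}$ by $\sigma_\ell=-1$ exactly when $\ell^\ast\in\Sigma$. Then $B_p(F)=(-1)^{\#\{\ell\in\partial p:\ell^\ast\in\Sigma\}}$. By duality of incidence, this count equals the multiplicity of $p^\ast$ in $\partial\Sigma$ mod two, restricted to non-boundary $p^\ast$. Retained $p$ have dual cells that are not wholly on the boundary, so the contributions from $A$ drop out. Hence $B_p=-1$ exactly on $p$ crossed by $P_{\rm f}$. The negative cells lie in the support of $\Sigma$, which has diameter at most $C_2r$.

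\emph{Main obstacle.} I expect the hardest step to be the boundary bookkeeping in (ii). One must check that the clipped dual cellulation satisfies the incidence duality near the physical boundary, i.e.\ that $\partial_{\rm rel}$ corresponds exactly to $\partial$ of the physical cochain. One must also check that the case split (a)/(b) yields a uniform constant. I would handle this by treating the clipped dual complex as the cubical complex of the box $[-\tfrac12,L-\tfrac12]^d\cap[0,L-1]^d$, whose boundary faces are exactly the auxiliary ones.
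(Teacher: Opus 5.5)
Your proposal is correct and follows the paper's proof essentially step for step. In (i), your prism identity is the paper's sweep relation $\partial T_i=C^{(i-1)}+C^{(i)}$. In (ii), you likewise close $\partial P_{\rm f}$ by an auxiliary chain on the physical boundary, use the same adjacent-faces versus opposite-faces split (the latter forcing $L=O(r)$), fill $P_{\rm f}+A$ by a dual sweep, and flip the physical cells dual to that filling, with the same relative-boundary bookkeeping.

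The only deviation is in the opposite-faces case. There you use injectivity of $H_{k-1}(\partial\Lambda_L)\to H_{k-1}(\Lambda_L)$, which follows from $H_k(\Lambda_L,\partial\Lambda_L)=0$ for $k<d$, instead of the paper's explicit face-by-face induction; this is a valid shortcut. One small omission: for $k=1$ in the corner case, note (as the paper does) that $\partial P_{\rm f}$ has an even number of endpoints, so its final projection to a vertex cancels.
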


Here $F_{P_{\rm f}}$ is a complete assignment of signs to physical
$m$-cells. Its value $(F_{P_{\rm f}})_\ell=-1$ means that cell $\ell$
is flipped relative to the all-positive configuration.
The quantity $B_p(F_{P_{\rm f}})$ is the product of these signs around
$p$. Part (ii) therefore prescribes the negative interactions produced
by this spin-flip pattern.

\begin{figure}[htbp]
 \centering

 \includegraphics[width=510.00719pt,height=194.16084pt]{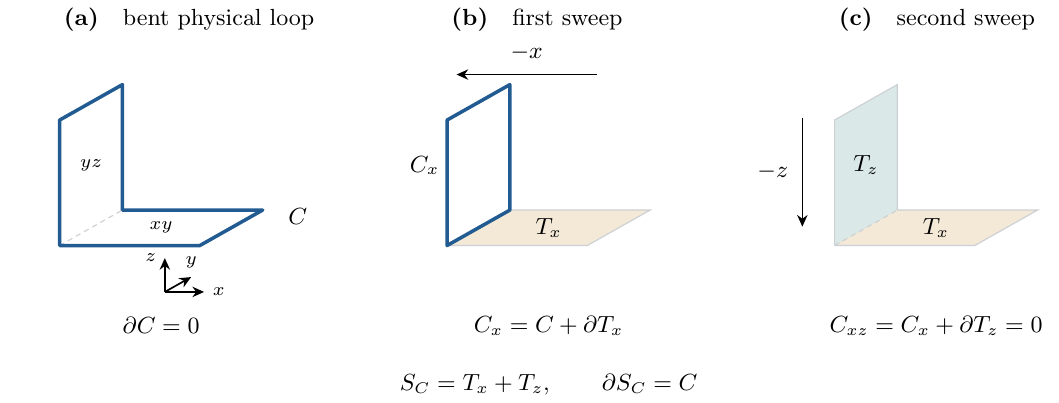}
 \caption{\textbf{Physical filling by coordinate sweeps} ($d=3$, $m=1$).
 Blue lines show the remaining chain; $T_x,T_z$ are the recorded
 plaquette sets, shown by shading.
 (a)~A bent six-link loop $C$.
 (b)~The $-x$ sweep records $T_x$, leaving $C_x=C+\partial T_x$.
 (c)~The $-z$ sweep records $T_z$ and removes $C_x$, so
 $\partial(T_x+T_z)=C$. Gray lines indicate lattice edges;
 all additions are modulo two.}
 \label{sm:fig:physical-sweeps}
\end{figure}

\begin{proof}
\textup{(i) Physical fillings: sweep and record.}
In a rectangular cubical box, sweep a chain in one coordinate to the
lowest face, recording the next-dimensional cells crossed. The boundary
of the recorded chain consists of the original chain, its projection
onto that face, and the pieces swept out by its boundary. For a closed
chain the last pieces cancel modulo two. Write $C^{(0)}\coloneqq C$, let
$C^{(i)}$ be its successive projection after the first $i$ coordinate sweeps,
and let $T_i$ be the $(m+1)$-chain recorded in the $i$th sweep. Then
\begin{equation}
 \partial T_i=C^{(i-1)}+C^{(i)},\qquad i=1,\ldots,d.
 \label{sm:cl:sweep-step}
\end{equation}
After all coordinates, the image lies at a single vertex and hence
$C^{(d)}=\varnothing$ because $m\ge1$. Adding the recorded chains therefore gives
\begin{align}
 S_C&\coloneqq T_1+\cdots+T_d,\label{sm:cl:sweep-filling}\\
 \partial S_C&=(C^{(0)}+C^{(1)})+\cdots+(C^{(d-1)}+C^{(d)})=C.
 \label{sm:cl:sweep-cancellation}
\end{align}
All intermediate projections cancel in pairs. Figure~\ref{sm:fig:physical-sweeps}
uses the coordinate order $x,z,y$: the first two sweeps record
plaquettes and the last records none.
Every recorded cell stays inside the original coordinate box, proving
(i). The same construction fills a closed chain of any dimension
$1,\ldots,d-1$.

\smallskip
\textup{(ii) Flux realization: close at the boundary, then fill.}
Connectedness of $P_{\rm f}$ gives a containing coordinate box of diameter
$O(r)$: adjoining one cell across a shared face increases the distance
from the first cell by at most a fixed amount. Pad this box by a fixed
number of lattice spacings and clip it at the physical boundary, obtaining
$D$. Define $\partial_{\rm phys}D\coloneqq D\cap\partial\Lambda_L$;
artificial faces of $D$ are not included. Since
$\partial_{\rm rel}P_{\rm f}=0$, the ordinary boundary
$\partial P_{\rm f}$ lies in $\partial_{\rm phys}D$. We will construct an
auxiliary dual $k$-chain $Q$ there satisfying
\begin{equation}
 Q\subset\partial_{\rm phys}D,\qquad \partial Q=\partial P_{\rm f}.
 \label{sm:cl:boundary-completion}
\end{equation}
For an already closed flux, take $Q=\varnothing$.

\smallskip
\textup{Case 1: no opposite physical faces.}
If $D$ meets no physical face, $P_{\rm f}$ is already closed. Otherwise
$\partial_{\rm phys}D$ is a union of adjacent faces with a common
intersection. Coordinate sweeps within these faces contract that union
to a vertex [Fig.~\ref{sm:fig:boundary-geometric-cases}(a)]. For $k>1$,
$\partial P_{\rm f}$ is a closed positive-dimensional chain and these
sweeps supply $Q$. For $k=1$, its endpoints occur in even number because they are the
boundary of an edge set. Pair them by paths in the connected physical
boundary part. The union modulo two of those paths supplies $Q$.

\smallskip
\textup{Case 2: opposite physical faces.}
If the padded box meets opposite physical faces, its $O(r)$ diameter
implies $L=O(r)$. Enlarge $D$ to the full box, so
$\partial_{\rm phys}D=\partial D$ without losing the $O(r)$ diameter
bound. This avoids disconnected boundary patches of the smaller box
[Fig.~\ref{sm:fig:boundary-geometric-cases}(b,c)].

The chain $\partial P_{\rm f}$ fills within $\partial D$ for $1\le k<d$.
For $k=1$, pair its even number of endpoints along the connected boundary.
For $k>1$, split it into cells on one chosen face and the remaining cells.
The first part has its boundary on the rim of that face. By the same
argument one dimension lower, with even endpoint pairing as the base
case, this boundary fills on the rim. Joining this rim filling to the
first part gives a closed chain on the face, which a coordinate sweep
fills there. Adding the boundary of that filling moves the original
chain off the interior of the chosen face. The remaining faces can be
swept onto the opposite face and then to a vertex, producing the rest
of $Q$. This induction uses $k-1<d-1$ and therefore still works at $k=d-1$.
The theorem's stronger restriction $k\le d-2$ comes separately from
$m\ge1$, which permits local gauge averaging and makes the physical
chains in (i) positive-dimensional.

\smallskip
\textup{Complete the flux realization.}
In both cases, Eq.~\eqref{sm:cl:boundary-completion} gives
\begin{equation}
 \partial(P_{\rm f}+Q)=\partial P_{\rm f}+\partial Q=\varnothing.
 \label{sm:cl:completed-flux}
\end{equation}
The coordinate sweep now supplies a dual $(k+1)$-chain $T$ in $D$ with
\begin{align}
 \partial T&=P_{\rm f}+Q,\label{sm:cl:ordinary-filling}\\
 \partial_{\rm rel}T&=P_{\rm f}.
 \label{sm:cl:relative-filling}
\end{align}
The second equality omits $Q$, whose cells lie on the physical boundary.
Discard filling cells wholly on that boundary and flip the physical
$m$-cells crossed by the remaining cells of $T$. A single flipped cell
$\ell$ changes precisely the interactions $p$ containing it, whose dual
cells form $\partial_{\rm rel}\ell^\ast$. Multiple flips cancel in pairs,
so the resulting negative interactions are exactly those dual to
$\partial_{\rm rel}T=P_{\rm f}$
[Fig.~\ref{sm:fig:boundary-flux-realization}]. Auxiliary boundary cells
add no physical flux and never enter connectivity or cost. All flipped
cells remain within $D$ up to fixed padding, proving (ii).
\end{proof}

\begin{figure}[!hbp]
 \centering

 \includegraphics[width=510.00967pt,height=328.27411pt]{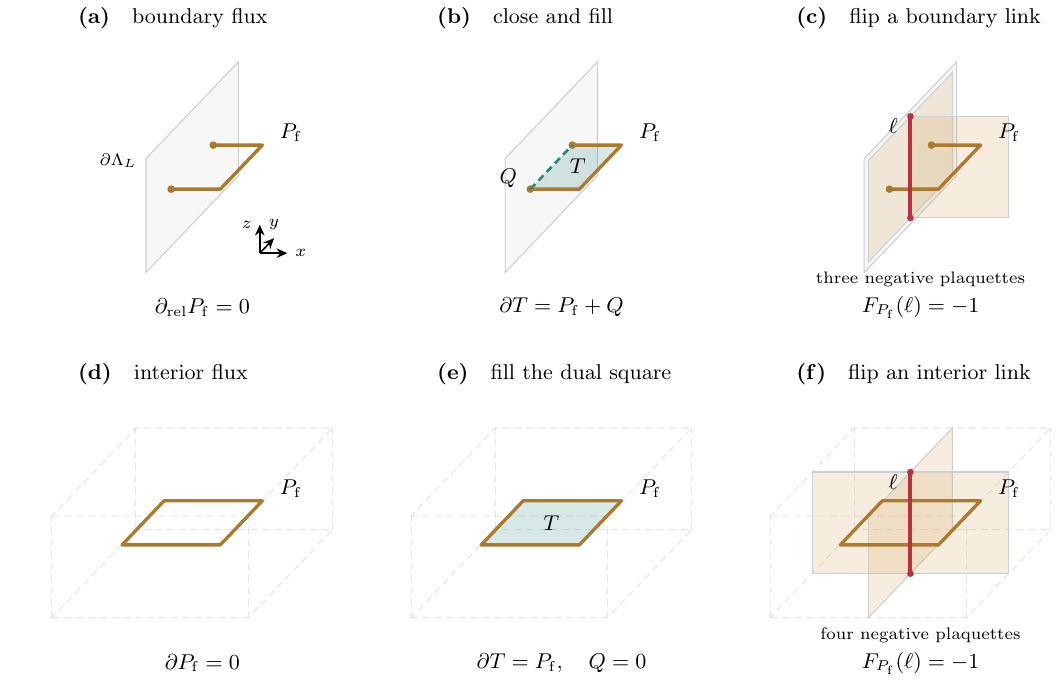}
 \caption{\textbf{Realizing a dual flux} ($d=3$, $m=1$).
 Orange lines show $P_{\rm f}$, dashed green lines the auxiliary $Q$,
 teal shading the filling $T$, and red lines the flipped physical link.
 (a--c)~Close a boundary path with $Q$, fill it, and flip the link
 dual to $T$, creating three negative plaquettes.
 (d--f)~An interior square needs no $Q$; the dual link flip creates
 four negative plaquettes. Pale orange faces mark these plaquettes;
 all other spins and plaquettes are positive. Dashed gray frames
 indicate a local cell patch, not the physical boundary.}
 \label{sm:fig:boundary-flux-realization}
\end{figure}
\begin{figure}[htbp]
 \centering

 \includegraphics[width=509.99582pt,height=196.28840pt]{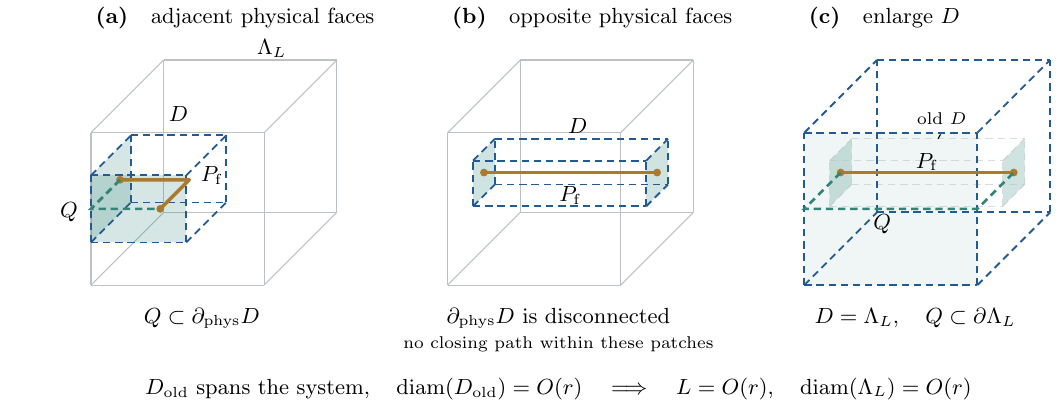}
 \caption{\textbf{Boundary completion in the two geometric cases.}
 Blue dashed edges bound $D$, and teal marks its physical boundary portions.
 Orange lines show $P_{\rm f}$, and dashed green lines show $Q$.
 (a)~Adjacent faces allow completion within $\partial_{\rm phys}D=D\cap\partial\Lambda_L$.
 (b)~Opposite faces can give disconnected patches, preventing completion
 within $\partial_{\rm phys}D$. Artificial faces lie in the bulk and cannot be used.
 (c)~Enlarging $D$ to $\Lambda_L$ provides a boundary route for $Q$.
 The diameter remains $O(r)$ because the old box already spans the
 system, implying $L=O(r)$.}
 \label{sm:fig:boundary-geometric-cases}
\end{figure}

To replace the spin sum by a flux sum, we need both existence and
multiplicity of the spin realizations. Multiplying the sign patterns
from Lemma~\ref{sm:lem:open-fillings}(ii) component by component realizes
every $E\in\mathfrak P_L^{\mathrm o}$. Every flux has the same number
of realizations: multiplication by one realization gives a bijection
from the flat configurations ($B_p=1$ for all $p$) to the configurations
with that flux. This set of configurations is the \emph{flux fiber}.

It therefore suffices to count flat configurations. Write each sign
as $(-1)^{x_\ell}$ with binary $x_\ell$.
More generally, a binary assignment on $j$-cells is flat if its sum around
the boundary of every retained $(j+1)$-cell is zero modulo two. For $j\ge1$,
the gauge map sends an assignment on $(j-1)$-cells to its boundary sums
on $j$-cells; Eq.~\eqref{sm:cl:gauge-action} is its sign form at $j=m$.
Every flat assignment of positive degree is in this map's image. To see
this, assign to each $(j-1)$-cell the sum of the given values over the
$j$-cells swept out in its full coordinate contraction. The sweep boundary
identity reproduces the original values on every $j$-cell: flatness
cancels the contributions from boundaries of $(j+1)$-cells, and the final
projection has no positive-dimensional cells.

If $z_j$ counts the independent binary choices specifying a flat
assignment on $j$-cells, then $z_0=1$
(the constant assignments) and
$z_j=N_{j-1}^{\mathrm o}-z_{j-1}$ for $j\ge1$: the kernel of the gauge
map on $(j-1)$-cell assignments is exactly their flat subset. Each image has
$2^{z_{j-1}}$ preimages among the $2^{N_{j-1}^{\mathrm o}}$ assignments,
so their ratio gives $2^{z_j}$. Therefore
\begin{equation}
 z_m\coloneqq\sum_{j=0}^{m-1}(-1)^{m-1-j}N_j^{\mathrm o}+(-1)^m,
 \qquad |\text{flux fiber}|=2^{z_m},\qquad
 \frac{z_m}{N}\longrightarrow\binom{d-1}{m-1}.
 \label{sm:cl:fiber}
\end{equation}
The last identity follows from the alternating binomial sum.
In particular $2^{N-1}$ is the fiber size only for $m=1$.

\subsection{Exact gauge-average representation}
\label{sm:cl:gauge-average-section}

Gauge averaging of symmetry-breaking weights has precedents in lattice
gauge theory~\cite{SM-FoersterNielsenNinomiya1980,SM-Testa1997}. Its application
to weak classical perturbations of self-correcting memories is developed
in Ref.~\cite{SM-Thorngren2026}.
The companion analysis~\cite{SM-WatanabeFullPaper2026} uses the same
Boltzmann-weight averaging and charge-cancellation construction.
For thermodynamic analyticity we track the complete partition-function
normalization below; the companion additionally constructs a conditional
inverse sampler for the averaged state.

With the diagonal local representative fixed above, each retained
$v_Y$ is a function of the physical cell signs on $Y$. A function of the signs on $Y$ has the exact binary Fourier expansion
\begin{equation}
 v_Y(\sigma)=\sum_{X\subseteq Y}c_{Y,X}\sigma_X,
 \qquad
 c_{Y,X}\coloneqq2^{-|Y|}\sum_{\sigma|_Y}v_Y(\sigma)\sigma_X.
 \label{sm:cl:local-fourier}
\end{equation}
The sum over $\sigma|_Y$ runs over all sign choices on $Y$, and
$\sigma_\varnothing\coloneqq1$. Orthogonality of these sign products
proves the expansion: the uniform average of their product is one when
the two supports coincide and zero otherwise. The coefficients are real and obey $|c_{Y,X}|\le1$.

Combine equal nonempty supports and keep the constant part explicitly:
\begin{equation}
 V^{\mathrm o}(\sigma)=V_{{\rm c},L}^{\mathrm o}
              +\sum_{X\in\mathcal I_L}c_X\sigma_X.
 \label{sm:cl:fourier-total}
\end{equation}
Here $\mathcal I_L$ contains the distinct nonempty supports with nonzero
combined coefficient; each is one label. The scalar
$V_{{\rm c},L}^{\mathrm o}$ is the sum of the constant Fourier coefficients.
For any $\ell\in X$, at most $n_{\mathrm{inc}}$ original terms can contribute to $c_X$.
Consequently $|c_X|\le n_{\mathrm{inc}}$, $|X|\le b$, diameter at most $R$, and
at most $n_{\mathrm{inc}} 2^{b-1}$ distinct supports contain a specified physical cell.
These bounds also control labels incident on a specified $(m-1)$-cell.
The fixed spatial period gives
$v_{\rm c}\coloneqq\lim_{L\to\infty}N^{-1}V_{{\rm c},L}^{\mathrm o}$.
Restoring this scalar shifts the limiting free energy by $g v_{\rm c}$;
we retain it throughout this proof.

Allow complex $\beta,g$ and define
\begin{equation}
 k_X\coloneqq-\beta gc_X,\qquad
 t_X\coloneqq\tanh k_X,\qquad
 u_{\rm f}\coloneqq e^{-2J\operatorname{Re}\beta},\qquad
 u_{\rm e}\coloneqq\max_X|t_X|.
 \label{sm:an:eq:parameters}
\end{equation}
The coefficients $k_X$ are dimensionless local couplings; $k$ without
a subscript remains the fixed defect dimension. The quantities
$u_{\rm f},u_{\rm e}$ are nonnegative bounds used in estimates, not holomorphic
weights. Set $u_{\rm e}\coloneqq0$ if $\mathcal I_L$ is empty.
Take each $k_X$ in a small disk where $\cosh k_X\ne0$, with
$\log\cosh k_X$ chosen to vanish at zero.

For any fixed $\eta$, permuting $\sigma$ by Eq.~\eqref{sm:cl:gauge-action}
and using invariance of $H_0^{\mathrm o}$ gives the same partition function.
Average these $2^{N_{m-1}^{\mathrm o}}$ equal expressions:
\begin{align}
 w_g(\sigma)&\coloneqq2^{-N_{m-1}^{\mathrm o}}\sum_\eta
                           e^{-\beta gV^{\mathrm o}(\eta\sigma)},
 \label{sm:cl:averaged-weight}\\
 Z_L^{\mathrm o}&=\sum_\sigma e^{\beta J\sum_pB_p(\sigma)}w_g(\sigma).
 \label{sm:cl:averaged-partition}
\end{align}
This averages the full Boltzmann weight, not the perturbation itself.
It is an exact change-of-variables identity for real or complex parameters;
no gauge invariance of $V^{\mathrm o}$ is used.

Use $e^{k_X(\eta\sigma)_X}=\cosh k_X[1+t_X(\eta\sigma)_X]$.
In this finite product a selection $\xi\subseteq\mathcal I_L$ chooses
the second term on precisely its supports. For each label $X$, define its charge set $A_X\coloneqq\partial X$: the $(m-1)$-cells occurring an odd number of times in the boundaries of the cells of $X$. The gauge action gives
$(\eta\sigma)_X=\sigma_X\prod_{u\in A_X}\eta_u$.
Each gauge sign is averaged independently. A sign occurring an odd number
of times makes the average vanish; an even number contributes one.
It follows that
\begin{equation}
 2^{-N_{m-1}^{\mathrm o}}\sum_\eta\prod_{X\in\xi}(\eta\sigma)_X
 =\begin{cases}
 \displaystyle\prod_{X\in\xi}\sigma_X,
    &\text{every gauge cell occurs in an even number of }A_X,\\
 0,&\text{otherwise}.
 \end{cases}
 \label{sm:cl:neutrality-rule}
\end{equation}
Call the retained selections neutral, including the empty one. A label
with $A_X=\varnothing$ is neutral by itself. Define
$\omega_{\rm e}(\xi)\coloneqq\prod_{X\in\xi}t_X$ and
$\sigma_\xi\coloneqq\prod_{X\in\xi}\sigma_X$. Empty products are one.
The exact averaged weight becomes
\begin{align}
 G_L(\sigma)&\coloneqq\sum_{\xi\,\mathrm{neutral}}
                               \omega_{\rm e}(\xi)\sigma_\xi,
 \label{sm:cl:selection-sum}\\
 w_g(\sigma)&=e^{-\beta gV_{{\rm c},L}^{\mathrm o}}
                         \prod_X\cosh k_X\;G_L(\sigma).
 \label{sm:cl:weight-prefactor}
\end{align}
There is no truncation in $g$. In $\sigma_\xi$, repeated physical cells
cancel in pairs, leaving the chain $C_\xi$. Neutrality,
Eq.~\eqref{sm:cl:neutrality-rule}, gives
$\partial C_\xi=\bigtriangleup_{X\in\xi}A_X=\varnothing$.
It may be empty or disconnected.

Fill $C_\xi$ by an interaction cell set $S_{C_\xi}$ using
Lemma~\ref{sm:lem:open-fillings}. Paired boundary physical cells cancel, giving
\begin{equation}
 \sigma_\xi=\prod_{p\in S_{C_\xi}}B_p(\sigma)
           =\chi(\xi,E(\sigma)),
 \label{sm:cl:whole-selection-sign}
\end{equation}
where $\chi(\xi,E)$ is $-1$ if $E$ crosses an odd number of
these interaction cells and $+1$ otherwise. This includes boundary interaction cells.
It is independent of the filling: two fillings differ by a closed
$(m+1)$-chain, which the coordinate sweep fills with $(m+2)$-cells.
The product of its interaction signs is one by
Eq.~\eqref{sm:cl:cube-rule}. This includes boundary-touching components.
No sum over fillings is taken.

The unperturbed factor is also exact. Each negative interaction cell costs $2J$, so
\begin{equation}
 e^{\beta J\sum_pB_p(\sigma)}=e^{\beta JN_p}\omega_{\rm f}(E(\sigma)),
 \qquad\omega_{\rm f}(E)\coloneqq e^{-2\beta J|E|}.
 \label{sm:cl:whole-flux-weight}
\end{equation}
Here $|E|$ counts all occupied dual $k$-cells, including truncated boundary cells.
Insert Eqs.~\eqref{sm:cl:weight-prefactor},
\eqref{sm:cl:whole-selection-sign}, and~\eqref{sm:cl:whole-flux-weight}
into Eq.~\eqref{sm:cl:averaged-partition}. Every allowed flux has exactly
$2^{z_m}$ configurations, as proved above, and their signs in
Eq.~\eqref{sm:cl:whole-selection-sign} agree. Grouping them gives
\begin{equation}
 \boxed{
 Z_L^{\mathrm o}=e^{-\beta gV_{{\rm c},L}^{\mathrm o}}
    2^{z_m}e^{\beta JN_p}\prod_X\cosh k_X
    \sum_{\xi,E}\omega_{\rm e}(\xi)
    \omega_{\rm f}(E)\chi(\xi,E).
 }
 \label{sm:cl:whole-partition}
\end{equation}
The sum is over all neutral selections and all
$E\in\mathfrak P_L^{\rm o}$, each including the empty set.
Every normalization and scalar contribution is retained.

\subsection{Component decomposition of the partition sum}

The whole selection and flux set in Eq.~\eqref{sm:cl:whole-partition}
need not be connected. We denote their nonempty connected perturbation
components by $P_{\rm e}$ and flux components by $P_{\rm f}$:
$\xi=\bigcup_iP_{\rm e,i}$ and $E=\bigcup_jP_{\rm f,j}$.
Each $P_{\rm e}$ or $P_{\rm f}$ denotes one component, not the whole graph.
The subscripts $\mathrm e,\mathrm f$ label the perturbation and flux
species; the label $\mathrm e$ does not introduce an electric term in the
Hamiltonian.

For the convergence estimates, join selected labels whose charge sets
intersect. The resulting graph has bounded degree. A neutral selection decomposes uniquely into connected
neutral components; charges cannot cancel between distinct components.
Charge-empty labels are isolated and give singleton components.
This decomposition is needed to count shapes independently of volume.
For example, two individually neutral interaction cell factors may be arbitrarily
far apart. Fixing the position of one leaves $O(N)$ positions for the other,
although the selection contains only two labels. Each connected component,
in contrast, has spatial diameter controlled by its own cost. Disconnected
flux sets are decomposed for the same reason.

A \emph{perturbation component} $P_{\rm e}$ is a nonempty connected neutral label set.
Its cost is an integer counting microscopic factors, rather than a physical
energy. Define its weight and closed cell chain by
\begin{equation}
 c(P_{\rm e})\coloneqq|P_{\rm e}|,\qquad
 \omega_{\rm e}(P_{\rm e})\coloneqq\prod_{X\in P_{\rm e}}t_X,\qquad
 C_{P_{\rm e}}\coloneqq\mathop{\bigtriangleup}_{X\in P_{\rm e}}X.
 \label{sm:an:eq:electric-data}
\end{equation}
Here $|P_{\rm e}|$ counts selected labels, and the symmetric difference
$\bigtriangleup$ retains physical cells present in an odd number of supports; repeated
physical cells cancel in pairs. By neutrality, $C_{P_{\rm e}}$ is a closed $m$-chain, possibly disconnected
or empty, with at most $b c(P_{\rm e})$ physical cells.
A geometric carrier is a connected region made of lattice cells containing the indicated support.
The microscopic supports have such a carrier of diameter
at most $c_0 c(P_{\rm e})$, for a fixed locality constant $c_0$.

Two perturbation components are incompatible
if they reuse a label or some constituent labels have intersecting charge
sets $A_X$. Mere overlap of their
physical cell supports does not impose incompatibility. Exactly,
\begin{equation}
 G_L(\sigma)=
 \sum_{\{P_{\rm e,1},\ldots,P_{\rm e,r}\}\,\mathrm{compatible}}
       \prod_i \omega_{\rm e}(P_{\rm e,i})\sigma_{C_{P_{\rm e,i}}}.
 \label{sm:an:eq:electric-gas}
\end{equation}
The sum includes the empty collection. Closed singleton labels and
components with empty cell chains are included. Their weights and exclusions contribute to the free-energy density.

A \emph{flux component} $P_{\rm f}$ is a nonempty connected dual-$k$-cell set
satisfying the parity rule of \eqref{sm:cl:cube-rule}. Its cost and weight are
\begin{equation}
 c(P_{\rm f})\coloneqq|P_{\rm f}|,\qquad
 \omega_{\rm f}(P_{\rm f})\coloneqq e^{-2\beta J|P_{\rm f}|}.
 \label{sm:cl:magnetic-data}
\end{equation}
Here $|P_{\rm f}|$ counts dual $k$-cells. Flux components are incompatible if
they overlap or share a non-boundary dual $(k-1)$-face. Every compatible collection of these polymers is a
physical free-box flux, by Lemma~\ref{sm:lem:open-fillings}.

Fix one local physical-interaction cell filling $S_{P_{\rm e}}$ of $C_{P_{\rm e}}$ supplied by
that lemma. Define the mixed sign by the number of occupied interaction cells in
this filling:
\begin{equation}
 \chi(P_{\rm e},P_{\rm f})\coloneqq
 (-1)^{\#\{p\in S_{P_{\rm e}}:\ p\hbox{ is crossed by }P_{\rm f}\}}.
 \label{sm:cl:mixed-sign}
\end{equation}
Boundary interaction cells are included, and filling independence follows by
applying Eq.~\eqref{sm:cl:whole-selection-sign} to this one perturbation
component and the physical flux $P_{\rm f}$.
The magnetic realization $F_{P_{\rm f}}$ gives the reciprocal description
\begin{align}
 \chi(P_{\rm e},P_{\rm f})
 &=\prod_{p\in S_{P_{\rm e}}}B_p(F_{P_{\rm f}})
   =\prod_{\ell\in C_{P_{\rm e}}}(F_{P_{\rm f}})_\ell,\label{sm:an:eq:linking}\\
 \sigma_{C_{P_{\rm e}}}
 &=\prod_{P_{\rm f}\,\mathrm{component\ of}\,E(\sigma)}
                       \chi(P_{\rm e},P_{\rm f}).\label{sm:an:eq:component-product}
\end{align}
These are discrete Stokes identities, valid for every component size.
A negative mixed sign requires intersection with either object's local
filling. No sum over filling choices is taken.
For the component decompositions,
$\chi(\xi,E)=\prod_i\prod_j\chi(P_{\rm e,i},P_{\rm f,j})$.
Both activities also factor over their components. These pair signs and
the same-species exclusions therefore reproduce every summand of
Eq.~\eqref{sm:cl:whole-partition}.

For the common gas notation, either a perturbation component or a flux
component is called a polymer. Let $\mathcal P_L$ contain both species,
with all possible costs. For
same-species pairs put $\chi(P,Q)=0$ or $1$ according to incompatibility or
compatibility. For mixed pairs use the sign in Eq.~\eqref{sm:cl:mixed-sign},
and set $\chi(P,P)=0$. Define $\omega(P)$ by the appropriate species weight above. The symmetric
Mayer factor $h(P,Q)\coloneqq\chi(P,Q)-1$ lies in $\{0,-1,-2\}$
and obeys $|1+h(P,Q)|\le1$.
The full gas is
\begin{equation}
 \Xi_L^{\rm o}\coloneqq\sum_{r\ge0}\frac1{r!}
     \sum_{P_1,\ldots,P_r\in\mathcal P_L}
       \prod_i \omega(P_i)\prod_{i<j}[1+h(P_i,P_j)].
 \label{sm:an:eq:gas}
\end{equation}
Here $r$ counts component occurrences, whereas $c(P_i)$ counts the cells
or selected supports in the $i$th component.
Repeated objects vanish by self-incompatibility. Decomposing the
selections and flux sets in Eq.~\eqref{sm:cl:whole-partition} into their components gives
\begin{equation}
 Z_L^{\rm o}=
 e^{-\beta gV_{{\rm c},L}^{\rm o}}2^{z_m}e^{\beta JN_p}
       \left(\prod_X\cosh k_X\right)\Xi_L^{\rm o}.
 \label{sm:an:eq:prefactor}
\end{equation}
This is the physical free-box partition function exactly, with every
component, scalar term, and normalization factor retained.

\subsection{Convergence and analytic bulk limit}

An exponential cost reserve means that the expansion still converges after
multiplying every cluster weight by an exponential of its total cost.
This reserve pays for later polynomial counting factors and bounds the
large-cost tails. A logarithmic cluster is a tuple connected through
nonzero Mayer factors; it can contain repeated polymer occurrences, whose
costs are counted separately.

\subsubsection{Tree estimate for the linking signs}

A root is an element chosen as a fixed reference for a sum. In the tree
estimate it is one distinguished polymer occurrence. For spatial shape
counts below, it is one support label or dual $k$-cell; the weighted estimates
can instead fix a physical cell.

\begin{lemma}[Cluster convergence with an exponential cost reserve]
\label{sm:lem:tree}
For a self-incompatible polymer gas of the form
\eqref{sm:an:eq:gas}, suppose $h$ is symmetric and parameter independent, $|1+h|\le1$, and each
object has cost $c(P)\ge1$. If some $s>0$ satisfies
\begin{equation}
 \sum_Q |h(P,Q)|\,|\omega(Q)|e^{(1+s)c(Q)}\le c(P)
       \qquad\hbox{for every }P ,
 \label{sm:eq:tree-criterion}
\end{equation}
then the connected expansion of $\log\Xi$ converges absolutely with
reserve $e^{s\sum_i c(P_i)}$. Holomorphic activities with locally uniform
bounds give a holomorphic logarithm and a zero-free partition function.
For infinitely many labels also require
$\sum_P|\omega(P)|e^{(1+s)c(P)}<\infty$ in each finite volume,
locally uniformly in the parameters. Sums can be replaced by integrals against a
positive activity majorant, with the same conclusion.
\end{lemma}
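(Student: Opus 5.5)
The plan is to prove the lemma in the Kotecký–Preiss form, replacing the usual hard-core assumption $h\in\{0,-1\}$ by the weaker bound $|1+h|\le1$. That bound is exactly what survives the mixed linking signs, where $h=-2$ is allowed. There are three steps: a tree-graph bound for the Ursell coefficients, an inductive bound on rooted tree sums using criterion \eqref{sm:eq:tree-criterion}, and the passage to holomorphy and zero-freeness.

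\emph{Step 1 (Ursell coefficients and trees).} Write the formal connected expansion
\[
\log\Xi=\sum_{n\ge1}\frac1{n!}\sum_{P_1,\ldots,P_n}\phi(P_1,\ldots,P_n)\prod_i\omega(P_i),
\qquad
\phi\coloneqq\sum_{G\ \mathrm{connected}}\prod_{\{i,j\}\in G}h(P_i,P_j),
\]
where $G$ runs over connected graphs on $\{1,\ldots,n\}$. I would use the Penrose partition scheme rather than a Brydges--Kennedy interpolation formula. An interpolation formula would need $|(1+h)^t|\le1$ for fractional $t$, which is awkward when $1+h\in\{0,-1\}$. The Penrose map is purely combinatorial and sends each spanning tree $T$ to a set $R(T)$ of additional edges. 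It gives the exact algebraic identity
\[
\phi=\sum_{T}\prod_{e\in T}h_e\prod_{e\in R(T)}(1+h_e),
\]
valid for arbitrary complex $h$. With $|1+h|\le1$ this yields $|\phi|\le\sum_T\prod_{e\in T}|h_e|$. Self-incompatibility enters only through $h(P,P)=-1$, and it is covered by the same bound.

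\emph{Step 2 (rooted tree sums).} Fix a root occurrence $P$. Let $\mathcal T_D(P)$ denote the sum, over labelled rooted trees of depth at most $D$, of $\prod_{e}|h_e|\prod_{\rm non\text{-}root}|\omega(Q)|e^{(1+s)c(Q)}$, with the usual symmetry factors. Grouping trees by the children of the root gives
\[
\mathcal T_{D+1}(P)\le\exp\Big(\sum_Q|h(P,Q)|\,|\omega(Q)|e^{(1+s)c(Q)}e^{-c(Q)}\mathcal T_D(Q)\Big).
\]
Here the factor $e^{(1+s)c(Q)}$ is written so that one factor $e^{c(Q)}$ is reserved for the subtree. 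The induction hypothesis $\mathcal T_D(Q)\le e^{c(Q)}$ together with \eqref{sm:eq:tree-criterion} gives $\mathcal T_{D+1}(P)\le e^{c(P)}$ for all $D$. The base case is $\mathcal T_0=1\le e^{c}$, since $c\ge1$.

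Summing over the root then bounds the absolutely summed connected expansion with reserve $e^{s\sum_ic(P_i)}$ by $\sum_P|\omega(P)|e^{(1+s)c(P)}$. This is finite in each finite volume by the extra hypothesis, and per root it is bounded by $|\omega(P)|e^{(1+s)c(P)}$. This per-root bound is what later gives volume-uniform bounds per unit cell. The bookkeeping of the $n!$, the labelled trees, and the $e^{sc}$ reserve is the main point to get exactly right: the reserve must be split off before the induction, so that it does not consume the $e^{c(Q)}$ used by the induction. This is the step I expect to require the most care.

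\emph{Step 3 (holomorphy).} Each finite truncation is a polynomial in finitely many holomorphic activities. The bounds of Step 2 hold uniformly on compact sets of parameters by the locally uniform activity majorants, so the series converges locally uniformly. By Weierstrass, its sum is holomorphic.

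To identify this sum with $\log\Xi$, I would first truncate to finitely many polymers. There $\Xi$ is a polynomial, and the identity $\Xi=\exp(\text{connected series})$ holds as formal power series in the activities, hence also on the polydisc where the series converges absolutely. Removing the truncation is dominated convergence, controlled by the same majorant. It follows that $\Xi=e^{(\cdot)}$ is zero-free. Finally, replacing the polymer sums by integrals against a positive activity majorant changes nothing in Steps 1--3, because only positivity and the tree bound are used.
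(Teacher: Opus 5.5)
Your proposal is correct and follows the paper's proof essentially step for step. It uses the same Penrose partition identity bounded via $|1+h|\le1$, the same exponential rooted-tree recursion closed by the induction $\mathcal T_D(Q)\le e^{c(Q)}$ together with the criterion, and the same root summation. Your identification of $\log\Xi$ differs only cosmetically: you truncate to finitely many polymers and use the multivariable formal identity plus dominated convergence, while the paper scales all activities by one auxiliary $z$ and applies the identity theorem.

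One slip to fix: in your definition of $\mathcal T_D(P)$ the non-root vertex weights must be $|\omega(Q)|e^{s c(Q)}$, not $|\omega(Q)|e^{(1+s)c(Q)}$. That is what your displayed recursion and your final per-root bound $|\omega(P)|e^{(1+s)c(P)}$ actually use.
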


\begin{proof}
\textup{(i) Connected graphs and the tree bound.}
For a tuple of $r$ objects define its connected coefficient
$\mathcal U$ by summing products of $h$ over connected graphs on the
tuple positions; $\mathcal U(P)\coloneqq1$. Grouping graphs into connected
components gives the formal connected-series identity
\begin{equation}
 \log\Xi=
 \sum_{r\ge1}\frac1{r!}\sum_{P_1,\ldots,P_r}
       \mathcal U(P_1,\ldots,P_r)\prod_i \omega(P_i).
 \label{sm:an:eq:cluster-series}
\end{equation}
At this stage the identity is an algebraic statement about coefficients;
neither a branch of the logarithm nor convergence has been assumed. We
will show below that the series is the exact logarithm of the gas.

For a connected graph $G$ on the tuple positions $1,\ldots,r$, choose
position~1 as root. Place each vertex in the layer numbered by its graph
distance from the root. For each nonroot vertex, choose its
smallest-numbered neighbor in the previous layer as parent. The parent edges form a spanning tree $T$. Define $R(T)$ by keeping
every tree edge and every edge within a distance layer. Between consecutive
layers, keep a child-to-previous-layer edge exactly when its endpoint's
number is at least the parent's number; keep no edges across more than
one layer. A graph producing $T$ therefore satisfies $T\subseteq G\subseteq R(T)$ [Fig.~\ref{sm:fig:partition-example}].
Conversely, these inclusions preserve the distance layers, since no added
edge skips a layer, and preserve each smallest-numbered parent by the
endpoint rule. Thus the intervals partition all connected graphs.
Summing over each interval gives the Penrose identity
\cite{SM-JacksonProcacciSokal2014},
\begin{equation}
 \mathcal U(P_1,\ldots,P_r)=
 \sum_T\prod_{\{i,j\}\in \operatorname{Edges}(T)}h(P_i,P_j)
        \prod_{\{i,j\}\in \operatorname{Edges}(R(T))\setminus \operatorname{Edges}(T)}
                      [1+h(P_i,P_j)] .
 \label{sm:an:eq:penrose}
\end{equation}
Here $\operatorname{Edges}(T)$ denotes the tree-edge set. Thus $|\mathcal U|$ is bounded
by the sum of tree products of $|h|$, including when a linking pair has
$h=-2$. Taking absolute values before this connected-graph cancellation
would not give the required estimate.

\begin{figure}[htbp]
 \centering
 \includegraphics{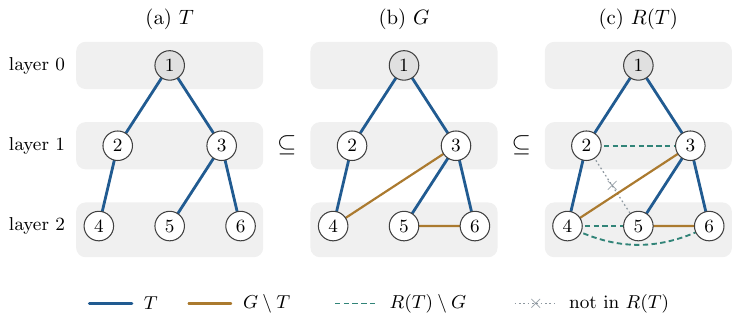}
 \caption{\textbf{Connected-graph partition for $r=6$.}
 Vertices label the six polymer occurrences, and gray bands indicate graph
 distance from root~1; inclusions refer to edge sets.
 (b)~In $G$, vertex~4 has neighbors~2 and~3 in the previous layer and
 chooses~2 as parent; vertices~5 and~6 have only~3.
 (a)~These parent edges form $T$ (blue).
 (c)~$R(T)$ adds all edges within a layer and the edge $\{3,4\}$, since
 $3$ is larger than vertex~4's parent~2.
 Orange edges belong to $G\setminus T$, and dashed teal edges to $R(T)\setminus G$.
 The crossed gray edge $\{2,5\}$ is excluded: adding it would change
 vertex~5's parent from~3 to~2. Edges from root~1 to layer~2 are also
 excluded because they skip a layer.}
 \label{sm:fig:partition-example}
\end{figure}

\textup{(ii) Sum the rooted trees.}
Rooted-tree sums of height at most $d_{\rm t}$, where
$d_{\rm t}=0,1,\ldots$, are majorized by
\begin{equation}
 T_0(P)\coloneqq1,\qquad
 T_{d_{\rm t}+1}(P)\coloneqq\exp\!\left[\sum_Q|h(P,Q)||\omega(Q)|e^{s c(Q)}T_{d_{\rm t}}(Q)\right]
                    \le e^{c(P)} .
 \label{sm:an:eq:tree-recursion}
\end{equation}
The last inequality follows inductively from
\eqref{sm:eq:tree-criterion}; the exponential counts any number of
children with their factorials. Consequently,
\begin{equation}
 \sum_{r\ge1}\frac1{(r-1)!}\sum_{P_2,\ldots,P_r}
  |\mathcal U(P,P_2,\ldots,P_r)|
       \prod_{i=2}^r |\omega(P_i)|e^{s c(P_i)}
       \le e^{c(P)} .
 \label{sm:an:eq:pinned}
\end{equation}
Multiply by $|\omega(P)|e^{s c(P)}$ and sum roots to obtain absolute
convergence. Locally uniform majorants give holomorphy in complex
parameters when all activities are holomorphic.

\textup{(iii) Identify the exact logarithm.}
To identify the sum with the original gas, introduce an auxiliary complex
number $z$ and replace every activity by $z\omega(P)$. Let
$\Xi(z)$ denote the resulting gas and $\mathcal L(z)$ the
connected series on the right-hand side of
\eqref{sm:an:eq:cluster-series} with this replacement.
The gas itself is absolutely convergent: since each pair factor has
modulus at most one, its absolute sum is bounded by
$\exp(|z|\sum_P|\omega(P)|)$. The finite-volume summability assumption
makes this bound finite, also for a countable polymer set.
The same bounds give
absolute convergence for $|z|\le1$ and holomorphy for $|z|<1$.

Near $z=0$, graph decomposition gives
$\exp\mathcal L(z)=\Xi(z)$ as a convergent identity.
The identity theorem extends it throughout $|z|<1$, and absolute
convergence permits $z\to1$. Thus
$\Xi=\exp\mathcal L(1)$ exactly. In particular $\Xi\ne0$, and
$\mathcal L(1)$ supplies a holomorphic logarithm wherever the parameter
bounds hold locally uniformly. This step connects the convergent series
to the unexpanded partition function. The argument is a cost-weighted form of
the familiar polymer criterion \cite{SM-KoteckyPreiss1986}.
Its extension to complex pair interactions satisfying $|1+h|\le1$
was established in Ref.~\cite{SM-Ueltschi2004}; the proof above records
the exponential cost reserve needed here.
For purely hard-core gases, Fern\'andez and Procacci give a sharper
criterion~\cite{SM-FernandezProcacci2007}; the tree estimate above directly
accommodates our mixed crossing signs.
\end{proof}

\subsubsection{Uniform geometric bounds for all component sizes}

There are constants $A_{\rm f},A_{\rm e}$ such that dual flux sets of size $r$ through a
specified dual $k$-cell number at most $A_{\rm f}^r$, and perturbation components of cost
$j$ through a specified label number at most $A_{\rm e}^j$. Encode a connected
cell or label set by choosing a spanning tree deterministically and
traversing each tree edge in both directions from the root. The visited
vertices recover the set, and the traversal length is at most twice its
size. Since each step has a bounded number of choices, the number of
sets grows at most exponentially. This applies to both adjacency graphs,
including at the physical boundary; imposing neutrality or the parity
rule only reduces their counts.
The activity bounds are
\begin{equation}
 |\omega_{\rm f}(P_{\rm f})|\le u_{\rm f}^{c(P_{\rm f})},\qquad
 |\omega_{\rm e}(P_{\rm e})|\le u_{\rm e}^{c(P_{\rm e})}.
 \label{sm:cl:activity-bounds}
\end{equation}
Same-species exclusions give root counts linear in the fixed object's
cost: sum over its dual faces or its labels and charged $(m-1)$-cells.

The mixed counts require the reciprocal identities in
\eqref{sm:an:eq:linking}. Fix a perturbation component of cost $j$ and vary a
dual flux set of size $r$. The cell product in Eq.~\eqref{sm:an:eq:linking}
shows that $\chi=-1$ requires $(F_{P_{\rm f}})_\ell=-1$ for at least one
physical cell $\ell\in C_{P_{\rm e}}$. By the local magnetic filling, a deterministic root cell
of $P_{\rm f}$ is therefore within distance $O(r)$ of one of at most $bj$
physical cells of the closed chain. Conversely, fixing a dual flux set of size $r$, a negative
sign with a perturbation component of cost $j$ implies that an interaction cell of
$S_{P_{\rm e}}$ is crossed by one of its $r$ cells. A perturbation root label must
then be within distance $O(j)$ of that cell. Thus the possible root
positions satisfy
\begin{align}
 \#\{\hbox{flux roots of size }r\hbox{ with }\chi(P_{\rm e},P_{\rm f})=-1\}
       &\le Cj(r+1)^d,\label{sm:cl:flux-root-count}\\
 \#\{\hbox{perturbation roots of cost }j\hbox{ with }\chi(P_{\rm e},P_{\rm f})=-1\}
       &\le Cr(j+1)^d.\label{sm:cl:mixed-root-counts}
\end{align}
These estimates are uniform for boundary components and components as
large as the box. Counting possible root positions using the varying
object's local filling keeps the bound linear in the fixed object's cost.

For a fixed reserve $s>0$, put $r_{\rm f}\coloneqq A_{\rm f}u_{\rm f}e^{1+s}$ and
$r_{\rm e}\coloneqq A_{\rm e}u_{\rm e}e^{1+s}$. Combining exclusions and mixed signs, for which
$|h|=2$, gives
\begin{align}
 \frac{\sum_P|h(P_{\rm f},P)||\omega(P)|e^{(1+s)c(P)}}{c(P_{\rm f})}
 &\le C_{\rm ff}\sum_{r\ge1}r_{\rm f}^r+
        C_{\rm fe}\sum_{j\ge1}j^dr_{\rm e}^j,\label{sm:an:eq:flux-row-bound}\\
 \frac{\sum_P|h(P_{\rm e},P)||\omega(P)|e^{(1+s)c(P)}}{c(P_{\rm e})}
 &\le C_{\rm ee}\sum_{j\ge1}r_{\rm e}^j+
        C_{\rm ef}\sum_{r\ge1}r^dr_{\rm f}^r.\label{sm:an:eq:row-bounds}
\end{align}
In particular, take $s=1$ and choose $D$ no smaller than the four
constants in Eqs.~\eqref{sm:an:eq:flux-row-bound} and~\eqref{sm:an:eq:row-bounds}. Since $r^d\ge1$,
a convenient sufficient condition for both rows is
\begin{equation}
 D\sum_{r\ge1}r^d\left[(A_{\rm e}e^2u_{\rm e})^r
                      +(A_{\rm f}e^2u_{\rm f})^r\right]<1.
 \label{sm:cl:simple-convergence}
\end{equation}
This is an estimate on the absolute weights, not a redefinition of the
holomorphic component weights. The two geometric sums become small
when $u_{\rm e}$ is small and $J\operatorname{Re}\beta$ is large.

Taking $J\operatorname{Re}\beta$ sufficiently large and $u_{\rm e}$
sufficiently small makes both rows strictly less than one, with thresholds
depending only on the interaction bounds and the chosen reserve.
Lemma~\ref{sm:lem:tree} therefore gives a holomorphic logarithm of the exact
gas with locally uniform rooted bounds independent of $L$.
Increasing $J\operatorname{Re}\beta$ and decreasing $u_{\rm e}$ allows any
fixed larger reserve.

More explicitly, fix any allowed real $(\beta_0,g_0)$. Since the row bounds
are strict, continuity gives an open complex neighborhood of this point
on which they still hold with a margin. This neighborhood can be chosen
independently of $L$: only the fixed locality bounds enter the estimates.
The activities, $\cosh$ factors, and chosen $\log\cosh$ factors are
holomorphic there. All compact subsets therefore have common rooted
majorants for every finite volume.

\subsubsection{The thermodynamic limit and its common cutoff argument}
\label{sm:bulk-limit}

Let $\mathcal C$ denote a cluster tuple in
\eqref{sm:an:eq:cluster-series}, with
\begin{equation}
 \operatorname{cost}(\mathcal C)\coloneqq\sum_{P\in\mathcal C}c(P),\qquad
 \phi(\mathcal C)\coloneqq
 \frac{\mathcal U(\mathcal C)}{|\mathcal C|!}
                         \prod_{P\in\mathcal C}\omega(P).
 \label{sm:cl:cluster-cost}
\end{equation}
Multiply Eq.~\eqref{sm:an:eq:pinned} by
$|\omega(P)|e^{s c(P)}$ and sum the root. An $r$-occurrence cluster
is then counted $r$ times, so this bounds the unrooted sum:
\begin{equation}
 \sum_{\mathcal C}|\phi(\mathcal C)|e^{s\operatorname{cost}(\mathcal C)}
 \le\sum_P|\omega(P)|e^{(1+s)c(P)}\le CN.
 \label{sm:cl:reserved-total}
\end{equation}
The last inequality follows by summing the rooted shape bounds over the
$O(N)$ possible spatial roots. On total cost at least $M$, the extra
weight is at least $e^{sM}$. Hence, on compact subsets of the strict
convergence domain,
\begin{align}
 \frac1N\sum_{\mathcal C:\operatorname{cost}(\mathcal C)\ge M}
                  |\phi(\mathcal C)|&\le Ce^{-sM},\label{sm:an:eq:bulk}\\
 \psi_{\rm pol}(\beta J,\{k_X\})&\coloneqq
             \lim_{L\to\infty}\frac1N\log\Xi_L^{\rm o}.\label{sm:an:eq:bulk-definition}
\end{align}
Here $s>0$ is the reserve already chosen above, and $|\mathcal C|$
counts occurrences, including repetitions.

The following cutoff argument
also applies to the later weighted and high-temperature expansions.
Suppose the cluster coefficients are holomorphic on a common complex
neighborhood and have a summable bound per spatial root, uniformly in
volume on compact subsets. At each fixed total cost, suppose the bulk
cluster families and coefficients agree with a fixed periodic infinite-volume family, apart
from a boundary layer of fixed width. Suppose also that the normalized
cluster sums have a volume-uniform exponential cost tail on every compact
subset of a common complex neighborhood. Then the fixed-cost sums per
cell converge to the average of their bulk translations over one spatial
period: the boundary layer contributes only $O(L^{d-1})/N$. For periodic
boxes the same conclusion holds once fixed-cost clusters cannot wrap.
At fixed cost there are finitely many geometric shapes relative to a
root; if a shape carries countably many labels, a common absolutely
summable rooted majorant suffices to sum those labels first.

First choose $M$ to make the exponential tail uniformly small on a fixed
complex compact set. Then take $L\to\infty$ in the costs below $M$,
and finally let $M\to\infty$. The convergence is uniform on that compact
set, so the limit of the holomorphic normalized sums is holomorphic.
This argument controls the full series without changing any finite-volume
partition function.

For the present classical gas, successive objects of a connected cluster
either exclude each other or have a negative mixed sign. Their carriers
and local fillings bound its spatial extent by a constant times total
cost. Thus fixed-cost clusters meeting the boundary have roots in a layer
of fixed width; all remaining shapes and coefficients are those of the
fixed periodic interaction. Equation~\eqref{sm:an:eq:bulk} supplies the
required uniform tail, proving the stated limit for $\psi_{\rm pol}$.
Real-axis convergence alone would not suffice.\cite{SM-Note1} The common complex neighborhood and locally uniform
convergence there are what preserve analyticity.

The full limiting logarithm per cell is $\psi^{\rm o}$,
equal to $-\beta$ times the free-energy density. Since $z_m/N\to\binom{d-1}{m-1}$ and
$N_p/N\to\binom d{m+1}$, the exact normalization gives
\begin{equation}
 \begin{aligned}
 \psi^{\rm o}(\beta,g)\coloneqq\lim_{L\to\infty}\frac1N\log Z_L^{\rm o}
 ={}&-\beta g v_{\rm c}+\binom{d-1}{m-1}\log2+\binom d{m+1}\beta J\\
 &+\overline{\sum_X\log\cosh(-\beta gc_X)}
   +\psi_{\rm pol}(\beta J,\{-\beta gc_X\}).
 \end{aligned}
 \label{sm:an:eq:pressure-formula}
\end{equation}
The overline denotes a sum with one representative of each support orbit
under translations by the fixed spatial period, divided by the number
of unit cells in that period. This convention counts supports crossing a
period boundary once as well. Scalars and $\cosh$ factors remain explicitly in this
formula. Because $|c_X|\le n_{\mathrm{inc}}$, a sufficiently small bound on
$\beta|g|$ implies the required bound on $u_{\rm e}$. Each allowed real point
therefore has a complex neighborhood where the right-hand side is
holomorphic.

\subsection{Boundary independence and the periodic theorem}
\label{sm:cl:boundary-comparison}
\label{sm:an:sec:real-bridge}

At real parameters, compare the original periodic Hamiltonian with the
free-box Hamiltonian for the same interaction. The periodic lattice has
$\Delta N_m\coloneqq\binom dmL^d-N_m^{\mathrm o}=O(L^{d-1})$
physical cells absent from the free box. Embed $H_g^{\mathrm o}$ by adding
$\Delta N_m$ spectator qubits, tensoring with their identity $I$. They contribute exactly
$2^{\Delta N_m}$ to its partition function.
Only interaction cells and perturbing terms crossing a boundary seam differ.
Finite range and bounded incidence give, with a volume-independent constant $C_{\partial}$,
\begin{equation}
 \left\|H_g-H_g^{\rm o}\otimes I\right\|
                  \le C_{\partial}(J+|g|)L^{d-1}.
 \label{sm:cl:boundary-norm}
\end{equation}
This estimate is valid also for general Hermitian quantum perturbations.

For Hermitian matrices $H,H'$ on the same space, their ordered eigenvalues
differ by at most $\|H-H'\|$. Summing their real Boltzmann weights gives
\begin{equation}
 \left|\log\operatorname{Tr}e^{-\beta H}
       -\log\operatorname{Tr}e^{-\beta H'}\right|
                  \le\beta\|H-H'\|\qquad(\beta>0).
 \label{sm:cl:trace-comparison}
\end{equation}
No commutativity is needed. Applying this to the embedded free Hamiltonian
and dividing by $N=L^d$ yields
\begin{equation}
 \left|\frac1N\log Z_L
              -\frac1N\log Z_L^{\rm o}\right|
       \le\frac{C_{\partial}\beta(J+|g|)}{L}+\frac{\Delta N_m}{N}\log2\ \longrightarrow\ 0.
 \label{sm:an:eq:real-comparison}
\end{equation}
Thus $-\beta$ times the original periodic free-energy density equals
$\psi^{\rm o}(\beta,g)$ at every
allowed real point. The holomorphic free-box limit constructed from the
exact gas is consequently a holomorphic extension of this periodic
dimensionless logarithm per cell. Its restriction to real parameters is jointly
real analytic, and $f=-\psi^{\rm o}/\beta$ proves
the classical statement of Theorem~\ref{sm:thm:low-temperature}.
The boundary comparison is used at real parameters; the holomorphic
extension has already been established by the exact free-box expansion.

\subsection{Weighted interactions for the quantum extension}
\label{sm:cl:weighted}

The quantum reduction generates interactions with decaying spatial tails.
The retention of generating carriers and exponential cost bounds is shared
with the weighted classical expansion in Ref.~\cite{SM-WatanabeFullPaper2026}.
The lemma below includes the exact scalar and $\cosh$ normalization and a
locally uniform complex bulk limit. The companion's weighted transport
statement additionally requires inverse gauge sampling and defect
decoder/encoder estimates.

We record a weighted version of the exact gas argument. Fix a positive
exponential-weight parameter $S$. On the same free box, consider the
dimensionless diagonal exponent $a_{0,L}+\sum_\alpha k_{\alpha,L}\sigma_{Y_\alpha}$,
and define its volume-uniform weighted coefficient bound by
\begin{equation}
 Q_S\coloneqq\sup_L\sup_\ell\sum_{\alpha:\ell\in Y_\alpha}
                    |k_{\alpha,L}|e^{S\nu_\alpha}.
 \label{sm:cl:weighted-norm}
\end{equation}
Every $Y_\alpha$ is nonempty and has integer cost $\nu_\alpha\ge1$,
with $|Y_\alpha|\le b_w\nu_\alpha$ and a connected geometric carrier of
diameter at most $d_w\nu_\alpha$. The constants $b_w,d_w$ are fixed.
Different labels can carry the same cell product; finite or countably
infinite alphabets are allowed. The scalar $a_{0,L}$ is kept separately.
All bounds hold uniformly in $L$ on compact subsets of a common complex
parameter neighborhood.

The finite-range classical model is recovered by
$\alpha=X$, $Y_\alpha=X$, $k_{\alpha,L}=k_X$, $\nu_\alpha=1$, and
$a_{0,L}=-\beta gV_{{\rm c},L}^{\mathrm o}$; for fixed $S$ it has
$Q_S=O(|\beta g|)$. Define
$t_{\alpha,L}\coloneqq\tanh k_{\alpha,L}$ and define the charge set $A_\alpha\coloneqq\partial Y_\alpha$, again with the boundary taken modulo two. Neutral selections and their
components use the same charge-cancellation and overlap rules as above;
their perturbation weights are products of $t_{\alpha,L}$. Their closed cell chain is $C_{P_{\rm e}}\coloneqq\bigtriangleup_{\alpha\in P_{\rm e}}Y_\alpha$; neutrality again gives empty boundary.

\begin{lemma}[Exact weighted free-box gas]
\label{sm:lem:weighted-gas}
Assume the coefficients $k_{\alpha,L}$ are holomorphic in their complex
parameters, with locally uniform bounds.
Fix an exponential reserve $s>0$. For sufficiently large fixed $S$, small
$Q_S$, and large $J\operatorname{Re}\beta$, the resulting two-species expansion
satisfies Lemma~\ref{sm:lem:tree}, with perturbation
cost $c(P_{\rm e})\coloneqq\sum_{\alpha\in P_{\rm e}}\nu_\alpha$. Its partition function is
exactly
\begin{equation}
 \begin{aligned}
 &\sum_\sigma e^{\beta J\sum_pB_p(\sigma)+a_{0,L}
                         +\sum_\alpha k_{\alpha,L}\sigma_{Y_\alpha}}\\
 &\qquad=e^{a_{0,L}}2^{z_m}e^{\beta JN_p}
                   \left(\prod_\alpha\cosh k_{\alpha,L}\right)\Xi_L^{\rm o}.
 \end{aligned}
 \label{sm:cl:weighted-prefactor}
\end{equation}
Its connected gas series has the uniform total-cost tail
\eqref{sm:an:eq:bulk}. If the fixed-cost label families, carriers, and coefficients stabilize
in the bulk with a fixed spatial period, and $N^{-1}a_{0,L}$ has a locally uniformly
holomorphic limit, the normalized logarithm per cell also has such a limit.
\end{lemma}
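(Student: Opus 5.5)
The proof repeats the finite-range argument with label costs $\nu_\alpha$ in place of unit costs. The tails are absorbed by the reserve $e^{S\nu_\alpha}$ in $Q_S$.

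\emph{Exact identity.} Write $e^{k_{\alpha,L}\sigma_{Y_\alpha}}=\cosh k_{\alpha,L}\,[1+t_{\alpha,L}\sigma_{Y_\alpha}]$. The product over countably many labels converges absolutely, since $\sum_\alpha|k_{\alpha,L}|<\infty$ in each finite box: there are finitely many cells and $Q_S<\infty$. We then gauge-average using Eq.~\eqref{sm:cl:averaged-weight} and keep only neutral selections by Eq.~\eqref{sm:cl:neutrality-rule}. Each neutral component's closed chain $C_{P_{\rm e}}$ is filled using Lemma~\ref{sm:lem:open-fillings}(i). This is possible because $C_{P_{\rm e}}$ lies in the union of connected carriers of total diameter at most $d_w c(P_{\rm e})$, so it fits in a box of diameter $O(c(P_{\rm e}))$. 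Converting to signs gives Eq.~\eqref{sm:cl:whole-selection-sign}. The flux-fiber count $2^{z_m}$ and Eq.~\eqref{sm:cl:whole-flux-weight} then give Eq.~\eqref{sm:cl:weighted-prefactor} verbatim, with $e^{a_{0,L}}$ and $\prod_\alpha\cosh k_{\alpha,L}$ as exact prefactors. Distinct labels with equal $Y_\alpha$ are kept as separate factors; nothing in the derivation requires combining them.

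\emph{Convergence.} We verify criterion~\eqref{sm:eq:tree-criterion} of Lemma~\ref{sm:lem:tree} with the countable-label summability. For small $|k|$ we have $|t_\alpha|\le 2|k_\alpha|$, and therefore $|\omega_{\rm e}(P_{\rm e})|e^{(1+s)c(P_{\rm e})}\le\prod_{\alpha}2|k_\alpha|e^{(1+s)\nu_\alpha}$.

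The shape count replaces the fixed bound $A_{\rm e}^j$ by a tree-traversal sum. Root at a label. Each new label must share a charged $(m-1)$-cell with an already chosen one. The charge set of a label $\alpha$ has at most $2m\,b_w\nu_\alpha$ cells, and every label through a given cell contains a physical cell adjacent to it. Summing children over an adjacent physical cell $\ell$ therefore gives a factor $C\sum_{\alpha\ni\ell}2|k_\alpha|e^{(1+s)\nu_\alpha}\le CQ_S$ once $S\ge 1+s+\log$ of the constants. The combinatorial factor per unit cost, coming from the charge-set size $\propto\nu_\alpha$, is absorbed by the leftover $e^{(S-1-s)\nu_\alpha}$. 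The same-species rows thus become $\le C c(P)\,Q_S$.

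For the mixed rows we reuse the placement bounds~\eqref{sm:cl:flux-root-count}--\eqref{sm:cl:mixed-root-counts}. The diameter of a perturbation component is now bounded by $d_w c(P_{\rm e})$, so the polynomial factors $(j+1)^d$ and $C j$ become polynomials in cost. These are again absorbed by $e^{(S-1-s)\nu}$ for $S$ large. The flux side is unchanged and is controlled by large $J\operatorname{Re}\beta$.

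\emph{Main obstacle.} I expect the main difficulty to be in this step. The polynomial placement factor in the perturbation cost must be paid for by the label weights, while the row condition is linear in the fixed object's cost. I would first try to bound $(j+1)^d\le C_\epsilon e^{\epsilon j}$ and split $e^{\epsilon j}=\prod_\alpha e^{\epsilon\nu_\alpha}$ among the labels. This reduces the estimate to $Q_S$ with $S$ shifted by $\epsilon$.

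\emph{Bulk limit.} The uniform tail~\eqref{sm:an:eq:bulk} follows from Eq.~\eqref{sm:cl:reserved-total}. Summability per spatial root holds because each root cell contributes at most $CQ_S$. With the stabilization hypothesis, the common cutoff argument of Sec.~\ref{sm:bulk-limit} gives a locally uniform holomorphic limit of $N^{-1}\log\Xi_L^{\rm o}$. Two further terms must be controlled. The term $N^{-1}\sum_\alpha\log\cosh k_{\alpha,L}$ converges holomorphically because $|\log\cosh k|\le C|k|^2$, which gives a rooted summable majorant, and because its coefficients stabilize. The term $N^{-1}a_{0,L}$ converges by assumption. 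Together with $z_m/N$ and $N_p/N$, these give the limit of the full normalized logarithm.
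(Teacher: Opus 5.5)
Your proposal is correct and follows essentially the same route as the paper. Both arguments use a rooted-tree majorant over the charge-adjacency graph, in which the $O(\nu_\alpha)$ neighbor count is absorbed by an extra factor $e^{\nu_\alpha}$. In both, the polynomial mixed-placement factor is paid by surplus reserve: your bound $(j+1)^d\le C_\epsilon e^{\epsilon j}$ is the paper's gain $e^{-(A-1-s)j}$ from choosing $A>1+s$. Finally, both obtain the exact normalization from absolute summability of the countable label set and the bulk limit from the common cutoff argument together with the $\log\cosh$ and $a_{0,L}$ terms.
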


\begin{proof}
\textup{(i) Weighted counting of perturbation components.}
The gauge cancellation and the physical free-box fillings are unchanged.
We must replace the bounded-degree perturbation counting by a weighted bound.
For sufficiently small $k_{\alpha,L}$, $|t_{\alpha,L}|\le2|k_{\alpha,L}|$,
also for complex coefficients.
A charge set has at most $2m b_w\nu_\alpha$ gauge cells. At a fixed
$(m-1)$-cell, sum labels through one of its at most $2(d-m+1)$ incident
physical $m$-cells. Write $\alpha\sim \alpha'$ when $\alpha=\alpha'$ or $A_\alpha\cap A_{\alpha'}\ne\varnothing$.
With this adjacency, the bound is
\begin{equation}
 \sum_{\alpha'\sim \alpha}|t_{\alpha',L}|e^{S\nu_{\alpha'}}
                  \le C b_w Q_S\nu_\alpha.
 \label{sm:cl:charge-majorant}
\end{equation}
For a charge-empty label only the coincident-label contribution is needed;
it obeys the same bound by choosing any physical cell in its nonempty support.

Choose a prescribed reserve $A>1+s$ and then $S\ge A+1$. Connected label
sets rooted at $\alpha$ are bounded by rooted trees, with repeated
children allowed to increase the majorant. Their sums obey
\begin{equation}
 T_\alpha=|t_{\alpha,L}|e^{A\nu_\alpha}
                \exp\!\left(\sum_{\alpha'\sim \alpha}T_{\alpha'}\right),\qquad
 T_\alpha\le |t_{\alpha,L}|e^{(A+1)\nu_\alpha}
                   \quad\hbox{if }C b_w Q_S\le1.
 \label{sm:cl:tree-majorant}
\end{equation}
Starting from $T_\alpha^{(0)}=0$, monotone iteration of the right-hand side
sums finite rooted trees and preserves the displayed upper bound.
The exponential counts unordered children with their factorials. Restricting
to neutral sets decreases the sum. Rooting a connected selection at a label
whose support contains $\ell$ and summing such labels gives
\begin{equation}
 \sup_\ell\sum_{P_{\rm e}:\ell\in\bigcup_{\alpha\in P_{\rm e}}Y_\alpha}
                   |\omega_{\rm e}(P_{\rm e})|e^{A c(P_{\rm e})}\le C_AQ_S.
 \label{sm:cl:weighted-electric-root}
\end{equation}
A selection containing several such labels is overcounted, which is harmless.
This also bounds selections with a specified deterministic physical cell
root.

\textup{(ii) Control the mixed and exclusion terms.}
Closed perturbation chains have $O(c(P_{\rm e}))$ physical cells and carrier diameter
$O(c(P_{\rm e}))$. Lemma~\ref{sm:lem:open-fillings} and the reciprocal mixed
root counts therefore apply at every cost. In a row with fixed flux
cost $r$, varying perturbation cost $j$ incurs at most $Cr(j+1)^d$ root
positions. At cost $j$, replacing the weight $e^{Aj}$ in
Eq.~\eqref{sm:cl:weighted-electric-root} by $e^{(1+s)j}$ gains
$e^{-(A-1-s)j}$. Thus
\begin{equation}
 \begin{aligned}
 \sum_{P_{\rm e}}|h(P_{\rm f},P_{\rm e})|
       |\omega_{\rm e}(P_{\rm e})|e^{(1+s)c(P_{\rm e})}
 &\le C_A rQ_S\sum_{j\ge1}(j+1)^de^{-(A-1-s)j}\\
 &\le C_{A,s}rQ_S.
 \end{aligned}
 \label{sm:cl:weighted-mixed-row}
\end{equation}
The scalar series is finite because $A>1+s$; $C_{A,s}$ is independent
of volume. Perturbation exclusions are
bounded by the $O(c(P))$ support physical cells and charged $(m-1)$-cells of the fixed
object, using the same rooted bound.

Using the already established bound $A_{\rm f}^r$ for flux shapes of
size $r$ through a fixed cell, the resulting row estimate is
\begin{equation}
 \sum_Q|h(P,Q)||\omega(Q)|e^{(1+s)c(Q)}
 \le C\left[Q_S+
     \sum_{r\ge1}r^d(A_{\rm f}u_{\rm f}e^{1+s})^r\right]c(P).
 \label{sm:cl:weighted-row}
\end{equation}
The bracket is sufficiently small in the stated domain, establishing the
criterion. The total root activity in a finite box is finite by
\eqref{sm:cl:weighted-electric-root} and magnetic counting. Thus
Lemma~\ref{sm:lem:tree} also applies to the countable alphabet.

\textup{(iii) Recover the exact normalization and bulk limit.}
Absolute summability of $\sum_\alpha|k_{\alpha,L}|$ gives the exact gauge
product and the product of $\cosh$ factors. One can first prove the identities
for finite label sets and then pass to the full alphabet using absolute
convergence; this proves \eqref{sm:cl:weighted-prefactor} with every label
retained.

For the bulk limit, stabilization at fixed cost means that the label
families, carriers, and coefficients agree, under a translation-preserving
identification, with a fixed periodic family once their carriers are
farther from the boundary than a constant times their cost. The constant
is uniform over labels and volumes. This condition concerns the whole
fixed-cost family, not only each coefficient separately. The rooted bounds
justify summing any countably many labels of each fixed geometric shape.
Together with the exponential cost tail, these verify the hypotheses of
the common cutoff argument in Sec.~\ref{sm:bulk-limit}.
The $\cosh$ contribution is absolutely summable,
since $|\log\cosh k_{\alpha,L}|\le C|k_{\alpha,L}|^2$ in the chosen disk; its
weighted coefficient bound supplies the same limiting argument. Together
with the retained scalar limit this proves the bulk-limit assertion.
\end{proof}
\section{Quantum proof}
\label{sm:sec:quantum}
\label{sm:qu:section}

We prove the quantum part of Theorem~\ref{sm:thm:low-temperature},
with the model and locality assumptions stated in
Section~\ref{sm:cl:section} and the auxiliary open cube defined in
Section~\ref{sm:cl:open-box}. Thus
$H_g^{\mathrm o}=H_0^{\mathrm o}+g\sum_Xv_X$, where every retained
$v_X$ is Hermitian, has $\norm{v_X}\le1$, and acts on its assigned
nonempty support $X$ of at most $b$ physical $m$-cells and diameter at most $R$.
At most $n_{\mathrm{inc}}$ original supports meet each physical cell.
We keep each original $v_X$ in full, including its identity part, so
its norm bound is preserved. Constant contributions will be included
in the local logarithm below, without a separate scalar subtraction.

Imaginary-time expansions and effective classical interactions are established tools for quantum perturbations of classical systems~\cite{SM-BorgsKoteckyUeltschi1996,SM-KoteckyUeltschi1999,SM-FrohlichReyBelletUeltschi2001}. The construction below supplies the weighted, volume-uniform estimates needed here.

The diagonal matrix elements of the Boltzmann operator give an exact
classical sum with weak interactions of decaying spatial range.
We allow both $\beta$ and $g$ to be complex near each allowed real point.
The final effective interactions are indexed by their support $Y$, as in
the classical representation. Only inside the estimates do we retain an
auxiliary label recording the cluster that generates each interaction.
This representation evaluates the trace; it does not assert
that the quantum Gibbs state is diagonal. No spectral gap is assumed.

The proof has three steps. First, an imaginary-time expansion gives the
exact diagonal ratio and hence the partition function. Grouping the
perturbation labels in each history by overlaps of their enlarged supports
gives a polymer gas. Expanding its logarithm produces weak classical
interactions with decaying spatial tails. Second, we apply
Lemma~\ref{sm:lem:weighted-gas}, retaining all scalar and $\cosh$ factors.
We check that fixed-cost contributions stabilize in the periodic bulk
and that the remaining tails vanish uniformly; these properties give
a locally uniform analytic limit of the free-energy density in a free box.
Third, a boundary comparison of the original Hamiltonians at real
parameters shows that the periodic free energy has the same limit.

The imaginary-time diagonal reduction and control of its generating
supports are shared with Ref.~\cite{SM-WatanabeFullPaper2026}. Here we
additionally prove holomorphic coefficient bounds, keep the scalar
contribution to the trace, and verify the analytic bulk limit. The companion
uses the real diagonal interaction as input to its channels and separately
constructs local coherence recovery.

\subsection{Step 1: exact local logarithmic coefficients}
\label{sm:qu:shadow}

Let $\ket\sigma$ be the tensor-product eigenvector with
$Z_{\ell}\ket\sigma=\sigma_{\ell}\ket\sigma$, $\sigma_{\ell}=\pm1$,
and set $H_0^{\mathrm o}(\sigma)\coloneqq-J\sum_{p\subset\Lambda_L}B_p(\sigma)$.
Define the full diagonal ratio by
\begin{equation}
 \mathcal W_{\beta,g}(\sigma)\coloneqq
 e^{\beta H_0^{\mathrm o}(\sigma)}
 \bra\sigma e^{-\beta H_g^{\mathrm o}}\ket\sigma.
 \label{sm:qu:diagonal-ratio}
\end{equation}
Because $H_0^{\mathrm o}\ket\sigma=H_0^{\mathrm o}(\sigma)\ket\sigma$,
the unperturbed diagonal Boltzmann weight is
$\bra\sigma e^{-\beta H_0^{\mathrm o}}\ket\sigma
=e^{-\beta H_0^{\mathrm o}(\sigma)}$.
Thus $\mathcal W_{\beta,g}(\sigma)$ is the ratio of the full diagonal weight to
this unperturbed weight, and $\mathcal W_{\beta,0}(\sigma)=1$.
Write the trace as
$Z_L^{\mathrm o}=\sum_\sigma\bra\sigma e^{-\beta H_g^{\mathrm o}}\ket\sigma$
and substitute Eq.~\eqref{sm:qu:diagonal-ratio}. This gives
\begin{equation}
 Z_L^{\mathrm o}(\beta,g)=
 \sum_\sigma e^{-\beta H_0^{\mathrm o}(\sigma)}\mathcal W_{\beta,g}(\sigma).
 \label{sm:qu:full-trace}
\end{equation}
Thus the desired classical representation starts from the full quantum
partition function, not a truncated perturbation series. The next task is
to construct a local logarithm of $\mathcal W_{\beta,g}$ and control it uniformly
in volume.

Define the local imaginary-time operators
\begin{equation}
 v_X(u)\coloneqq e^{u\beta H_0^{\mathrm o}}v_X e^{-u\beta H_0^{\mathrm o}},
 \qquad 0\le u\le1.
 \label{sm:qu:complex-shadow}
\end{equation}
Here $u$ is the dimensionless imaginary-time variable, distinct from
the gauge-cell index in $\eta_u$: the physical imaginary time is $\beta u$
when $\beta$ is real.
We now derive the time-ordered formula from the definition of
$\mathcal W_{\beta,g}$, keeping the operator order at every step.
First define
\begin{equation}
 W(u)\coloneqq e^{u\beta H_0^{\mathrm o}}e^{-u\beta H_g^{\mathrm o}}.
 \label{sm:qu:interaction-evolution}
\end{equation}
It obeys $W(0)=I$, where $I$ is the identity operator.
The product rule and $H_g^{\mathrm o}-H_0^{\mathrm o}=g\sum_Xv_X$ give
\begin{equation}
 \begin{aligned}
 \frac{dW(u)}{du}
 &=\beta e^{u\beta H_0^{\mathrm o}}
       (H_0^{\mathrm o}-H_g^{\mathrm o})e^{-u\beta H_g^{\mathrm o}}\\
 &=-\beta g\sum_Xv_X(u)W(u).
 \end{aligned}
 \label{sm:qu:interaction-derivative}
\end{equation}
In the second line we inserted
$e^{-u\beta H_0^{\mathrm o}}e^{u\beta H_0^{\mathrm o}}=I$ between
$v_X$ and the final exponential; no commutativity of the perturbation
is required. Integration gives
\begin{equation}
 W(u)=I-\beta g\sum_X\int_0^u du_1\,v_X(u_1)W(u_1).
 \label{sm:qu:interaction-integral}
\end{equation}

Repeatedly substituting this formula for the remaining $W(u_1)$
produces the ordered times $0<u_n<\cdots<u_1<u$, with the latest time
on the left:
\begin{equation}
 \begin{aligned}
 W(u)={}&I+\sum_{n\ge1}(-\beta g)^n
 \sum_{X_1,\ldots,X_n}
 \int_{0<u_n<\cdots<u_1<u}du_1\cdots du_n\\
 &\quad\times v_{X_1}(u_1)\cdots v_{X_n}(u_n).
 \end{aligned}
 \label{sm:qu:interaction-series}
\end{equation}
Labels may repeat, and every order is retained. The ordered integration
region has volume $u^n/n!$, so for $u\le1$ the sum of the norms is bounded by
\begin{equation}
 \begin{aligned}
 &\sum_{n\ge0}\frac{|\beta g|^n}{n!}
   \left[\sum_X\sup_{0\le s\le1}\norm{v_X(s)}\right]^n\\
 &\qquad=\exp\left[|\beta g|\sum_X\sup_{0\le s\le1}\norm{v_X(s)}\right]<\infty.
 \end{aligned}
 \label{sm:qu:interaction-majorant}
\end{equation}
The remainder after repeated substitution is bounded by the next-order
term of this majorant times $\sup_{0\le s\le1}\norm{W(s)}$, and tends
to zero. Thus the series is exact for every finite volume and finite
complex $\beta,g$. The exponent in this bound contains a sum over all
supports and grows with volume; the local bounds below will instead give
a convergence domain independent of volume.

The series defines the time-ordered exponential, with $\mathcal T$
placing larger times to the left. No extra $1/n!$ multiplies an ordered
integral. Setting $u=1$ gives
$\bra\sigma W(1)\ket\sigma=\mathcal W_{\beta,g}(\sigma)$ directly from
Eq.~\eqref{sm:qu:diagonal-ratio}, and hence the rescaled Duhamel identity
\begin{equation}
 \boxed{
 \mathcal W_{\beta,g}(\sigma)=
 \bra\sigma\mathcal T\exp\left[-\beta g\int_0^1du
                  \sum_X v_X(u)\right]\ket\sigma.
 }
 \label{sm:qu:duhamel}
\end{equation}

Since the unperturbed operators $B_p$ commute, only the at most
$2(d-m)b$ interaction cells touching the original support of $v_X$ enter
its conjugation. All other interaction cell exponentials
cancel. Define the fixed enlarged support $U_X\coloneqq X\cup\bigcup_{p:\partial p\cap X\ne\varnothing}\partial p$, where the union uses only retained interaction cells. It contains the support of $v_X(u)$ for every $u,\beta,g$, and has at most $b_h\coloneqq b[1+2(d-m)(2m+1)]$ physical $m$-cells and bounded diameter. The touching-interaction cell Hamiltonian has norm at most $2(d-m)bJ$.
For Hermitian $A$, $\norm{e^{zA}}\le e^{|\operatorname{Re}z|\norm A}$;
applying this to the two conjugating exponentials and using
$\norm{v_X}\le1$ gives
\begin{equation}
 \norm{v_X(u)}\le e^{4(d-m)bJ|\operatorname{Re}\beta|},
 \qquad \lambda_0(\beta,g)\coloneqq|\beta g|e^{4(d-m)bJ|\operatorname{Re}\beta|}.
 \label{sm:qu:support-bound}
\end{equation}
This remains valid at the boundary, where fewer interaction cells occur.
The quantity $\lambda_0$ is an absolute coefficient bound, not a
holomorphic function. Its exponential factor is a sufficient
imaginary-time estimate; no optimality is asserted.

Join distinct perturbing labels when their enlarged supports overlap. This graph has
bounded degree, depending only on $d,m,b,R,n_{\mathrm{inc}}$. A polymer $\mathcal A$ is a
nonempty connected set of distinct support labels $X$; $|\mathcal A|$
counts these labels. Its activity
$w_{\mathcal A}(\sigma;\beta,g)$ sums all Duhamel histories using exactly these
labels, with arbitrary positive insertion multiplicities. A history
specifies the ordered labels and their insertion times. For multiplicities
$n_X\ge1$, the ordered simplex and all label orderings give the
factor $\prod_{X\in\mathcal A}(n_X!)^{-1}$: there are
$n!/\prod_Xn_X!$ ordered label words and each ordered simplex has volume
$1/n!$, with $n=\sum_Xn_X$. Therefore
\begin{equation}
 |w_{\mathcal A}(\sigma;\beta,g)|
 \le\prod_{X\in\mathcal A}
       \sum_{n_X\ge1}\frac{\lambda_0^{n_X}}{n_X!}
 =(e^{\lambda_0}-1)^{|\mathcal A|}.
 \label{sm:qu:activity}
\end{equation}

Histories on disjoint enlarged supports commute and factor in the product vector
$\ket\sigma$. Summing their ordered-time shuffles therefore factors their
activities. Every history decomposes into its connected label sets, and
$\mathcal W_{\beta,g}$ is exactly the hard-core polymer partition function with
incompatibility defined by overlap of the enlarged supports. More explicitly,
\begin{equation}
 \mathcal W_{\beta,g}(\sigma)=
 \sum_{\{\mathcal A_1,\ldots,\mathcal A_r\}\,\mathrm{compatible}}
       \prod_{i=1}^r w_{\mathcal A_i}(\sigma;\beta,g).
 \label{sm:qu:polymer-identity}
\end{equation}
The sum is over unordered finite collections of polymers with pairwise disjoint
enlarged supports, including the empty collection with value one. To see the
factorization without losing time order, fix an ordering within each
component. All interleavings of these orders partition the product of
the component time simplices. Disjoint-component operators commute, and
the product-basis matrix element factors, so their summed integrals are
the product of the component integrals.

The enlarged support $U_X$ need not be connected. Add fixed local paths
to obtain a connected physical carrier $K_X$ containing $U_X$,
with uniformly bounded size and diameter. These choices can be made
translation covariant in the bulk and restricted at the boundary. For a
logarithmic cluster $\Gamma=(\mathcal A_1,\ldots,\mathcal A_r)$ set
\begin{align}
 \nu_\Gamma&\coloneqq\sum_{i=1}^r|\mathcal A_i|,\label{sm:qu:cluster-cost}\\
 Y_\Gamma&\coloneqq\bigcup_{i=1}^r\bigcup_{X\in\mathcal A_i}U_X,\label{sm:qu:cluster-support}\\
 K_\Gamma&\coloneqq\bigcup_{i=1}^r\bigcup_{X\in\mathcal A_i}K_X.\label{sm:qu:cluster-data}
\end{align}
The tuple is connected through polymer incompatibility. Define its coefficient $\phi_{\Gamma}$ as the product $\prod_{i=1}^r w_{\mathcal A_i}/r!$ multiplied by the connected coefficient $\mathcal U(\mathcal A_1,\ldots,\mathcal A_r)$ from Eq.~\eqref{sm:an:eq:cluster-series}, with hard-core Mayer factors.
Each such factor is $-1$ when the two polymers' unions of enlarged supports overlap,
including identical polymers, and zero otherwise.
Here $Y_\Gamma$ is the assigned union of enlarged cell supports from the
generating histories, whereas $K_\Gamma$ is a connected carrier containing it.
These sets remain distinct, and the assigned support is retained even
when $\phi_\Gamma$ is constant.

Polymers can repeat in a logarithmic cluster, and every occurrence
contributes to $\nu_\Gamma$. Repeated time insertions within one polymer have
already been summed in Eq.~\eqref{sm:qu:activity} and are not counted again.
There is a fixed constant $d_h$ such that
$|Y_\Gamma|\le b_h\nu_\Gamma$ and
$\diam K_\Gamma\le d_h\nu_\Gamma$.
These bounds allow $b_w=b_h$ and $d_w=d_h$ in the weighted input of Lemma~\ref{sm:lem:weighted-gas}.

To check convergence explicitly, let $D_h\ge2$ bound the number of
labels whose enlarged supports meet a given enlarged support, including the label itself.
Connected sets of $n$ labels through one specified label number at most
$D_h^{2n}$ by the rooted traversal used in the classical proof.
For any prescribed reserve $A>0$, Eq.~\eqref{sm:qu:activity} gives
\begin{equation}
 \begin{aligned}
 \sum_{\mathcal B\,\mathrm{incompatible\ with}\,\mathcal A}
       \norm{w_{\mathcal B}}_\infty e^{(1+A)|\mathcal B|}
 &\le |\mathcal A|D_h
       \sum_{n\ge1}D_h^{2n}(e^{\lambda_0}-1)^n e^{(1+A)n}\\
 &\le |\mathcal A|
 \end{aligned}
 \label{sm:qu:cluster-criterion}
\end{equation}
when $\lambda_0$ is sufficiently small. A norm here maximizes over
the cell signs on the activity's support. This is precisely the
hard-core case of Lemma~\ref{sm:lem:tree}, with cost $|\mathcal A|$
and reserve $s=A$. Its pinned bound, followed by summing the possible
root labels at one physical cell, yields
\begin{align}
 \log \mathcal W_{\beta,g}(\sigma)&=\sum_{\Gamma}\phi_{\Gamma}(\sigma;\beta,g),\label{sm:qu:local-logarithm}\\
 \sup_{\ell}\sum_{\Gamma:\ell\in K_\Gamma}
 e^{A\nu_\Gamma}\norm{\phi_{\Gamma}}_\infty
 &\le C_A\lambda_0.\label{sm:qu:cost-norm}
\end{align}
The norm maximizes the absolute value over signs in $Y_\Gamma$.

To obtain the spatial bound, mark a polymer occurrence whose carrier
contains $\ell$. This can only overcount and converts $1/r!$ to
$1/(r-1)!$, as required by the pinned bound. Hence
\begin{equation}
 \begin{aligned}
 \sum_{\Gamma:\ell\in K_\Gamma}
 e^{A\nu_\Gamma}\norm{\phi_{\Gamma}}_\infty
 &\le \sum_{\mathcal A:\ell\in\bigcup_{X\in\mathcal A}K_X}
 \norm{w_{\mathcal A}}_\infty e^{(A+1)|\mathcal A|}\\
 &\le C_h\sum_{n\ge1}D_h^{2n}
       [(e^{\lambda_0}-1)e^{A+1}]^n.
 \end{aligned}
 \label{sm:qu:spatial-root-bridge}
\end{equation}
Here $C_h$ bounds the number of microscopic carriers containing a given
physical cell; it is fixed by locality. For fixed $A$, the last geometric series
is $O(\lambda_0)$ when $\lambda_0$ is sufficiently small.

Thus every fixed exponential reserve is available by reducing the allowed
coupling. The series contains all logarithmic clusters and uses the branch
vanishing at $g=0$. Uniform absolute convergence on complex neighborhoods
makes its coefficients holomorphic.
The exponential of this logarithmic series equals the exact
$\mathcal W_{\beta,g}$, so it also proves that the ratio has no zeros
in the common convergence neighborhood. All constant terms from the
original $v_X$ are already included in this full logarithm.

Expand each local coefficient exactly in sign products:
\begin{equation}
 c_{\Gamma,Y}\coloneqq
 2^{-|Y_\Gamma|}\sum_{\sigma|_{Y_\Gamma}}
 \phi_{\Gamma}(\sigma;\beta,g)\prod_{\ell\in Y}\sigma_{\ell},
 \qquad Y\subseteq Y_\Gamma.
 \label{sm:qu:walsh}
\end{equation}
Orthogonality of these products gives
$\phi_{\Gamma}=\sum_Yc_{\Gamma,Y}\sigma_Y$ and
$\sum_Y|c_{\Gamma,Y}|\le
2^{|Y_\Gamma|}\norm{\phi_{\Gamma}}_\infty$.
Keep every constant in
\begin{equation}
 a_{0,L}\coloneqq\sum_{\Gamma}c_{\Gamma,\varnothing}.
 \label{sm:qu:scalar}
\end{equation}

For the convergence estimates only, label the nonconstant contributions
separately by $\alpha\coloneqq(\Gamma,Y)$, $Y\ne\varnothing$, and write
$k_{\alpha,L}\coloneqq c_{\Gamma,Y}$,
$\nu_\alpha\coloneqq\nu_\Gamma$,
$Y_\alpha\coloneqq Y$, and $K_\alpha\coloneqq K_\Gamma$. Equal sign products from different
clusters remain temporarily separate so that each term retains its
generating cluster cost $\nu_\alpha$. This is a proof label, not an additional
physical index; we will combine equal supports after establishing the
bounds and the bulk limit. For each fixed $S>0$ and sufficiently small
$\lambda_0$, with the threshold allowed to depend on $S$, the exact
representation and bounds are
\begin{equation}
 \boxed{
 Z_L^{\mathrm o}(\beta,g)=e^{a_{0,L}}\sum_\sigma
 \exp\left[\beta J\sum_{p\subset\Lambda_L}B_p(\sigma)
              +\sum_\alpha k_{\alpha,L}\sigma_{Y_\alpha}\right]
 }
 \label{sm:qu:partition-shadow}
\end{equation}
\begin{align}
 Q_S&\coloneqq\sup_L\sup_{\ell}\sum_{\alpha:\ell\in Y_\alpha}
              |k_{\alpha,L}|e^{S\nu_\alpha}
 \le\sup_L\sup_{\ell}\sum_{\alpha:\ell\in K_\alpha}|k_{\alpha,L}|e^{S\nu_\alpha}
 \le C_S\lambda_0,
 \label{sm:qu:classical-input}\\
 |Y_\alpha|&\le b_h\nu_\alpha,
 \qquad\diam K_\alpha\le d_h\nu_\alpha.
 \label{sm:qu:carrier-bounds}
\end{align}
Since $2^{|Y_\Gamma|}\le e^{(b_h\log2)\nu_\Gamma}$, taking
$A\coloneqq S+b_h\log2$ in Eq.~\eqref{sm:qu:cost-norm} pays the entire
Fourier coefficient count. The scalar $a_{0,L}$ is absolutely
convergent and bounded by $C_0N\lambda_0$, where $C_0$ is volume
independent: assign each cluster to any physical cell in its nonempty carrier and
use Eq.~\eqref{sm:qu:cost-norm} together with the $O(N)$ cell count. There is no bound on
an interaction's cost imposed in these identities.

For real $\beta>0$ and $g$, the grouped activities are real.
Let $\mathcal W_{\mathcal B}(\sigma)$ be the ratio in
Eq.~\eqref{sm:qu:diagonal-ratio} with $V^{\mathrm o}$ replaced by
$\sum_{X\in\mathcal B}v_X$. Retaining exactly the labels in
$\mathcal A$ is equivalent to inclusion--exclusion:
\begin{equation}
 w_{\mathcal A}(\sigma)=
 \sum_{\mathcal B\subseteq\mathcal A}
     (-1)^{|\mathcal A|-|\mathcal B|}\mathcal W_{\mathcal B}(\sigma).
 \label{sm:qu:activity-reality}
\end{equation}
Each subset Hamiltonian is Hermitian and its diagonal Boltzmann matrix
element is real, proving the claim. The logarithmic coefficients and their
Fourier coefficients are consequently real. Moreover every diagonal
matrix element of $e^{-\beta H_g^{\mathrm o}}$ is strictly positive,
and the branch of $\log \mathcal W_{\beta,g}$ continued from $g=0$ is real. This observation
identifies the physical logarithm; the absolute complex expansion is
what proves analyticity.

\subsection{Step 2: the full open-box gas and its analytic bulk limit}
\label{sm:qu:pressure}

Apply Lemma~\ref{sm:lem:weighted-gas} directly to
Eq.~\eqref{sm:qu:partition-shadow}. Its effective labels are
$\alpha=(\Gamma,Y)$, its coefficients are $k_{\alpha,L}$, and the support
and carrier bounds hold with $b_w=b_h$ and $d_w=d_h$.
The coefficients are holomorphic, and
Eq.~\eqref{sm:qu:classical-input} bounds the lemma's $Q_S$ uniformly
in volume. For sufficiently large $\operatorname{Re}(\beta J)$ and
small $\lambda_0$, the weighted row estimate
\eqref{sm:cl:weighted-row} satisfies the tree criterion
\eqref{sm:eq:tree-criterion}, with any fixed exponential cost reserve.

Denote the lemma's complete two-species collection sum for these
coefficients by $\Xi_L^{\mathrm o}(\beta,g)$. Recall that
$N_p=N_{m+1}^{\mathrm o}$ counts retained interaction cells and $2^{z_m}$
is the number of spin configurations realizing each allowed flux.
The exact normalization is
\begin{equation}
 \boxed{
 Z_L^{\mathrm o}(\beta,g)=
 e^{a_{0,L}}2^{z_m}e^{\beta J N_p}
 \left(\prod_\alpha\cosh k_{\alpha,L}\right)\Xi_L^{\mathrm o}(\beta,g).
 }
 \label{sm:qu:open-partition}
\end{equation}
The scalar, all $\cosh$ factors, and every perturbation and flux component
are included. The effective-label set is countable even in finite volume,
because logarithmic clusters may repeat. Finite-label identities extend
to this set by $\sum_\alpha|k_{\alpha,L}|<\infty$ and the lemma's uniform
absolute majorants. Thus the convergent connected series is the
holomorphic logarithm of this exact $\Xi_L^{\mathrm o}$, rather than a
formal approximation to it.

To obtain the bulk limit, we use Section~\ref{sm:bulk-limit} and verify
its local-stabilization and tail assumptions for the generated interactions.
Stabilization must be checked at fixed originating cluster cost, before
combining equal sign supports $Y$: their combined coefficient can contain
clusters of arbitrarily large extent. At fixed total cost there are only
finitely many local shapes. All time-insertion multiplicities have already
been summed in their activities, and each contribution depends only on
original terms and interaction cells within a region of diameter proportional
to that cost. Distant commuting interaction cells cancel in the conjugation, so
these coefficients agree exactly with their periodic bulk values outside
that boundary layer. The classical gas factors have the same property.

The three contributions $a_{0,L}$, $\sum_\alpha\log\cosh k_{\alpha,L}$,
and $\log\Xi_L^{\mathrm o}$ also have uniform exponential cost tails,
bounded after division by $N$ by $Ce^{-sM}$ beyond cost $M$, for some
$s>0$ on each compact subset of the strict convergence domain.
For the scalar this follows from Eq.~\eqref{sm:qu:cost-norm};
for $\log\cosh$ from Eq.~\eqref{sm:qu:classical-input} and
$|\log\cosh k|\le C|k|^2$ near zero; and for the gas from the
weighted lemma's reserved cluster bound. The fixed-cost boundary-layer
comparison and tail removal of Section~\ref{sm:bulk-limit} therefore give
\begin{equation}
 \begin{aligned}
 \psi^{\mathrm o}(\beta,g)
 &\coloneqq\lim_{L\to\infty}\frac1N\log Z_L^{\mathrm o}(\beta,g)\\
 &=\binom{d-1}{m-1}\log2+\binom d{m+1}\beta J+
     \lim_{L\to\infty}\frac1N\!\left[
       a_{0,L}+\sum_\alpha\log\cosh k_{\alpha,L}
                 +\log\Xi_L^{\mathrm o}(\beta,g)\right].
 \end{aligned}
 \label{sm:qu:open-pressure}
\end{equation}
The limit is locally uniform and holomorphic. The logarithms are the
branches fixed by their convergent expansions and combine at real
parameters to the physical logarithm in Eq.~\eqref{sm:qu:open-partition}.

Choose $B$ sufficiently large, $a$ sufficiently small, and $C\ge4(d-m)b$.
Then the quantum conditions of Theorem~\ref{sm:thm:low-temperature},
$\beta J>B$ and $\beta|g|e^{C\beta J}<a$, imply strict convergence
at the real point. Continuity gives a complex neighborhood of that point
with a volume-independent margin.

Having established these estimates and the bulk limit, we can remove the
auxiliary label $\alpha$ from the effective classical action. For each nonempty
set $Y$ of retained physical $m$-cells, combine all clusters producing the same sign
product:
\begin{equation}
 k_{Y,L}\coloneqq\sum_{\Gamma:Y\subseteq Y_\Gamma}
                         c_{\Gamma,Y}.
 \label{sm:qu:grouped-coefficient}
\end{equation}
The coefficients depend on $\beta,g$. The bound in
Eq.~\eqref{sm:qu:classical-input} makes this sum locally uniformly
absolutely convergent in the complex domain, so $k_{Y,L}$ is holomorphic.
Regrouping the full logarithm and partition function is therefore exact:
\begin{align}
 \log \mathcal W_{\beta,g}(\sigma)
 &=a_{0,L}+\sum_{Y\ne\varnothing}k_{Y,L}
                        \prod_{\ell\in Y}\sigma_\ell,
 \label{sm:qu:grouped-action}\\
 Z_L^{\mathrm o}(\beta,g)
 &=e^{a_{0,L}}\sum_\sigma\exp\left[
     \beta J\sum_{p\subset\Lambda_L}B_p(\sigma)
     +\sum_{Y\ne\varnothing}k_{Y,L}\prod_{\ell\in Y}\sigma_\ell\right].
 \label{sm:qu:grouped-partition}
\end{align}
Both sums over $Y$ run over nonempty subsets of the retained physical $m$-cells; $Y$
need not be connected.

To check spatial decay after regrouping, fix
$\mu>0$. The triangle inequality and
Eqs.~\eqref{sm:qu:classical-input} and \eqref{sm:qu:carrier-bounds} give
\begin{equation}
 \begin{aligned}
 \sup_\ell\sum_{Y\ni\ell}|k_{Y,L}|
            e^{\mu(|Y|+\diam Y)}
 &\le\sup_\ell\sum_{\alpha:\ell\in Y_\alpha}|k_{\alpha,L}|
            e^{\mu(b_h+d_h)\nu_\alpha}\\
 &\le C_{\mu(b_h+d_h)}\lambda_0.
 \end{aligned}
 \label{sm:qu:grouped-decay-bound}
\end{equation}
For each fixed $\mu$, this holds for sufficiently small $\lambda_0$.
No long-range term has been discarded. The auxiliary label $\alpha$ was needed only to retain
this generating cost in the weighted convergence and bulk-limit proof.

\subsection{Step 3: return to periodic boundaries}

For real $\beta>0$ and $g$, apply the boundary comparison
\eqref{sm:an:eq:real-comparison} to the original finite-range Hermitian
Hamiltonians. That estimate includes the $\Delta N_m$ spectator cell qubits
and requires no commutativity. It does not compare the infinite effective
interaction with a periodic one. It identifies the periodic limiting
$N^{-1}\log Z_L$ with the holomorphic open-box limit just constructed.
Thus $f(\beta,g)=-\psi^{\mathrm o}(\beta,g)/\beta$ is jointly real analytic
in the quantum domain of Theorem~\ref{sm:thm:low-temperature}. The boundary comparison
is used only at real parameters; the complex extension is supplied by
the convergent open-box expansion.

\subsection{Perturbations commuting with the unperturbed Hamiltonian}
\label{sm:qu:commuting}

Commutation of the full perturbation with $H_0$ removes the
imaginary-time norm factor from the sufficient analyticity domain.
The local terms may still be noncommuting quantum operators.

\begin{proposition}[Analyticity for a commuting total perturbation]
\label{sm:prop:commuting-analyticity}
Fix $1\le m\le d-2$, assume the locality and fixed finite spatial period
of Section~\ref{sm:cl:model}, and suppose $[H_0,V]=0$ on every sufficiently
large compatible periodic torus. There exist $B_{\mathrm c},a_{\mathrm c}>0$,
depending only on $d,m,b,R,n_{\mathrm{inc}}$, such that the full-trace free energy $f$
exists and is jointly real analytic in $(\beta,g)$ throughout
\begin{equation}
 \beta J>B_{\mathrm c},\qquad \beta|g|<a_{\mathrm c}.
 \label{sm:qu:commuting-domain}
\end{equation}
Neither $[H_0,v_X]=0$ for each original term nor $[B_p,V]=0$ for each
interaction cell is required.
\end{proposition}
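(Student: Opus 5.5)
The plan is to remove the imaginary-time growth factor at the level of the whole perturbation rather than term by term. I would then feed the resulting diagonal reduction into the weighted classical gas of Lemma~\ref{sm:lem:weighted-gas}. If $[H_0,V]=0$, then $V$ preserves every joint eigenspace of the $B_p$, i.e.\ every flux sector. The obstacle in Sec.~\ref{sm:qu:shadow} was that individual evolved terms $v_X(u)$ can grow like $e^{C\beta J}$ because $v_X$ changes flux. Commutation of the total does not make each $v_X$ flux-preserving, but it lets us replace each $v_X$ by its flux-diagonal part without changing $V$.

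Concretely, I would define, for each $X$, the sector projection $\tilde v_X\coloneqq\sum_{\Phi}\Pi_\Phi v_X\Pi_\Phi$, summed over joint eigenvalue patterns of the touching $B_p$. Only the at most $2(d-m)b$ interaction cells touching $X$ matter, so this is a finite average $\tilde v_X=2^{-n_X}\sum_{\epsilon}\bigl(\prod_{p}B_p^{\epsilon_p}\bigr)v_X\bigl(\prod_pB_p^{\epsilon_p}\bigr)$ over touching cells. It is therefore a local Hermitian operator supported in the enlarged support $U_X$, with $\|\tilde v_X\|\le1$, and it commutes with $H_0$.

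The key identity is $\sum_X\tilde v_X=V$ on a sufficiently large compatible torus. To see it, the average over all $\epsilon$ of $\prod_p B_p^{\epsilon_p}$ conjugations of $V$ equals $V$ because $V$ commutes with every $B_p$. The point to check is that $V$ commutes with each $B_p$ separately, not only with $H_0$. This is where I expect the main obstacle, and it is exactly why the proposition says $[B_p,V]=0$ need not be assumed.

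I would handle it by working with the spectral decomposition of $H_0$, whose eigenspaces are labelled by the number of violated cells $|E|$, not by individual $B_p$. So define $\tilde v_X$ instead as the $H_0$-eigenspace-diagonal part. This is nonlocal, which is the real difficulty. The route I would try first uses the fixed spatial period and locality, together with the fact that the local off-diagonal pieces $v_X^{(\delta)}$ change $|E|$ by a bounded amount $\delta$. The part of $V$ changing the energy by $2J\delta\ne0$ is $\sum_Xv_X^{(\delta)}$, where $v_X^{(\delta)}$ is the local piece that changes the touching-cell energy by $2J\delta$. This piece is still local, since the energy change of an operator supported near $X$ is determined by the touching cells. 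Commutation forces this sum to vanish for each $\delta\ne0$, so $V=\sum_X v_X^{(0)}$ with $v_X^{(0)}$ local, supported in $U_X$, norm at most $1$ after averaging over a phase $e^{i\theta(\text{touching energy})}$, and commuting with $H_0$. The $L$-compatibility hypothesis is used so that this decomposition holds for all large tori uniformly.

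With $V=\sum_X v_X^{(0)}$, the evolved terms satisfy $v_X^{(0)}(u)=v_X^{(0)}$, so the Duhamel series of Sec.~\ref{sm:qu:shadow} runs with $\lambda_0$ replaced by $|\beta g|$. The polymer identity, the cluster bound \eqref{sm:qu:cluster-criterion}, the Fourier expansion and the weighted input \eqref{sm:qu:classical-input} all hold with $Q_S\le C_S|\beta g|$, using supports $U_X$ (constants $b_h,d_h$ unchanged). Lemma~\ref{sm:lem:weighted-gas} then gives the open-box holomorphic limit for $\beta J>B_{\rm c}$, $\beta|g|<a_{\rm c}$. The boundary comparison \eqref{sm:an:eq:real-comparison} is applied to the original Hamiltonians, which are unchanged by the rewriting in the bulk up to $O(L^{d-1})$ seam terms, and transfers analyticity to the periodic $f$. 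The delicate part remains verifying that the local energy-shift decomposition is well defined and norm-bounded; if the phase averaging fails, I would fall back on the local $|E|$-shift grading via commuting cell projectors, which is exact because the $B_p$ commute.
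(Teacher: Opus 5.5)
Your argument is correct and is essentially the paper's proof: your zero-frequency part $v_X^{(0)}$, obtained by the phase average $\frac{1}{2\pi}\int_0^{2\pi}e^{i\theta H_0/J}v_Xe^{-i\theta H_0/J}\,d\theta$ over the integer spectrum of $H_0/J$, is exactly the paper's $\widehat v_X$. It is local because only the touching cells survive the conjugation, so your worry that the $H_0$-diagonal part is nonlocal is unfounded. It sums to $V$ by $[H_0,V]=0$, and it then feeds into the Duhamel reduction with unit imaginary-time factor, Lemma~\ref{sm:lem:weighted-gas}, and the real boundary comparison.

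The one step to state explicitly is the free box. Truncating the torus sum need not preserve commutation with $H_0^{\mathrm o}$. The paper therefore averages with $H_0^{\mathrm o}$ to define $\widehat v_X^{\mathrm o}$, so that every retained term commutes with $H_0^{\mathrm o}$ (dropping the seam-crossing terms would also work). It then compares $H_g$ with $\widehat H_g^{\mathrm o}\otimes I$ at surface order.
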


\begin{proof}
We first construct an open-box interaction that commutes with
$H_0^{\mathrm o}$. This is necessary because simply deleting boundary
terms of $V$ need not preserve the commutation of the full sum.
For each original support retained in the box, define the local time average
\begin{align}
 \widehat v_X^{\mathrm o}
 &\coloneqq\frac1{2\pi}\int_0^{2\pi}
 e^{i\theta H_0^{\mathrm o}/J}v_X
 e^{-i\theta H_0^{\mathrm o}/J}\,d\theta,
 \label{sm:qu:commuting-open-projection}\\
 \widehat V^{\mathrm o}
 &\coloneqq\sum_{X\subset\Lambda_L}\widehat v_X^{\mathrm o},
 \qquad
 \widehat H_g^{\mathrm o}\coloneqq H_0^{\mathrm o}
                         +g\widehat V^{\mathrm o}.
 \label{sm:qu:commuting-open-hamiltonian}
\end{align}
The spectrum of $H_0^{\mathrm o}/J=-\sum_pB_p$ is integer valued.
In its eigenbasis, matrix elements between distinct eigenvalues acquire
nonzero integer frequencies and integrate to zero over one period.
Consequently $[H_0^{\mathrm o},\widehat v_X^{\mathrm o}]=0$.

Each averaged term is Hermitian and has norm at most one because
real-time conjugation is unitary. Only interaction cells touching $X$ enter
the conjugation; all other commuting factors cancel. Thus its support
lies in the enlarged support $U_X$ of Step~1, containing at most $b_h$ physical $m$-cells with
bounded diameter and overlap. These averages are independent of
$\beta,g$ and inherit the original finite spatial period in the bulk.

Commutation now gives
$e^{-\beta\widehat H_g^{\mathrm o}}
=e^{-\beta H_0^{\mathrm o}}e^{-\beta g\widehat V^{\mathrm o}}$.
The exact diagonal ratio is therefore
$\bra\sigma e^{-\beta g\widehat V^{\mathrm o}}\ket\sigma$.
Apply the construction of Step~1 to the projected terms, with their
supports $U_X$. Their imaginary-time conjugates are unchanged, so the
insertion estimate is one instead of $e^{4(d-m)bJ|\operatorname{Re}\beta|}$.
In particular, collecting all positive insertion multiplicities on a
connected label set $\mathcal A$ gives
\begin{equation}
 \norm{w_{\mathcal A}}_\infty
 \le\prod_{X\in\mathcal A}\sum_{n_X\ge1}
       \frac{|\beta g|^{n_X}}{n_X!}
 =(e^{|\beta g|}-1)^{|\mathcal A|}.
 \label{sm:qu:commuting-activity-bound}
\end{equation}

The polymer identity, local logarithm, and Fourier estimates therefore
hold with $\lambda_0$ replaced by $|\beta g|$ and with constants fixed
by the enlarged locality bounds. Denote the resulting scalar and
nonconstant coefficients by $\widehat a_{0,L}(\beta g)$ and
$\widehat k_{\alpha,L}(\beta g)$; the labels $\alpha$, costs
$\nu_\alpha$, supports $Y_\alpha$, and carriers $K_\alpha$ have the
same meanings as in Step~1. The exact partition function is
\begin{equation}
 \begin{aligned}
 \widehat Z_L^{\mathrm o}(\beta,g)
 &\coloneqq\operatorname{Tr}e^{-\beta\widehat H_g^{\mathrm o}}\\
 &=e^{\widehat a_{0,L}(\beta g)}\sum_\sigma
 \exp\left[\beta J\sum_{p\subset\Lambda_L}B_p(\sigma)
          +\sum_\alpha\widehat k_{\alpha,L}(\beta g)\sigma_{Y_\alpha}\right].
 \end{aligned}
 \label{sm:qu:commuting-open-partition}
\end{equation}

For each fixed reserve $S>0$ and sufficiently small $|\beta g|$,
Eqs.~\eqref{sm:qu:cost-norm}--\eqref{sm:qu:carrier-bounds} give
\begin{align}
 \sup_L\sup_\ell\sum_{\alpha:\ell\in K_\alpha}
       |\widehat k_{\alpha,L}(\beta g)|e^{S\nu_\alpha}
 &\le C_S|\beta g|,
 \label{sm:qu:commuting-weighted-bound}\\
 |\widehat a_{0,L}(\beta g)|&\le C_0N|\beta g|.
 \label{sm:qu:commuting-scalar-bound}
\end{align}
These bounds are uniform in $\beta J$ and volume. They retain all
scalar terms and the Fourier coefficient count, and provide the same
exponential cost tails. The coefficients are holomorphic in $\beta g$.
At fixed originating cost their bulk values stabilize exactly, because
the local time averages depend only on the touching interaction cells.

Lemma~\ref{sm:lem:weighted-gas} and Step~2 now apply without another
expansion. Choose $B_{\mathrm c}$ large enough for the flux bound and
$a_{\mathrm c}$ small enough for the required cost reserves and weighted
row bound. The full normalization and bulk-limit formula
\eqref{sm:qu:open-pressure}, with the hatted coefficients and including
scalar, $\log\cosh$, and gas terms, then give a locally uniform
holomorphic open-box limit on
\begin{equation}
 \operatorname{Re}(\beta J)>B_{\mathrm c},\qquad |\beta g|<a_{\mathrm c}.
 \label{sm:qu:commuting-complex-domain}
\end{equation}
The constants depend only on the enlarged locality bounds, hence on
$d,m,b,R,n_{\mathrm{inc}}$.

It remains to compare this open interaction with the original periodic
one. On the torus define
\begin{equation}
 \widehat v_X\coloneqq\frac1{2\pi}\int_0^{2\pi}
 e^{i\theta H_0/J}v_Xe^{-i\theta H_0/J}\,d\theta.
 \label{sm:qu:commuting-projection}
\end{equation}
The same integer-spectrum and locality arguments apply. Since
$[H_0,V]=0$, averaging the full sum leaves it unchanged:
\begin{equation}
 [H_0,\widehat v_X]=0,\qquad
 \norm{\widehat v_X}\le1,\qquad
 \sum_X\widehat v_X=V.
 \label{sm:qu:commuting-projection-properties}
\end{equation}

If the full enlarged support of $X$ lies in the open box, the two projected operators
agree. Only $O(L^{d-1})$ terms are omitted or changed. With $I_{\mathrm{sp}}$
the identity on the $\Delta N_m$ spectator cell qubits, this gives
\begin{align}
 \norm{V-\widehat V^{\mathrm o}\otimes I_{\mathrm{sp}}}
 &\le C_{\partial}L^{d-1},
 \label{sm:qu:commuting-boundary-interaction}\\
 \norm{H_g-\widehat H_g^{\mathrm o}\otimes I_{\mathrm{sp}}}
 &\le C_{\partial}(J+|g|)L^{d-1}.
 \label{sm:qu:commuting-boundary-norm}
\end{align}
Here $C_{\partial}$ depends only on the original locality bounds.
The projected open interaction need not equal the original restriction
$V^{\mathrm o}$; this boundary norm estimate is the needed comparison.
For real $\beta>0,g$, Eq.~\eqref{sm:an:eq:real-comparison} therefore yields
\begin{equation}
 \left|\frac{\log Z_L(\beta,g)}N
       -\frac{\log\widehat Z_L^{\mathrm o}(\beta,g)}N\right|
 \le\frac{C_{\partial}\beta(J+|g|)}{L}+\frac{\Delta N_m}{N}\log2.
 \label{sm:qu:commuting-boundary-pressure}
\end{equation}
The original periodic limit equals the holomorphic open-box limit.
Multiplying by $-1/\beta$ proves the joint real analyticity of $f$.
\end{proof}

The constants can be made common with the classical domain in
Theorem~\ref{sm:thm:low-temperature} by increasing $B$ and decreasing
$a$ as needed. This proposition gives a sufficient analyticity domain;
it does not assert the existence or location of a thermal transition.
\section{A complementary high-temperature analytic domain}
\label{sm:ht:section}

This section proves the high-temperature result, stated below as
Theorem~\ref{sm:ht:theorem}, for both classical and noncommuting
perturbations. High-temperature analyticity of general local
quantum Hamiltonians is an established result; see, for example,
Theorem~20 of the full version of
Ref.~\cite{SM-HarrowMehrabanSoleimanifar2020}.
We give the normalized-trace expansion and its convergence bound explicitly,
so that the domain used here can be checked without importing that theorem.
The only general expansion criterion we use is
Lemma~\ref{sm:lem:tree}, proved above and based on the standard connected
cluster criterion~\cite{SM-KoteckyPreiss1986}.
Unlike the low-temperature argument, this proof needs neither gauge
averaging nor a separate bound on the flux components.

The companion's high-temperature result~\cite{SM-WatanabeFullPaper2026} is
a local preparation and conversion theorem. The normalized-trace expansion
below supplies the distinct thermodynamic analyticity statement.

\subsection{Statement and local energy bound}

Use the periodic model and locality assumptions of
Sec.~\ref{sm:cl:model} throughout $0\le m\le d-1$, with $N=L^d$
unit cells and $n\coloneqq\binom dmN$ physical $m$-cell qubits.
Retain scalar terms on their assigned nonempty supports, so their
normalization contribution is included throughout.
Combine all Hamiltonian terms with the same assigned cell support $Y$:
\begin{equation}
 H_g=\sum_Y h_Y(g),\qquad
 h_Y(g)\coloneqq -J\sum_{p:\partial p=Y}B_p+g v_Y .
 \label{sm:ht:hamiltonian}
\end{equation}
Here $v_Y$ is the perturbing term assigned to $Y$; it is zero if no such
term is present, and coincident terms, if any, are added. The sum over
$Y$ includes every support occurring in either $H_0$ or $V$.
Triangle inequalities and the  $2(d-m)$ interaction cells incident on each physical cell give
\begin{equation}
 b'\coloneqq\max\{2(m+1),b\},\qquad
 |Y|\le b',\qquad
 \sup_{\ell}\sum_{Y:\ell\in Y}\|h_Y(g)\|
 \le 2(d-m)J+n_{\mathrm{inc}}|g|.
 \label{sm:ht:energy-bound}
\end{equation}
The supremum is over physical cells $\ell$, and the norm is the operator norm.
The left side measures interaction strength incident on one physical cell, rather
than the extensive norm of the whole Hamiltonian.

\begin{theorem}[High-temperature free energy]
\label{sm:ht:theorem}
For every fixed $d,m$ with $0\le m\le d-1$, the explicit constant
\begin{equation}
 a'\coloneqq\frac{1}{2e^4b'}
 \label{sm:ht:constant}
\end{equation}
depends only on $m,b$ and is sufficient for the following statement.
For every allowed $V$, the periodic logarithm of
the partition function per cell and the free-energy density,
\begin{equation}
 \psi(\beta,g)\coloneqq\lim_{L\to\infty}\frac{\log Z_L(\beta,g)}{N},
 \qquad f(\beta,g)\coloneqq-\frac{\psi(\beta,g)}{\beta},
 \label{sm:ht:pressure}
\end{equation}
exist and are jointly real analytic in the open domain
\begin{equation}
 \beta[2(d-m)J+n_{\mathrm{inc}}|g|]<a'.
 \label{sm:ht:domain}
\end{equation}
No mutual commutativity, sign condition, local gauge or wrapping
$m$-form symmetry, or separate smallness assumption on $|g|/J$ is required.
\end{theorem}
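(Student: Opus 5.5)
The plan is a normalized-trace polymer expansion controlled by Lemma~\ref{sm:lem:tree} in its hard-core form, followed by the common cutoff argument of Sec.~\ref{sm:bulk-limit}.

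\textbf{Step 1: exact polymer representation.} Write $\operatorname{tr}\coloneqq2^{-n}\Tr$, so that
\[
\log Z_L=n\log2+\log\operatorname{tr}e^{-\beta H_g}.
\]
Expand $e^{-\beta H_g}=\sum_{q\ge0}(-\beta)^q H_g^q/q!$ with $H_g=\sum_Yh_Y$. This gives a sum over words $(Y_1,\ldots,Y_q)$ of $(-\beta)^q h_{Y_1}\cdots h_{Y_q}/q!$. The words are organized as follows.
\begin{itemize}
\item Group the letters of a word into connected components of the overlap graph of the supports $Y$. A polymer $\mathcal A$ is a connected multiset of supports.
\item Since $\operatorname{tr}$ factors over tensor products on disjoint supports, the shuffle argument already used in Step~1 of Sec.~\ref{sm:sec:quantum} applies. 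Summing over interleavings turns $q!^{-1}$ into a product of component factorials.
\item The result is the exact hard-core gas $\operatorname{tr}e^{-\beta H_g}=\sum_{\text{compatible}}\prod_iw(\mathcal A_i)$. Each activity is $w(\mathcal A)=\sum(-\beta)^q\operatorname{tr}(h_{Y_1}\cdots h_{Y_q})/q!$, summed over words whose support set is exactly the connected set $\mathcal A$.
\end{itemize}
No commutativity is needed, because letters on disjoint supports act on different tensor factors.

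\textbf{Step 2: activity bound and the tree criterion.} Use $|\operatorname{tr}(h_{Y_1}\cdots h_{Y_q})|\le\prod\|h_{Y_i}\|$, and treat the ordering of the letters as a multinomial count. This should give
\[
|w(\mathcal A)|\le\prod_{Y\in\mathcal A}\bigl(e^{|\beta|\|h_Y\|}-1\bigr)
\]
when $\mathcal A$ is a set of distinct supports with arbitrary multiplicities, exactly as in Eq.~\eqref{sm:qu:activity}. Assign the cost $c(\mathcal A)=|\mathcal A|$.

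The incompatibility sum in Eq.~\eqref{sm:eq:tree-criterion} is then bounded by rooting at the at most $b'|\mathcal A|$ cells of the fixed polymer. From each root we sum connected support sets via a rooted-tree majorant in the style of Eq.~\eqref{sm:cl:tree-majorant}. The weight of each step is $\sum_{Y\ni\ell}(e^{|\beta|\|h_Y\|}-1)e^{(1+s)}$, and each newly attached support again contributes at most $b'$ new root cells.

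Using $e^x-1\le xe^x$, the local quantity to control is $\epsilon\coloneqq|\beta|\bigl[2(d-m)J+n_{\mathrm{inc}}|g|\bigr]$. The tree recursion closes provided $b'e^{2+s}e^{\epsilon}\epsilon$ stays below roughly $e^{-1}$. Choosing $s=1$ and $\epsilon<a'=1/(2e^4b')$ should make this hold with margin.

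This counting step is the main obstacle. The bound must avoid the naive $D^{2n}$ shape count, which would make $a'$ depend on $d$, $R$ and $n_{\mathrm{inc}}$. It must also produce exactly the constant $1/(2e^4b')$. For that reason I will use the weighted tree majorant, summing supports through a fixed cell by the local energy bound Eq.~\eqref{sm:ht:energy-bound}, rather than counting shapes.

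\textbf{Step 3: analyticity and the bulk limit.} The activities are entire in $(\beta,g)$, with locally uniform bounds on the complex neighborhood $|\beta|[2(d-m)J+n_{\mathrm{inc}}|g|]<a'$. Lemma~\ref{sm:lem:tree} therefore gives a holomorphic logarithm with the reserved bound Eq.~\eqref{sm:cl:reserved-total}, with $O(N)$ root cells.

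On the torus, a cluster of fixed cost has diameter bounded by a constant times its cost, so it cannot wrap once $L$ is large. By the fixed spatial period, these clusters agree with the periodic infinite-volume family. The common cutoff argument of Sec.~\ref{sm:bulk-limit} then yields a locally uniform holomorphic limit of $N^{-1}\log\operatorname{tr}e^{-\beta H_g}$, and no boundary comparison is needed. Adding $\binom dm\log2$ gives $\psi$, which is holomorphic through $\beta=0$. Hence $f=-\psi/\beta$ is real analytic for real $\beta>0$ in the domain Eq.~\eqref{sm:ht:domain}.
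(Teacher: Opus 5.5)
Your proposal is correct and is essentially the paper's proof. Both use the same normalized-trace word expansion regrouped by shuffles into a hard-core gas of connected label sets, the activity bound $\prod_{Y}(e^{|\beta|\|h_Y\|}-1)$, and a weighted rooted-tree majorant driven by the local energy bound rather than shape counts (the paper closes it via $b'e^{\kappa+3}\kappa<1/2$, which your $s=1$ condition reproduces under $\kappa<a'$), followed by the common cutoff argument applied directly on the torus. One small slip: in Step~1 a polymer should be a connected \emph{set} of distinct supports with all insertion multiplicities summed into its activity, not a multiset; your own activity definition and bound in Step~2 already assume this.
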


The number $e$ in Eq.~\eqref{sm:ht:constant} is the base of the natural
logarithm. This bound is deliberately conservative. At any fixed finite
$g$, it applies for sufficiently large temperature; it is not an estimate
of the actual phase boundary.

\subsection{Step 1: an exact trace expansion that keeps operator ordering}

Allow $\beta,g$ to be complex and define
\begin{align}
 D_Y(\beta,g)&\coloneqq-\beta h_Y(g),
 \label{sm:ht:complex-terms}\\
 \tau_L(O)&\coloneqq2^{-n}\Tr O,\qquad
 \widehat Z_L(\beta,g)\coloneqq
 \tau_L\!\left(e^{\sum_Y D_Y(\beta,g)}\right).
 \label{sm:ht:normalized-trace}
\end{align}
The normalized trace $\tau_L$ is a product over physical $m$-cell qubits and
factorizes for operators on disjoint supports. Exactly,
$Z_L(\beta,g)=2^n\widehat Z_L(\beta,g)$.
Expanding the single operator exponential gives
\begin{equation}
 \widehat Z_L(\beta,g)=
 \sum_{r\ge0}\frac{1}{r!}
 \sum_{Y_1,\ldots,Y_r}
 \tau_L(D_{Y_1}\cdots D_{Y_r}).
 \label{sm:ht:word-series}
\end{equation}
The $r=0$ term is one. A word $(Y_1,\ldots,Y_r)$ records the
order of multiplication; repetitions are allowed. At finite volume the
series is absolutely convergent, since its absolute sum is bounded by
$\exp(\sum_Y\|D_Y\|)$. Regrouping its terms is therefore legitimate.
There has been no replacement of an exponential of a sum by a product
of exponentials.

Regard each support $Y$ as a label, and join two labels when those
supports overlap. A \emph{polymer}
$P$ is a nonempty connected set of distinct labels in this overlap graph.
Its cost $|P|$ counts labels, and its support is
$Y_P\coloneqq\bigcup_{Y\in P}Y$.
Define its activity by summing every word whose set of used labels is
exactly $P$:
\begin{equation}
 \omega_P(\beta,g)\coloneqq
 \sum_{r\ge|P|}\frac{1}{r!}
 \sum_{\substack{Y_1,\ldots,Y_r\in P\\
                 \{Y_1,\ldots,Y_r\}=P}}
 \tau_L(D_{Y_1}\cdots D_{Y_r}).
 \label{sm:ht:activity}
\end{equation}

Two polymers are incompatible if their supports overlap; a polymer is
incompatible with itself. The exact expansion becomes
\begin{equation}
 \boxed{
 \widehat Z_L(\beta,g)=
 \sum_{\{P_1,\ldots,P_r\}\,\mathrm{compatible}}
       \prod_{i=1}^r\omega_{P_i}(\beta,g).
 }
 \label{sm:ht:polymer-sum}
\end{equation}
The sum is over unordered compatible collections of any size, including
the empty collection. To verify the equality, separate a word into the
connected components of its used labels. Different components act on
disjoint tensor factors, so their operators commute with one another and
their normalized traces factor. If component words have lengths
$r_1,\ldots,r_s$ and total length $r$, their number of interleavings is
$r!/\prod_i r_i!$. This cancels the $1/r!$ in
Eq.~\eqref{sm:ht:word-series} and leaves exactly the product of the
factors $1/r_i!$ in Eq.~\eqref{sm:ht:activity}. Ordering \emph{within} each
component is preserved. This is why the representation also holds for
arbitrary noncommuting local terms.

\subsection{Step 2: sum repeated insertions and control connected sets}

Introduce nonnegative bounds, not additional complex couplings,
\begin{equation}
 q_Y\coloneqq\|D_Y(\beta,g)\|,\qquad
 \zeta_Y\coloneqq e^{q_Y}-1,\qquad
 \kappa\coloneqq|\beta|(2(d-m)J+n_{\mathrm{inc}}|g|).
 \label{sm:ht:majorants}
\end{equation}
The local triangle inequality gives
$\sup_{\ell}\sum_{Y:\ell\in Y}q_Y\le\kappa$, also for complex
$\beta,g$. Since every $Y$ is nonempty, $q_Y\le\kappa$ as well.
The bounds involve absolute values and need not be holomorphic;
the operators $D_Y$ are holomorphic in both parameters.

The trace of a word has absolute value at most the product of its
operator norms. For fixed multiplicities $r_Y\ge1$ and
$r=\sum_{Y\in P}r_Y$, there are $r!/\prod_{Y\in P}r_Y!$ ordered
words. Their common norm bound is $\prod_{Y\in P}q_Y^{r_Y}$.
The factor $1/r!$ thus leaves $\prod_{Y\in P}q_Y^{r_Y}/r_Y!$.
Summing the multiplicities gives
\begin{equation}
 |\omega_P|
 \le\prod_{Y\in P}
       \left(\sum_{r_Y\ge1}\frac{q_Y^{r_Y}}{r_Y!}\right)
 =\prod_{Y\in P}\zeta_Y.
 \label{sm:ht:activity-bound}
\end{equation}
Thus repetitions are already controlled before the connected-cluster
expansion is taken. Also $\zeta_Y\le e^\kappa q_Y$.

Write $Y\sim Y'$ when $Y\cap Y'\ne\varnothing$,
including $Y\sim Y$. Summing first over the physical cells of $Y$
gives
\begin{equation}
 \sum_{Y'\sim Y}\zeta_{Y'}
 \le b'e^\kappa\kappa.
 \label{sm:ht:neighbor-bound}
\end{equation}
Counting a neighbor more than once only enlarges this bound.

We need the corresponding sum over connected sets, not just neighboring
labels. Weight each label by $e^2\zeta_Y$ to leave an exponential
reserve for the logarithm below. Root a connected label set
at a specified label $Y$. Every such set has a spanning tree.
Allowing repeated labels in tree branches only increases its weighted
sum. A convenient upper bound is the increasing tree recursion
\begin{equation}
 T_Y^{(0)}\coloneqq e^2\zeta_Y,\qquad
 T_Y^{(d_{\rm t}+1)}\coloneqq
 e^2\zeta_Y\exp\!\left(\sum_{Y'\sim Y}T_{Y'}^{(d_{\rm t})}\right),
 \qquad d_{\rm t}=0,1,\ldots .
 \label{sm:ht:tree-recursion}
\end{equation}
Here $d_{\rm t}$ counts allowed tree depth. The exponential sums over any number
of children, with their factorials. For distinct child labels those
factorials cancel their orderings, so every spanning tree of a connected
set is included with its full product weight.

If $\kappa<a'$, then $\kappa<1$ and
\begin{equation}
 \sum_{Y'\sim Y}e^3\zeta_{Y'}
 \le b'e^{\kappa+3}\kappa<\frac12.
 \label{sm:ht:tree-smallness}
\end{equation}
If $T_{Y'}^{(d_{\rm t})}\le e^3\zeta_{Y'}$ for every $Y'$, the recursion
gives $T_Y^{(d_{\rm t}+1)}\le e^{5/2}\zeta_Y\le e^3\zeta_Y$.
The bound holds at depth zero and therefore at every depth by induction.
Letting the depth grow bounds all connected sets containing the root:
\begin{equation}
 \sum_{P\ni Y}e^{2|P|}\prod_{Y'\in P}\zeta_{Y'}
 \le e^3\zeta_Y.
 \label{sm:ht:rooted-set-bound}
\end{equation}
This establishes the required connected-set bound directly from the
local interaction strength, without a bound on $\|H_g\|$.

\subsection{Step 3: convergence of the logarithm}

If $P'$ is incompatible with $P$, some label of $P'$ overlaps a label
of $P$. Equations~\eqref{sm:ht:activity-bound},
\eqref{sm:ht:rooted-set-bound}, and \eqref{sm:ht:tree-smallness} imply
\begin{equation}
 \begin{aligned}
 \sum_{P'\,\mathrm{incompatible\ with}\,P}
       |\omega_{P'}|e^{2|P'|}
 &\le\sum_{Y\in P}\sum_{Y'\sim Y}
       \sum_{P'\ni Y'}e^{2|P'|}\prod_{Y''\in P'}\zeta_{Y''}\\
 &\le\sum_{Y\in P}\sum_{Y'\sim Y}e^3\zeta_{Y'}
 <\frac{|P|}{2}.
 \end{aligned}
 \label{sm:ht:cluster-criterion}
\end{equation}
This verifies Lemma~\ref{sm:lem:tree} with cost $c(P)\coloneqq|P|$, reserve
$s=1$, and pair factor zero for incompatible polymers and one otherwise.
For clarity, its conclusion here is the absolutely convergent series
\begin{equation}
 \log\widehat Z_L(\beta,g)=
 \sum_{r\ge1}\frac{1}{r!}\sum_{P_1,\ldots,P_r}
       \mathcal U(P_1,\ldots,P_r)\prod_{i=1}^r\omega_{P_i}(\beta,g).
 \label{sm:ht:log-series}
\end{equation}
The coefficient $\mathcal U$ is the sum over connected graphs on the
$r$ positions: each graph edge contributes $-1$ when its two polymers
are incompatible and zero otherwise. For $r=1$ the coefficient is one.
The tree identity and rooted recursion in the proof of
Lemma~\ref{sm:lem:tree} bound this series even with its alternating signs.

Each activity is holomorphic in $(\beta,g)$, as follows from its uniformly
convergent word series on compact parameter sets. The bounds above are
locally uniform in the complex domain
\begin{equation}
 \mathcal D_{\rm HT}\coloneqq
 \{(\beta,g)\in\mathbb C^2:|\beta|(2(d-m)J+n_{\mathrm{inc}}|g|)<a'\}.
 \label{sm:ht:complex-domain}
\end{equation}
Therefore Eq.~\eqref{sm:ht:log-series} is a holomorphic logarithm, with
value zero at $\beta=0$ for every $g$. Its exponential is the exact
$\widehat Z_L$, so the finite-volume partition function has no zeros
in this domain. At real physical parameters this branch agrees with
the logarithm of the positive partition function.

\subsection{Step 4: periodic bulk limit and thermodynamic meaning}

It remains to justify passage to infinite volume; finite-volume
analyticity alone would not suffice. A logarithmic cluster is a tuple
$\mathcal C=(P_1,\ldots,P_r)$ contributing to
Eq.~\eqref{sm:ht:log-series}. Define its total cost and contribution by
\begin{equation}
 \operatorname{cost}(\mathcal C)\coloneqq\sum_{i=1}^r|P_i|,\qquad
 \phi(\mathcal C)\coloneqq\frac{\mathcal U(P_1,\ldots,P_r)}{r!}
                          \prod_{i=1}^r\omega_{P_i}.
 \label{sm:ht:cluster-data}
\end{equation}
We verify the inputs to the bulk-limit argument of
Sec.~\ref{sm:bulk-limit}. Connected support overlaps bound each
cluster's diameter by a constant times $\operatorname{cost}(\mathcal C)$.
At any fixed cost cutoff and sufficiently large $L$, the clusters are
nonwrapping and have finitely many relative support patterns per spatial
period. Their activities, including all repeated insertions, coincide
with those of the corresponding fixed local patterns: normalized traces
of spectator qubits equal one. Thus the local weights stabilize, and
averaging their translations over one period gives the fixed-cost limit.

There is a uniform bound on the remaining large-cost clusters. The
rooted version of Lemma~\ref{sm:lem:tree} and the same estimates give
\begin{align}
 \sup_{\ell}\sum_{P:\ell\in Y_P}|\omega_P|e^{2|P|}
 &\le e^{\kappa+3}\kappa,
 \label{sm:ht:cell-root-bound}\\
 \sum_{\mathcal C}|\phi(\mathcal C)|e^{\operatorname{cost}(\mathcal C)}
 &\le\binom dmN e^{\kappa+3}\kappa,
 \label{sm:ht:reserved-total}\\
 \frac1N\sum_{\mathcal C:\operatorname{cost}(\mathcal C)\ge M}|\phi(\mathcal C)|
 &\le \binom dm e^{\kappa+3}\kappa\,e^{-M},\qquad M\ge1.
 \label{sm:ht:bulk-tail}
\end{align}
For the first inequality, root a polymer at any constituent label whose
support contains $\ell$ and use Eq.~\eqref{sm:ht:rooted-set-bound}.
For the second, use the pinned bound with reserve $s=1$ and sum the
roots, as in Eq.~\eqref{sm:cl:reserved-total}. Assigning each root
polymer to a physical cell and summing the $\binom dmN$ physical cells can only overcount;
the first inequality then bounds this root sum.
On $\operatorname{cost}(\mathcal C)\ge M$, the extra weight is at least $e^M$, which
gives the third inequality.

The fixed-cost limit and uniform tail satisfy the remaining hypotheses
of the argument in Sec.~\ref{sm:bulk-limit}. Hence
\begin{equation}
 \boxed{
 \psi(\beta,g)=\binom dm\log2+
       \lim_{L\to\infty}\frac1N\log\widehat Z_L(\beta,g)
 }
 \label{sm:ht:analytic-pressure}
\end{equation}
exists locally uniformly on $\mathcal D_{\rm HT}$ and is holomorphic
there. The term $\binom dm\log2$ restores the full physical cell-Hilbert-space
normalization; scalar weights and all sectors have been retained.
For real $\beta>0$, $f=-\psi/\beta$ is jointly real analytic in
Eq.~\eqref{sm:ht:domain}, proving Theorem~\ref{sm:ht:theorem}.
The dimensionless quantity $-\beta f=\psi$ also extends analytically to
$\beta=0$ at fixed $g$, with value $\binom dm\log2$. The free-energy density
$f$ itself has the usual entropy term $-\binom dm\log2/\beta$ there; no
analyticity of $f$ at infinite temperature is asserted.
At any fixed $g$, decreasing $\beta$ within this domain gives an analytic
continuation of $\psi$ to the infinite-temperature point $\beta=0$.

The high- and low-temperature
analyticity proofs control two interiors by different expansions.
Neither convergence bound determines what occurs between them or, by
itself, proves that a thermodynamic singularity separates them.
\section{Extension to independent onsite ancillary systems}
\label{sm:anc:section}

The analyticity results survive the addition of independent finite-dimensional
onsite systems and sufficiently weak local couplings to them. These ancillary
systems are physical degrees of freedom included in the partition function;
they are not disposable registers used to implement a channel. We retain their
full free-energy contribution. The corresponding state-conversion problem is
treated separately in Ref.~\cite{SM-WatanabeFullPaper2026}.
Integration over independent onsite degrees of freedom and its local
expansion are shared inputs. The complex-reference bounds and ancillary
free-energy normalization below provide the additional control needed
for analyticity.

\subsection{Joint model and analyticity domains}

Add a bounded number of ancillary sites per unit cell, with uniformly bounded
Hilbert-space dimensions. Label them by $j$, and let $h_j$ be their onsite
Hermitian Hamiltonians, independent of $\beta,g,L$. Define
\begin{align}
 H_A&\coloneqq\sum_j h_j,\qquad
 \lambda_{\max}(h_j)-\lambda_{\min}(h_j)\le\Delta_A,
 \label{sm:anc:onsite}\\
 H_{g,A}&\coloneqq H_0+H_A+gV,\qquad V\coloneqq\sum_Xv_X,
 \label{sm:anc:joint-hamiltonian}\\
 Z_{\mathrm{joint},L}(\beta,g)&\coloneqq\Tr e^{-\beta H_{g,A}},
 \label{sm:anc:joint-partition}\\
 f_{\mathrm{joint}}(\beta,g)&\coloneqq
 -\lim_{L\to\infty}\frac{\log Z_{\mathrm{joint},L}(\beta,g)}{\beta N}.
 \label{sm:anc:free-energy}
\end{align}
Here $\lambda_{\max}$ and $\lambda_{\min}$ denote the largest and smallest
eigenvalues, and $\Delta_A$ is a fixed nonnegative energy bound.

Each physical qubit and each ancillary factor counts as one site. Assigned
supports $X$ are nonempty, including for scalar terms, with $|X|\le b$,
$\|v_X\|\le1$, $\diam X\le R$, and at most $n_{\mathrm{inc}}$ perturbing supports contain
any site of either kind. Thus $V$ can include system--ancilla, ancilla--ancilla,
and ancilla-only terms. Assign ancillary sites to nearby unit cells when
measuring distances. Their placement, dimensions, onsite Hamiltonians, and
the perturbation have a common fixed finite spatial period. All these data
are fixed before the thermodynamic limit along compatible sizes.

Call the joint perturbation classical when it is diagonal in the system
$Z$ basis and a fixed eigenbasis of every $h_j$. In this case replace each
$v_X$ by its diagonal part in that basis; their sum remains $V$ and the
norm and support bounds do not increase. Quantum perturbations are
arbitrary local Hermitian operators on the joint system.

\begin{proposition}[Analyticity with physical ancillas]
\label{sm:anc:proposition}
For $1\le m\le d-2$, there are constants $B_A,a_A,C_A>0$ such that
$f_{\mathrm{joint}}$ exists and is jointly real analytic in $(\beta,g)$ in
the respective open domains
\begin{align}
 \beta J>B_A,\qquad\beta|g|<a_A
 &\quad\text{(classical)},\label{sm:anc:classical-domain}\\
 \beta J>B_A,\qquad\beta|g|e^{C_A\beta(J+\Delta_A)}<a_A
 &\quad\text{(quantum)}.\label{sm:anc:quantum-domain}
\end{align}
The constants depend only on the fixed joint locality and geometric data,
including the bounds on ancillary number and dimension, not on onsite Gibbs
probabilities or scalar energy offsets. For $0\le m\le d-1$, the same joint
free energy is analytic in the high-temperature domain
\begin{equation}
 \beta[2(d-m)J+\Delta_A/2+n_{\mathrm{inc}}|g|]<a'_A,
 \qquad a'_A\coloneqq\frac1{2e^4\max\{2(m+1),b\}}.
 \label{sm:anc:high-domain}
\end{equation}
\end{proposition}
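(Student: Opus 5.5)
The plan is to reduce the joint model to the system-only expansions by integrating out the ancillas site by site, keeping their free-energy contribution as an explicit scalar. Choose for each ancillary site $j$ a reference shift so that $\tilde h_j\coloneqq h_j-\lambda_{\min}(h_j)$ has spectrum in $[0,\Delta_A]$. Replacing $h_j$ by $\tilde h_j$ only shifts $\log Z_{\mathrm{joint},L}$ by the analytic term $-\beta\sum_j\lambda_{\min}(h_j)$, whose density converges by periodicity. This is why the constants cannot depend on scalar offsets. The ancillary normalization is then $\prod_j\Tr e^{-\beta\tilde h_j}$; each factor is an entire function of $\beta$, and it is nonzero near real $\beta>0$ because it is at least $1$ there. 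Its logarithm per cell therefore also has a holomorphic limit.

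For the low-temperature domains, I would treat $H_0+H_A$ as the new commuting reference Hamiltonian, which is diagonal in the product basis of system $Z$ eigenvectors and $h_j$ eigenvectors. In the classical case I would sum over ancilla eigenstates directly. Write $e^{-\beta\tilde h_j(a_j)}=\pi_j(a_j)\Tr e^{-\beta\tilde h_j}$. The ancilla sum of $e^{-\beta gV(\sigma,a)}$ against the product measure $\pi$ is a local average that factorizes over disjoint supports. I would expand it exactly as the hard-core polymer gas of Step~1 of Sec.~\ref{sm:qu:shadow}, with labels $X$ and activities bounded by $(e^{|\beta g|}-1)^{|\mathcal A|}$. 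Because the $\pi_j$ become complex for complex $\beta$, I would bound them by $|\pi_j|\le \dim\cdot e^{|\operatorname{Im}\beta|\Delta_A}/|\text{normalizer}|$ on a small complex neighborhood of each real point. This step is only a constant-factor loss, and it is where the complex-reference estimate enters. The log of this local average is then an effective diagonal system interaction satisfying the hypotheses of Lemma~\ref{sm:lem:weighted-gas} with $Q_S\le C_S|\beta g|$. Gauge averaging, local fillings, and flux counting then apply verbatim, since the ancillas carry no flux. This gives the classical domain.

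In the quantum case I would repeat Sec.~\ref{sm:qu:shadow} with the interaction picture generated by $H_0+\tilde H_A$. Conjugation of $v_X$ now involves the touching plaquettes and the onsite $\tilde h_j$ on at most $b$ ancillary sites, so $\|v_X(u)\|\le e^{C(J+\Delta_A)|\operatorname{Re}\beta|}$. This gives $\lambda_0=|\beta g|e^{C_A\beta(J+\Delta_A)}$. The diagonal ratio is taken between product states $\ket{\sigma,a}$, and the ancilla labels $a$ are then summed with the weights $\pi$ as above. The bulk limit follows from Sec.~\ref{sm:bulk-limit}, and the return to periodic boundaries uses Eq.~\eqref{sm:cl:trace-comparison} together with the ancilla normalization, which the boundary does not change.

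For the high-temperature domain, I would write $h_j=c_j+\tilde h_j'$ with $c_j$ the midpoint of the spectrum, so that $\|\tilde h_j'\|\le\Delta_A/2$. The scalars $c_j$ factor out analytically, and each $\tilde h_j'$ is included as a one-site label in Sec.~\ref{sm:ht:section}. The local energy bound then becomes $2(d-m)J+\Delta_A/2+n_{\mathrm{inc}}|g|$, the normalized trace includes the ancillary dimensions, and the support-size bound $b'$ is unchanged since onsite terms have size one. I expect the main obstacle to be the uniform control of the complex ancillary reference weights $\pi_j$ and their normalizers near real $\beta$. These bounds must be uniform in $L$ and independent of offsets, and I would obtain them by working in a neighborhood with $|\operatorname{Im}\beta|\Delta_A$ small.
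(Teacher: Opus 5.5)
Your proposal is correct and follows essentially the paper's route. The shared steps are: exact elimination of the ancillas against the product onsite reference with $Z_{A,L}$ retained, the classical expectation or the Duhamel expansion with reference $H_0+H_A$, a per-site $\ell^1$ bound on the complex reference absorbed as a constant factor per label, Lemma~\ref{sm:lem:weighted-gas} with the same bulk limit and boundary comparison, and midpoint centering at high temperature. The only variation is that per-site bound: your garbled displayed estimate should read $\sum_a|\pi_j(a)|=\Tr e^{-\operatorname{Re}\beta\,\tilde h_j}/|\Tr e^{-\beta\tilde h_j}|\le1/\cos(|\operatorname{Im}\beta|\Delta_A)$ on a strip, whereas the paper reweights from a real $\beta_0$ to get $\|\tau_j(\beta)\|_1<3$ for $|\beta-\beta_0|\Delta_A/2<\log(3/2)$, and either gives the same $3^b$-type loss.
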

No onsite spectral gap or nondegenerate ground state is required. Independence
of the reference onsite systems is essential to the proof; an arbitrary
interacting ancillary reference is not covered.

\subsection{Exact elimination of the ancillary degrees of freedom}

For low temperature use the free box of Sec.~\ref{sm:cl:open-box}. Keep the
onsite ancillary terms, and delete perturbing terms crossing the boundary;
the superscript $\mathrm o$ denotes these restrictions. Put
\begin{align}
 Z_j^A(\beta)&\coloneqq\Tr_j e^{-\beta h_j},\qquad
 \tau_j(\beta)\coloneqq Z_j^A(\beta)^{-1}e^{-\beta h_j},
 \label{sm:anc:local-reference}\\
 Z_{A,L}(\beta)&\coloneqq\prod_jZ_j^A(\beta),\qquad
 \tau_A(\beta)\coloneqq\bigotimes_j\tau_j(\beta).
 \label{sm:anc:product-reference}
\end{align}
$\Tr_j$ is the trace on ancillary site $j$ and $\Tr_A$ the trace on all
ancillary sites.

For a system spin configuration $\sigma$, define
\begin{equation}
 \mathcal W^A_{\beta,g}(\sigma)\coloneqq
 \frac{e^{\beta H_0^{\mathrm o}(\sigma)}}{Z_{A,L}(\beta)}
 \Tr_A\bra\sigma e^{-\beta H_{g,A}^{\mathrm o}}\ket\sigma.
 \label{sm:anc:ratio}
\end{equation}
Taking the remaining system trace gives the exact identity
\begin{equation}
 \boxed{Z_{\mathrm{joint},L}^{\mathrm o}
 =Z_{A,L}\sum_\sigma e^{-\beta H_0^{\mathrm o}(\sigma)}
                          \mathcal W^A_{\beta,g}(\sigma).}
 \label{sm:anc:exact-reduction}
\end{equation}
Thus only the local logarithm of $\mathcal W^A$ must be controlled to reuse
the gauge expansion. The factor $Z_{A,L}$ must not be discarded.
At real $\beta>0,g$, positivity of the Boltzmann operator gives
$\mathcal W^A_{\beta,g}(\sigma)>0$, and at $g=0$ the ratio equals one.
Consequently the convergent logarithm continued from $g=0$ agrees with
the physical real logarithm.

For classical couplings, let $z$ denote the ancillary eigenbasis labels and
write $V^{\mathrm o}(\sigma,z)$ for the diagonal value. Then
\begin{equation}
 \mathcal W^A_{\beta,g}(\sigma)
 =\mathbb E_{z\sim\tau_A(\beta)}e^{-\beta gV^{\mathrm o}(\sigma,z)}.
 \label{sm:anc:classical-weight}
\end{equation}
The effective system interaction is $-\beta^{-1}\log\mathcal W^A$, not the
average of $gV$.

Expand the product of factors
$1+[e^{-\beta gv_X(\sigma,z)}-1]$ and group selected supports by connected
overlap. For a connected set $\mathcal A$ of distinct labels, its activity
is the expectation of their product under the product state $\tau_A(\beta)$. At real $\beta$, product
expectation contracts their supremum norms, giving
\begin{equation}
 \|w_{\mathcal A}\|_\infty
 \le(e^{|\beta g|}-1)^{|\mathcal A|}.
 \label{sm:anc:classical-activity}
\end{equation}
Disjoint components factor because the reference ancillary state is a
product. The bounded overlap degree therefore gives the connected logarithm
for small $|\beta g|$, regardless of the onsite bias.

For quantum couplings, the Duhamel derivation in
Sec.~\ref{sm:qu:shadow} applies with reference $H_0^{\mathrm o}+H_A$:
\begin{align}
 v_X^A(u)&\coloneqq
 e^{u\beta(H_0^{\mathrm o}+H_A)}v_Xe^{-u\beta(H_0^{\mathrm o}+H_A)},
 \qquad0\le u\le1,\label{sm:anc:imaginary-time}\\
 \mathcal W^A_{\beta,g}(\sigma)
 &=\Tr_A\!\left[\tau_A(\beta)\bra\sigma
 \mathcal T\exp\!\left(-\beta g\int_0^1du\sum_Xv_X^A(u)\right)
 \ket\sigma\right].\label{sm:anc:duhamel}
\end{align}

Only gauge interactions touching $X$ and onsite terms on ancillary sites in
$X$ enter the conjugation. The support lies in the same bounded enlarged support on the gauge
spins, together with those ancillary sites. Spectral widths give
\begin{align}
 \|v_X^A(u)\|&\le
 e^{[4b(d-m)J+b\Delta_A]|\operatorname{Re}\beta|},
 \label{sm:anc:conjugation-bound}\\
 \lambda_A(\beta,g)&\coloneqq|\beta g|
 e^{[4b(d-m)J+b\Delta_A]|\operatorname{Re}\beta|}.
 \label{sm:anc:quantum-strength}
\end{align}
Each ancillary conjugation costs at most $e^{\Delta_A|\operatorname{Re}\beta|}$;
scalar energy shifts cancel from it. Grouping histories by their distinct
perturbation labels and summing all positive insertion multiplicities, as in
Eq.~\eqref{sm:qu:activity}, bounds their activities by
$(e^{\lambda_A}-1)^{|\mathcal A|}$ at real $\beta$.

\subsection{Complex convergence, normalization, and the bulk limit}

Real bounds alone would not prove analyticity in temperature, because the
reference $\tau_A(\beta)$ also varies with $\beta$. Fix a real $\beta_0>0$.
For the following estimate only, subtract the spectral midpoint from each
$h_j$, obtaining $\bar h_j$ with $\|\bar h_j\|\le\Delta_A/2$. Normalized
onsite states are unchanged, and
\begin{equation}
 \tau_j(\beta)=
 \frac{\tau_j(\beta_0)e^{-(\beta-\beta_0)\bar h_j}}
 {\Tr_j[\tau_j(\beta_0)e^{-(\beta-\beta_0)\bar h_j}]}.
 \label{sm:anc:complex-reference}
\end{equation}
If $|\beta-\beta_0|\Delta_A/2<\log(3/2)$, the denominator differs from one
by less than $1/2$, while the numerator has trace norm less than $3/2$.
Consequently $Z_j^A(\beta)\ne0$ and $\|\tau_j(\beta)\|_1<3$ uniformly in $j$.
For $\Delta_A=0$ the normalized states are constant.

A component with
$|\mathcal A|$ labels touches at most $b|\mathcal A|$ ancillary sites,
so the complex versions of the activity bounds are
\begin{equation}
 \|w_{\mathcal A}\|_\infty\le
 \begin{cases}
 [3^b(e^{|\beta g|}-1)]^{|\mathcal A|},&\text{classical},\\
 [3^b(e^{\lambda_A}-1)]^{|\mathcal A|},&\text{quantum}.
 \end{cases}
 \label{sm:anc:complex-activities}
\end{equation}
The fixed factor $3^b$ is absorbed in the smallness threshold. No bound on
the trace norm of the full-volume complex $\tau_A$ is used. Thus all local
activities and their connected logarithm are holomorphic on a neighborhood
of each allowed real point, uniformly in volume.

Expand each local logarithmic contribution in system sign products, exactly
as in Eqs.~\eqref{sm:qu:walsh}--\eqref{sm:qu:carrier-bounds}. Retain its
original generating cost $\nu_\alpha$ and carrier, including the integrated
ancillary sites and fixed short paths to nearby physical cells. This keeps
the diameter proportional to cost even if the remaining sign support
$Y_\alpha$ is small. The bounded number of ancillary sites per cell permits
the same root counting. With the coefficient labels of that construction,
\begin{align}
 \log\mathcal W^A_{\beta,g}(\sigma)
 &=a_{0,L}+\sum_\alpha k_{\alpha,L}\sigma_{Y_\alpha},
 \label{sm:anc:effective-log}\\
 \sup_L\sup_\ell\sum_{\alpha:\ell\in K_\alpha}
       |k_{\alpha,L}|e^{S\nu_\alpha}
 &\le\begin{cases}C_S|\beta g|,&\text{classical},\\
                   C_S\lambda_A,&\text{quantum}.
       \end{cases}\label{sm:anc:weighted-bound}
\end{align}
Here $\ell$ is a physical cell, $K_\alpha$ is the retained generating
carrier, and $S$ is any prescribed finite exponential reserve, with a
correspondingly small coupling threshold. Ancilla-only clusters and all
constant Fourier terms belong to $a_{0,L}$; its absolute sum per unit cell
has the same smallness and exponential cost-tail bounds.

Lemma~\ref{sm:lem:weighted-gas} now applies without changing the gauge or
defect geometry. In the notation of that lemma, the full normalization is
\begin{equation}
 \boxed{Z_{\mathrm{joint},L}^{\mathrm o}
 =Z_{A,L}e^{a_{0,L}}2^{z_m}e^{\beta JN_p}
       \left(\prod_\alpha\cosh k_{\alpha,L}\right)\Xi_L^{\mathrm o}.}
 \label{sm:anc:normalized-gas}
\end{equation}

The fixed spatial period makes each local coefficient stabilize in the
bulk. The reserved cost tails justify the locally uniform limit exactly
as in Sec.~\ref{sm:qu:pressure}. In addition,
$N^{-1}\log Z_{A,L}=N^{-1}\sum_j\log Z_j^A$ tends to its period average,
which is holomorphic near $\beta_0$. Thus onsite scalar energies and the
ancillary entropy remain in the full free energy.

For the periodic comparison, use the same ancillary sites and onsite
Hamiltonians in both boxes; only boundary-crossing couplings are removed.
Embed any missing gauge spins as uncoupled spectator qubits as in
Sec.~\ref{sm:cl:boundary-comparison}. The Hamiltonian difference has norm
$O((J+|g|)L^{d-1})$, and the spectator normalization is also of surface order.
The real trace comparison \eqref{sm:an:eq:real-comparison} identifies the
periodic free energy with the analytic open-box limit. Choosing $B_A$ large,
$a_A$ small, and $C_A\ge\max\{4b(d-m),b\}$ proves the two low-temperature
domains. These choices do not require a lower bound on an onsite Gibbs
probability.

\subsection{High-temperature extension}

Center each $h_j$ as above for the expansion, retaining its removed scalar
in the final partition function. For the centered joint Hamiltonian, the
local energy bound on any site $x$, physical or ancillary, is
\begin{equation}
 \sup_x\sum_{Y\ni x}\|\bar h_Y(g)\|
 \le2(d-m)J+\Delta_A/2+n_{\mathrm{inc}}|g|.
 \label{sm:anc:high-energy}
\end{equation}
Here $\bar h_Y(g)$ are its terms grouped by their joint supports; their
size is at most $\max\{2(m+1),b\}$. The normalized trace is the full trace
divided by the joint Hilbert-space dimension. It factors on disjoint
supports and has norm one, regardless of the onsite dimensions. All word,
tree, and logarithmic estimates in Sec.~\ref{sm:ht:section} therefore apply
with the right side of Eq.~\eqref{sm:anc:high-energy} as local energy scale,
giving precisely $a'_A$ in Eq.~\eqref{sm:anc:high-domain}.

To track normalization explicitly, let $d_{A,j}$ be the dimension of
ancillary site $j$ and $c_j\coloneqq[\lambda_{\max}(h_j)+
\lambda_{\min}(h_j)]/2$ its subtracted scalar. If $\widehat Z_{\mathrm{joint},L}$ is the
normalized partition function of the centered joint Hamiltonian, then
\begin{equation}
 Z_{\mathrm{joint},L}
 =e^{-\beta\sum_jc_j}\,2^{\binom dmN}
       \left(\prod_jd_{A,j}\right)\widehat Z_{\mathrm{joint},L}.
 \label{sm:anc:high-normalization}
\end{equation}

There are still only $O(N)$ joint sites, so the extensive root sum and
exponential cost tail in Sec.~\ref{sm:ht:section} remain bounded per unit
cell. Finite periodicity gives the limit of every term in
\begin{equation}
 \frac{\log Z_{\mathrm{joint},L}}N
 =\binom dm\log2+\frac1N\sum_j\log d_{A,j}
       -\frac\beta N\sum_jc_j+\frac{\log\widehat Z_{\mathrm{joint},L}}N.
 \label{sm:anc:high-free-energy}
\end{equation}
The limit is holomorphic in the complex high-temperature domain and yields
joint real analyticity of $f_{\mathrm{joint}}$ at $\beta>0$. As before,
$-\beta f_{\mathrm{joint}}$ extends to $\beta=0$, while the free energy itself
has the entropy term fixed by the full joint Hilbert space. This completes
the proof of Proposition~\ref{sm:anc:proposition}.


\begingroup
\makeatletter
\let\@FMN@list\@empty
\let\combined@label\label
\def\label#1{\combined@label{SM-bib-#1}}
\makeatother

\endgroup
\end{document}